\newcommand{\reachcycle}[1]{reachable $#1$-cycle}
\newcommand{\reachcycleempty}{reachable cycle}
\newcommand{\reachsimplex}[1]{reachable $#1$-simplex}
\newcommand{\ord}{\pi}
\newcommand{\licl}{\text{LICL}}
\newcommand{\connector}[1]{$#1$-connector gadget}
\newcommand{\setH}[1]{\mathcal{H}_{#1}}
\newcommand{\setHk}[2]{\mathcal{H}_{#1}^{#2}}
\newcommand{\Hom}[2]{\mathrm{Hom}_{#2}(#1)}
\newcommand{\Homr}[2]{\mathrm{rHom}_{#2}(#1)}
\newcommand{\Sub}[2]{\mathrm{Sub}_{#2}(#1)}
\newcommand{\ColHom}[2]{\mathrm{\text{Col-}Hom}_{#2}(#1)}
\newcommand{\img}{Img}
\newcommand{\degen}{\kappa}
\newcommand{\ldegen}[1]{\kappa_{#1}}
\newcommand{\dagtw}{DAG-treewidth}
\newcommand{\dagtree}{DAG-tree decomposition}
\newcommand{\Reach}{Reach}
\newcommand{\dtw}{\tau}
\newcommand{\ldtw}[1]{\tau_{#1}}
\newcommand{\contract}{\cQ^c}
\newcommand{\Skeleton}[2]{Sk_{#1}(#2)}
\newcommand{\ext}[3]{ext(#2,#3;#1)}
\newcommand{\down}{\Gamma}
\newcommand{\ignore}[1]{}
\newcommand{\cB}{\mathcal{B}}
\newcommand{\cE}{{\cal E}}
\newcommand{\cH}{{\cal H}}
\newcommand{\cQ}{\mathcal{Q}}
\newcommand{\cT}{{\cal T}}
\newcommand{\Z}{{\mathbb Z}}
\newcommand{\poly}{\mathrm{poly}}
\newcommand{\NN}{\mathbb{N}}
\declaretheorem[name=Theorem, numberwithin=section]{theorem}
\declaretheorem[name=Lemma, numberlike=theorem]{lemma}
\declaretheorem[name=Claim, numberlike=theorem]{claim}
\declaretheorem[name=Conjecture, numberlike=theorem]{conjecture}
\declaretheorem[name=Definition, numberlike=theorem]{definition}
\declaretheorem[name=Remark, numberlike=theorem]{remark}
\newcommand{\Sec}[1]{\S \ref{sec:#1}} 
\newcommand{\Eqn}[1]{\hyperref[eq:#1]{(\ref*{eq:#1})}} 
\newcommand{\Fig}[1]{{Fig.\,\ref{fig:#1}}} 
\newcommand{\Tab}[1]{\hyperref[tab:#1]{Tab.\,\ref*{tab:#1}}} 
\newcommand{\Table}[1]{\hyperref[tab:#1]{Table\,\ref*{tab:#1}}} 
\newcommand{\Thm}[1]{\hyperref[thm:#1]{Theorem\,\ref*{thm:#1}}} 
\newcommand{\Fact}[1]{\hyperref[fact:#1]{Fact\,\ref*{fact:#1}}} 
\newcommand{\Lem}[1]{\hyperref[lem:#1]{Lemma\,\ref*{lem:#1}}} 
\newcommand{\Prop}[1]{\hyperref[prop:#1]{Prop.~\ref*{prop:#1}}} 
\newcommand{\Cor}[1]{\hyperref[cor:#1]{Corollary~\ref*{cor:#1}}} 
\newcommand{\Conj}[1]{\hyperref[conj:#1]{Conjecture~\ref*{conj:#1}}} 
\newcommand{\Def}[1]{\hyperref[def:#1]{Definition~\ref*{def:#1}}} 
\newcommand{\Alg}[1]{\hyperref[alg:#1]{Alg.~\ref*{alg:#1}}} 
\newcommand{\Clm}[1]{\hyperref[clm:#1]{Claim~\ref*{clm:#1}}} 
\newcommand{\Obs}[1]{\hyperref[obs:#1]{Observation~\ref*{obs:#1}}} 
\newcommand{\Rem}[1]{\hyperref[rem:#1]{Remark~\ref*{rem:#1}}} 
\newcommand{\Con}[1]{\hyperref[con:#1]{Construction~\ref*{con:#1}}} 
\newcommand{\Step}[1]{\hyperref[step:#1]{Step~\ref*{step:#1}}} 
\newcommand{\Assumption}[1]{\hyperref[assm:#1]{Assumption\,\ref*{assm:#1}}} 
\newcommand{\lparam}{l}
\title{Near-linear time subhypergraph counting in bounded degeneracy hypergraphs}
\date{}
\author{Daniel Paul-Pena\thanks{Authors supported by NSF CCF-1740850, DMS-2023495, and CCF-1839317.}\\
	University of California, Santa Cruz\\
	{\tt dpaulpen@ucsc.edu}
	\and C. Seshadhri\footnotemark[1]\\
University of California, Santa Cruz\\
{\tt sesh@ucsc.edu}}
\begin{document}

\maketitle

\begin{abstract}
Counting small patterns in a large dataset is a fundamental algorithmic task. The most common version
of this task is subgraph/homomorphism counting, wherein we count the number of occurrences of a small
pattern graph $H$ in an input graph $G$. The study of this problem is a field in and of itself. 
Recently, both in theory and practice, there has been an interest in \emph{hypergraph} algorithms,
where $G = (V,E)$ is a hypergraph. One can view $G$ as a set system where hyperedges
are subsets of the universe $V$.

Counting patterns $H$ in hypergraphs is less studied, although there
are many applications in network science and database algorithms. Inspired
by advances in the graph literature, we study when linear time algorithms are possible.

We focus on input hypergraphs $G$ that have bounded \emph{degeneracy}, a well-studied concept
for graph algorithms. We give a spectrum of definitions for hypergraph degeneracy that cover
all existing notions. For each such definition, we give a precise characterization
of the patterns $H$ that can be counted in (near) linear time. Specifically, we discover
a set of ``obstruction patterns". If $H$ does not contain an obstruction, then the number
of $H$-subhypergraphs can be counted exactly in $O(n\log n)$ time (where $n$ is the number
of vertices in $G$). If $H$ contains an obstruction, then (assuming hypergraph
variants of fine-grained complexity conjectures), 
there is a constant $\gamma > 0$, such that there is no $o(n^{1+\gamma})$ time algorithm for counting $H$-subhypergraphs.
These sets of obstructions can be defined for all notions of hypergraph degeneracy.
\end{abstract}

\newpage

\section{Introduction}

Subgraph/homomorphism counting in graphs is a widely studied problem in theory and practice~\cite{Lo67, DiSeTh02, FlGr04, DaJo04, Lo12, AhNeRo+15, CuDeMa17, PiSeVi17, SeTi19, CuNe25, DoMaWe25}. The design of algorithms for
homomorphism counting in graphs has gone on for decades, and remains an active area of 
research~\cite{ChMe77,BrWi99,DrRi10,BoChLo+06,PiSeVi17,DeRoWe19,PaSe20, Br19,Br21, BrRo22,PaSe24, PaSe25, PaSe25b}. 
Let $n$ denote the number of vertices in $G$ and $k$ be the vertices in $H$. There is
a trivial brute force $O(n^k)$ algorithm, and so the primary focus of theoretical research
is to beat this bound. Given the practical
importance of this problem, there have been advances in designing (near) linear
time algorithms \cite{BePaSe20,BePaSe21,BeGiLe+22,PaSe25}.

Recent work in network science and the theory of algorithms has focused on \emph{hypergraphs} \cite{VeBeKl20,VeBeKl22,AnCoPo+23,CaDeFa+23,LeBuEl+25}.
A hypergraph $G = (V(G), E(G))$ is a set system, where each $e \in E(G)$ is a subset of $V(G)$
of size at least two. Many natural real-world graphs are
derived from hypergraphs. For example, coauthor/coactor networks are obtained from the hypergraph where authors are vertices, and papers are hyperedges. Hyperedges are replaced with cliques to get a coauthor network. Hence, hypergraphs are considered to be ``original" data source,
and practitioners suggest that they should be analyzed directly.

The problem of pattern counting in hypergraphs is less understood theoretically, even though
it captures many important algorithmic tasks~\cite{HwTaTi+08,YuTaWa12,BeAbSc+18,HuQiSh+10,AmVeBe20}. For example, the entire field of Conjunctive Query
evaluation in database theory can be viewed as pattern counting in hypergraphs \cite{DeKo21,AhMiOl+23}. Any database
can be represented as a (colored) hypergraph, and the query is a pattern hypergraph $H$.
We note that recent work by Bressan-Brinkmann-Dell-Roth-Wellnitz has studied FPT algorithms
for subhypergraph counting~\cite{BrBrDe+25}.

Formally, given a pattern hypergraph $H = (V(H), E(H))$, an 
$H$-homomorphism is a map $f:V(H) \to V(G)$ that preserves hyperedges. Abusing notation, given
$e = \{v_1, v_2, \ldots\}$, let $f(e) = \{f(v_1), f(v_2), \ldots\}$.
So, $\forall e \in E(H)$, $f(e) \in E(G)$. If $f$ is an injection (so distinct vertices of $H$ are mapped
to distinct vertices of $G$), this map corresponds to some subhypergraph \footnote{Such a map is an embedding. Each subhypergraph corresponds to $|Aut(H)|$ different embeddings, where $Aut(H)$ is the number of automorphisms of $H$ (see \cite{BrBrDe+25}).}.
We use $\Hom{G}{H}$ (resp. $\Sub{G}{H}$) to denote the count of the distinct $H$-homomorphisms (resp. $H$-subhypergraphs). 
The focus of this work is:

\begin{center} {\em Are there characterizations of $H$ and classes of $G$ where $\Hom{G}{H}$ can 
    be computed in near-linear time?}
\end{center}

\subsection{Main result} \label{sec:result}

A natural starting point for near-linear time algorithms for hypergraph homomorphism counting
is to focus on \emph{bounded degeneracy hypergraphs}\footnote{Technically, we mean hypergraphs belonging to classes with bounded degeneracy.}. 
There is a recent theory of linear time algorithms for bounded degeneracy \emph{graphs}~\cite{Br19,BePaSe20,BePaSe21,BrRo22,BeGiLe+22, PaSe24, PaSe25}.
This is a rich class containing all proper minor-closed families, bounded expansion families, and preferential attachment graphs~\cite{Se23}.
Practical subgraph counting algorithms typically use algorithmic techniques
for bounded degeneracy graphs~\cite{AhNeRo+15,JhSePi15,PiSeVi17,OrBr17,JaSe17,PaSe20}. 

Recent works in database theory have focused in studying databases with "degree constraints" \cite{KhNgHu+16}, and this theory has lead to cutting edge algorithms for query evaluation \cite{panda}. A database with degree constraints can be represented as a bounded degeneracy hypergraph. A recent result by Deeds and Merkl \cite{DeMe25} introduced partition constraints for databases, which are essentially equivalent to breaking databases/tables into hypergraphs with bounded degeneracy.

There are numerous
possible definitions for the degeneracy of hypergraphs. Our results provide dichotomy theorems for
near-linear homomorphism and subhypergraph counting algorithms for all these definitions. 

As usual, the input hypergraph is $G = (V(G), E(G))$ with 
$n$ vertices and $m$ hyperedges. We consider $H$ to be constant sized and independent of the input, and fold dependencies on $H$ into $O(\cdot)$ or $poly(\cdot)$. The \emph{arity} of a hyperedge is its size. We allow $G$
to have hyperedges of different arity. The \emph{rank} of a hypergraph is the maximum arity among all its hyperedges. We assume that the rank of $G$ is bounded, and again suppress
dependencies on the rank in $O(\cdot)$/$\poly(\cdot)$ notation.\footnote{Note that hyperedges of unbounded arity can not be mapped by any hyperedge of $H$, and therefore do not affect the homomorphism/subhypergraph counts. If given a hypergraph of unbounded rank, one can filter out the high arity hyperedges in $O(m)$ time to get a bounded rank hypergraph.}

Hypergraphs introduce complications when defining \emph{induced} subhypergraphs,
which this definition lays out. 

\begin{definition} \label{def:ind} Let $\lparam \in \Z^+ \cup \{\infty\}$. For any subset $S \subseteq V(G)$,
the induced $\lparam$-trimmed subhypergraph on $S$ is the set of hyperedges
$\{e \cap S \ | \ e \in E(G), |e \setminus S| \leq \lparam\} \setminus \{\emptyset\}$.

    Conventionally, if $\lparam$ is not specified, then it is set to $\infty$. (Depending on context,
    arity 1 hyperedges might be discarded.)
\end{definition}

\begin{figure}
	\centering
	\includegraphics[width=0.9\linewidth]{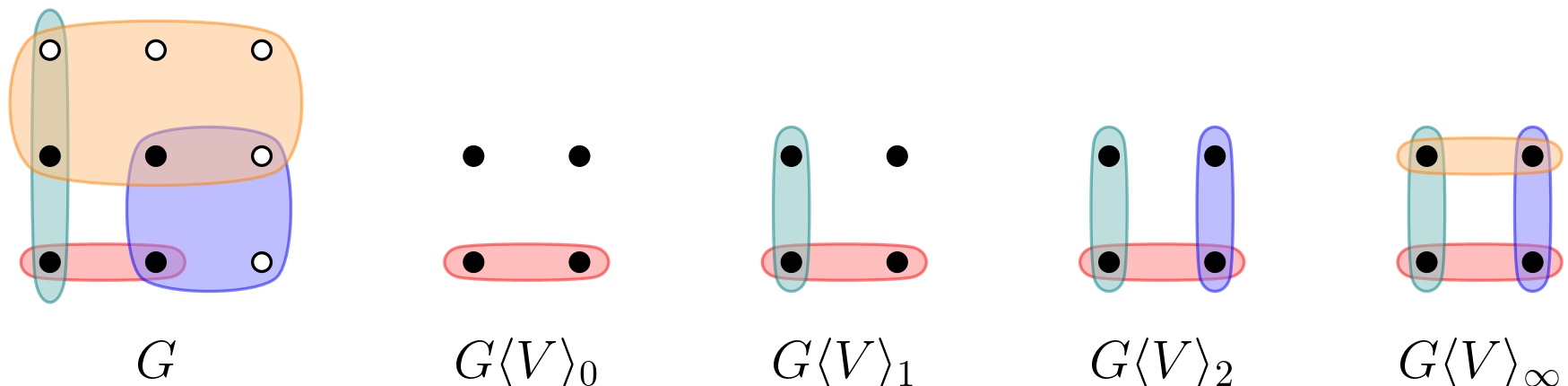}
	\caption{An example of a graph $G$. Let $V$ be the set of black vertices. $G \langle V \rangle_{\lparam}$ represents the $l$-trimmed subhypergraph of $G$ induced by the vertices in $V$. A hyperedge is only included in the subhypergraph when it has at most $l$ vertices outside $V$.}
	\label{fig:trimmed}
\end{figure}

We note that the term ``trimming'' comes from~\cite{BrBrDe+25}, though they only define the $\lparam = \infty$ version. See \Fig{trimmed} for an example of different induced $\lparam$-trimmed subhypergraphs.
When $\lparam = 0$, this is the simplest (standard) notion of induced subhypergraph. Only hyperedges contained
in $S$ are considered. In some settings, hypergraphs are thought of as simplicial
complexes wherein all subsets of a hyperedge are also part of $G$. This corresponds to $\lparam = \infty$, where
an induced subhypergraph is formed by taking the intersection of every hyperedge with $S$
(technically, we only need to set $\lparam$ to be the rank of $G$).
For intermediate values of $\lparam$, we only include the intersection $e \cap S$ if at most $\lparam$
vertices of $e$ are outside $S$. This concept is related to nuanced notions of hypergraph cuts \cite{VeBeKl20,VeBeKl22}. 

We now define the hypergraph degeneracy. The degree of a vertex is the number of hyperedges in which it participates.

\begin{definition} \label{def:degen} The $\lparam$-degeneracy of $G$, denoted $\ldegen{\lparam}(G)$, is the
minimum value $\degen$ such that every induced $\lparam$-trimmed subhypergraph of $G$ has minimum degree of at most $\degen$.
\end{definition}

Note that for $\lparam < \lparam'$, $\ldegen{\lparam}(G) \leq \ldegen{\lparam'}(G)$. So, having bounded $0$-degeneracy
is a weaker condition than having bounded $\lparam$-degeneracy (for $\lparam > 0$). The $0$-degeneracy corresponds to the standard definition of degeneracy in hypergraphs \cite{KoRo06}, while the $\infty$-degeneracy is polynomially equivalent to the degeneracy of the clique completion of $G$ (assuming bounded rank), also called skeletal degeneracy \cite{FoSaSi+23}.

Our aim is to characterize $H$ where $\Hom{G}{H}$ and $\Sub{G}{H}$ can be counted in near-linear time for bounded $\lparam$-degeneracy inputs $G$.
Towards this characterization, we define a series of obstruction hypergraphs.

\begin{definition} \label{def:obstruct} A hypergraph $H$ is an \emph{$\lparam$-obstruction} if the following hold. For some $k \geq 3$ there is a \emph{core} $C \subseteq V(H)$ of size $|C| = k$.
Let the connected components of the induced $\infty$-trimmed subhypergraph on $V(H) \setminus C$
be (the vertex sets) $D_1, D_2, \ldots$. For each $D_i$, let $E_i$ be the set of hyperedges
of $E(H)$ that are completely contained in $C \cup D_i$, that is $E_i = \{e \in E(H) : e \subseteq C \cup D_i, |e|>1 \}$. 

The core $C$ and $E_1, E_2, \ldots$ satisfy the following conditions:
\begin{asparaenum}
	\item The $\lparam$-trimmed subhypergraph induced by the core $C$ is empty\footnote{It is allowed to include hyperedges of arity $1$, as they do not affect the connectivity of the core.}.
	\item There is no $E_j$ that contains all of $C$. $\forall E_j,  C \not\subseteq \bigcup_{e \in E_j} e$.
    \item For each $c \in C$, $E_i$ contains\footnote{Technically, we number the vertex sets $D_1, D_2, \ldots$ so that this condition holds.} the vertices $C \setminus \{c_i\}$ and \emph{does not} contain $c_i$.
\end{asparaenum}
\smallskip
The set of $\lparam$-obstructions is denoted by $\setH{\lparam}$. 
A hypergraph $H'$ is called $\setH{\lparam}$ ITS (induced trimmed subhypergraph) free if it does not contain
any member of $\setH{\lparam}$ as an induced $\infty$-trimmed subhypergraph.
\end{definition}
\begin{figure}
	\centering
	\includegraphics[width=0.75\linewidth]{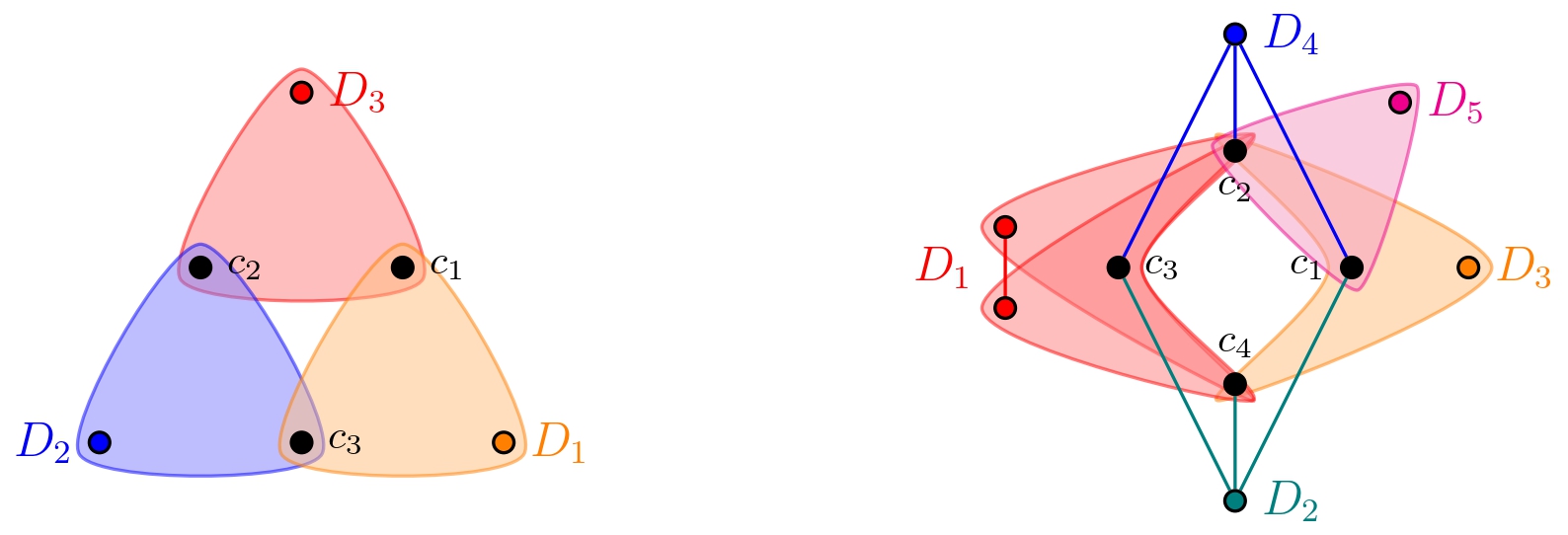}
	\caption{Two examples of obstructions in $\setH{0}$. 
		The example on the left has a core with $k=3$. Removing the vertices from the core disconnects the hypergraph into $3$ connected components (single vertices $D_1-D_3$). The sets $E_1-E_3$ correspond to the hyperedges containing each $D_i$.
		The example on	the right has a core with $k=4$. Removing the core we end up with $5$ connected components. Note that we have an extra component $D_5$.
	}
	\label{fig:obstruction}
\end{figure}
\begin{remark}
	As mentioned, the sets $D_1, D_2,...$ are the connected components of the induced $\infty$-trimmed subhypergraph on $V(H) \setminus C$. From condition $1$ we have that the core is empty, that is, every hyperedge of arity more than $1$ must include some vertex outside the core. Note that from the definition, no edge can contain vertices in two different sets $D_i,D_j$. Therefore every hyperedge will belong to an unique set $E_i$. 
	
	Condition $2$ specifies that no set $E_i$ contains all the vertices in the core, while condition $3$ says that for each vertex $c_i$ in the core, there is at least a set that contains $C\setminus \{c_i\}$. We reorder the sets $D_i$ and $E_i$ such that each $E_i$ corresponds to $c_i$. Note that there might be more sets $E_i$ than elements in the core.
\end{remark}

We give more intuition on these conditions in \Sec{perspective}.
This is a technical definition, but we can prove that \Def{obstruct} are the only obstructions for near-linear time algorithms.

For the hardness, we state a generalization of the Triangle Detection Conjecture \cite{AbWi14}, which is the fundamental assumption used for linear time hardness for subgraph counting \cite{BePaSe21,BeGiLe+22, PaSe24, PaSe25}.
This conjecture states that there is no near-linear time algorithm for finding triangles in an arbitrary graph.
The best known algorithm takes $O(m^{1.41\ldots})$ time \cite{AlYuZw97}.
A $k$-simplex is a hypergraph with $k+1$ vertices and $k+1$ hyperedges of arity $k$, where each hyperedge includes a different subset of $k$ vertices. The triangle is the $2$-simplex, while the tetrahedron is the $3$-simplex. Detecting simplices is the hypergraph generalization of triangle detection. 
No algorithm is known for detecting simplices of any arity in linear time. 

\begin{conjecture}[Simplex detection conjecture] \label{conj:simplex}
	For every $k\geq 2$, there exists a $\gamma>0$ such that any (even randomized) algorithm to decide whether a hypergraph with $n$ vertices and $m$ hyperedges contains a $k$-simplex requires $\Omega(m^{1+\gamma})$ time in expectation. \footnote{The $(k+1,k)$-Hyperclique Hypothesis presented in \cite{LiVaWi18} conjectures that any algorithm for finding a $k$-simplex in a $k$-regular hypergraph requires $n^{k+1-o(1)}$ time. In the dense case, it is a stronger assumption than our conjecture.}
\end{conjecture}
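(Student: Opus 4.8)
\Conj{simplex} is a hardness \emph{assumption} rather than something to be proved unconditionally; the sensible goal is to pin down how it sits relative to assumptions already in the literature, and to justify it in the two natural regimes $k=2$ and $k\geq 3$. For the base case $k=2$ there is nothing new: a $2$-simplex is a triangle, so \Conj{simplex} is exactly the Triangle Detection Conjecture of \cite{AbWi14}, and the $O(m^{1.41\ldots})$ algorithm of \cite{AlYuZw97} shows the assumed $m^{1+\gamma}$ barrier is at least consistent with the state of the art. So the only content beyond prior work is $k\geq 3$, and the plan there is to derive \Conj{simplex} from the $(k+1,k)$-Hyperclique Hypothesis of \cite{LiVaWi18}.

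The derivation is essentially definitional, and I would carry it out in two steps. First, observe that in a $k$-uniform hypergraph a set of $k+1$ vertices all of whose $\binom{k+1}{k}=k+1$ subsets of size $k$ are hyperedges is precisely a $k$-simplex, so ``$(k+1)$-hyperclique detection on $k$-uniform hypergraphs'' and ``$k$-simplex detection'' are the same problem (one can first discard hyperedges of arity other than $k$ in $O(m)$ time). Second, translate exponents: the instances witnessing the hyperclique hypothesis are dense, with $m=\Theta(n^k)$, so the hypothesized lower bound of $n^{k+1-o(1)}$ becomes $m^{1+1/k-o(1)}$, which is $\Omega(m^{1+\gamma})$ for any fixed $\gamma<1/k$. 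This is precisely the sense in which the footnote calls the hyperclique hypothesis, in the dense case, a (quantitatively) stronger assumption than ours. The one bookkeeping point is the randomized-in-expectation clause in \Conj{simplex}: if the cited hyperclique hypothesis is stated only for deterministic algorithms, I would either invoke its (standard) randomized strengthening or weaken \Conj{simplex} to match.

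The step I expect to resist this plan is the \emph{sparse} regime $m\ll n^k$: the reduction above only certifies hardness on the particular dense family produced by the hyperclique hypothesis, and there is no known way to turn a sparse triangle- or $(k-1)$-simplex-detection instance into a sparse $k$-simplex instance without the hyperedge count blowing past $m^{1+o(1)}$, nor does a dense lower bound transfer to sparse inputs. Consequently the honest route --- and the one I would take --- is to leave \Conj{simplex} as an assumption, supported by its $k=2$ specialization and its implication from the dense hyperclique hypothesis for $k\geq 3$, and to use it as a black box in the hardness direction of the dichotomy; removing the conditionality in the genuinely sparse regime would be the real obstacle and, as far as I know, is open.
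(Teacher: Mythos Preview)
Your reading is correct: this is a conjecture, not a theorem, and the paper offers no proof of it either---it simply states \Conj{simplex} as a hardness assumption and records in the footnote the same relationship you spell out (the $k=2$ case being the Triangle Detection Conjecture of \cite{AbWi14}, and the $(k+1,k)$-Hyperclique Hypothesis of \cite{LiVaWi18} implying the dense case for $k\geq 3$). Your write-up is, if anything, more careful than the paper's: you make explicit the $m=\Theta(n^k)$ translation yielding $m^{1+1/k-o(1)}$, and you flag that the sparse regime is genuinely open---both points the paper leaves implicit in its one-line footnote.
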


\begin{restatable}[name=Main Theorem]{theorem}{main}
\label{thm:main}
    Let $\lparam \in \Z^+ \cup \{\infty\}$ and $H$ be a hypergraph.
\begin{itemize}
	\item Suppose $H$ is $\setH{\lparam}$ ITS free. Then, there is an algorithm that computes $\Hom{G}{H}$ in time
$\poly(\ldegen{\lparam}) n\log n$. \footnote{Technically, the result of this theorem is fixed-parameter near-linear time. However, this implies a near-linear time algorithm for hypergraph classes with bounded degeneracy.}
	\item Otherwise, assuming the Simplex Detection Conjecture, there exists an absolute constant $\gamma > 0$,
such that: for all functions $f:\NN \to \NN$, any algorithm that computes $\Hom{G}{H}$ requires $f(\ldegen{\lparam}) n^{1+\gamma}$ time.
\end{itemize}
\end{restatable}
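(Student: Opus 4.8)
\medskip
\noindent\textbf{Proof plan.} We argue the two directions separately: the upper bound is a dynamic program over a degeneracy ordering of $G$, and the lower bound is a gadget reduction from simplex detection.

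\emph{Upper bound ($H$ is $\setH{\lparam}$ ITS free).} First compute an $\lparam$-degeneracy ordering $\pi$ of $V(G)$ by the greedy peeling procedure --- repeatedly delete a minimum-degree vertex of the current induced $\lparam$-trimmed subhypergraph --- which runs in $\poly(\ldegen{\lparam})\,n\log n$ time with bucketing. The primitive this buys is an \emph{anchoring} relation: a hyperedge $e$ of $G$ stays in the $\lparam$-trimmed subhypergraph only while at most $\lparam$ of its vertices are peeled and at least two remain, so each vertex of $G$ plays this role for at most $\ldegen{\lparam}$ hyperedges; consequently, for every hyperedge $e$ of $H$ with vertices ordered by $\sigma$ (see below), fixing the image of any single one of the $\min(\lparam+1,|e|-1)$ smallest vertices of $e$ leaves only $\poly(\ldegen{\lparam})$ choices for the image of all of $e$. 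Writing $\Hom{G}{H}$ as a sum over the constantly many weak orders of $V(H)$ of the homomorphisms whose composition with $\pi$ realizes that order, and contracting tied vertices, it suffices to count $\sigma$-monotone homomorphisms of a quotient pattern for each linear order $\sigma$. The crux is the claim that $\setH{\lparam}$ ITS freeness forces, for every such $\sigma$, the reachability sets of the sources of the oriented pattern (reachability taken through the anchoring relation above) to admit a tree layout --- a \dagtree{} of width $1$ in the sense of Bressan~\cite{Br19}. Given such a layout the count follows from a nested dynamic program: each of the constantly many single-source bags is processed by an $O(n)$ enumeration together with $\poly(\ldegen{\lparam})$-size local extensions, the partial counts being combined through hash tables keyed by bounded tuples of $G$-vertices, for a total of $\poly(\ldegen{\lparam})\,n\log n$.

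\emph{The structural lemma (the main obstacle).} I would prove the contrapositive: if some $\sigma$ admits no width-$1$ \dagtree{}, then $H$ contains a member of $\setH{\lparam}$ as an induced $\infty$-trimmed subhypergraph. The obstruction to width $1$ is a cyclic configuration of $k\ge 3$ sources whose reachability sets pairwise overlap but possess no common element (the smallest case being three pairwise-intersecting sets with empty triple intersection, which cannot be laid out in a tree of singleton bags; longer cycles give $k>3$). Taking $C$ to be these $k$ sources and discarding the remaining vertices of $H$, \Def{ind} recovers the connected components $D_i$ of $V(H)\setminus C$ and the hyperedge sets $E_i$: condition~1 of \Def{obstruct} holds because no hyperedge witnesses an anchoring relation internal to $C$ (otherwise one source would reach another), so the $\lparam$-trimmed subhypergraph on $C$ is empty; conditions~2 and~3 record exactly that each $D_i$-piece sees a \emph{proper} subset of $C$ and that every core vertex is the one omitted by some piece --- precisely the cyclic overlap pattern that blocks a tree layout. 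Carrying this extraction out cleanly and uniformly for every $\lparam\in\Z^+\cup\{\infty\}$, so that it lands on exactly these three conditions (the arity/trimming bookkeeping being the delicate part), is where I expect the real difficulty to lie.

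\emph{Lower bound ($H$ is not $\setH{\lparam}$ ITS free).} Fix an obstruction $H^\ast\in\setH{\lparam}$ with core $C$, $|C|=k\ge 3$, occurring as an induced $\infty$-trimmed subhypergraph of $H$. Using the hypergraph analogues of the standard linear relations among homomorphism, subhypergraph, and induced ($\infty$-trimmed) counts --- which preserve near-linear-time solvability and bounded $\lparam$-degeneracy of the host up to constants, and whose development with the trimming parameter is itself a nontrivial ingredient --- it suffices to show that counting induced $\infty$-trimmed copies of $H^\ast$ in a bounded-$\lparam$-degeneracy host is at least as hard as detecting a $(k-1)$-simplex, together with the observation that an $o(n^{1+\gamma})$ algorithm for $\Hom{G}{H}$ would yield one for counting $H^\ast$ via the usual pinning argument (passing to a colored host so that the core is forced into a designated vertex block and each $D_i$-piece into its own fresh block). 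The reduction takes a $(k-1)$-uniform hypergraph $G_0$ with $m$ hyperedges and builds the host $G$ on vertex set $V(G_0)$ together with, for each hyperedge $f$ of $G_0$, a fresh copy of the component gadget $D_i$ (with $i$ chosen so $|C\setminus\{c_i\}|=k-1$), attached so that $C\setminus\{c_i\}$ sits on the vertices of $f$ and the hyperedges of $E_i$ are added among $f$ and the fresh vertices. Then (i) $G$ has $\lparam$-degeneracy $O(|E(H)|)$: each fresh vertex lies in at most $|E(H)|$ hyperedges, and any vertex subset $S$ avoiding all fresh vertices has empty $\lparam$-trimmed subhypergraph because condition~1 forbids a gadget hyperedge from leaving two or more vertices inside $C$ once at most $\lparam$ of its fresh vertices are removed; and (ii) by conditions~2 and~3, any copy of $H^\ast$ in $G$ must glue $k$ distinct facet-gadgets around $k$ vertices of $G_0$ all of whose $(k-1)$-subsets are hyperedges, i.e.\ around a $(k-1)$-simplex, so counting these copies detects one. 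Since $k-1\ge 2$ and $|V(G)|=O(m)$, \Conj{simplex} gives the claimed $f(\ldegen{\lparam})\,n^{1+\gamma}$ lower bound.
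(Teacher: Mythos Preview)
Your overall architecture matches the paper: compute an $\lparam$-degeneracy ordering, reduce $\Hom{G}{H}$ to directed counts over all orientations of (quotients of) $H$, run a Bressan-style dynamic program over a width-$1$ \dagtree{} of the $\lparam$-skeleton, and for the lower bound use a gadget reduction from simplex detection. But the structural lemma, which you correctly flag as the crux, is set up incorrectly.

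\medskip
\textbf{The core is not the sources.} You take $C$ to be the $k$ sources whose reachability sets overlap cyclically. With that choice the argument collapses: since the reachability sets pairwise intersect, the vertices in $\bigcup_i \Reach(s_i)\setminus C$ lie in a single connected component of the $\infty$-trimmed subhypergraph on $V(H)\setminus C$, so there is essentially one $D_1$ and its edge set $E_1$ contains all of $C$, violating condition~2 of \Def{obstruct}. The paper does the opposite. The obstruction to $\alpha$-acyclicity of the reachability hypergraph is a pair $(X,Y)$ where $Y$ is the set of sources and $X$ is a set of \emph{reached} vertices forming either a ``reachable cycle'' ($x_i,x_{i+1}\in\Reach(y_i)$, no vertex reaches three of the $x_j$) or a ``reachable simplex'' ($X\setminus\{x_i\}\subseteq\Reach(y_i)$, no vertex reaches all of $X$). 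The core $C$ is taken to be $X$, the reached vertices; the sources $y_i$ live inside the connectors $D_i$. Condition~1 then encodes that no edge of the $\lparam$-skeleton runs between two $x_j$'s, and conditions~2--3 encode the cycle/simplex reachability pattern. The paper needs two separate extractions (reachable cycle $\Rightarrow$ obstruction with $|C|=3$; reachable simplex $\Rightarrow$ obstruction with $|C|=k$), and the simplex case requires a nontrivial ``no connected set reaches all of $X$'' lemma proved by a minimality argument over alternating paths. Your sketch does not distinguish the cycle and simplex cases and misses this lemma entirely.

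\medskip
\textbf{The lower bound construction needs colors.} Your gadget step ``for each hyperedge $f$ of $G_0$, attach a fresh copy of $D_i$ with $i$ chosen so $|C\setminus\{c_i\}|=k-1$'' is vacuous: $|C\setminus\{c_i\}|=k-1$ for every $i$, so nothing selects which connector to attach. The paper fixes this by working with a $k$-\emph{colored} input $G_0$ and choosing $i$ to be the unique color missing from the (colorful) hyperedge $f$; the reduction then goes from colorful simplex detection, and the passage from $\Hom{G}{H}$ to colorful homomorphisms is by inclusion--exclusion over color classes. Your detour through ``linear relations among homomorphism, subhypergraph, and induced ($\infty$-trimmed) counts'' to first isolate counting induced copies of $H^\ast$ is neither needed nor obviously sound in this direction: the standard relations let you write $\Sub{G}{H}$ in the homomorphism basis of quotients of $H$, not recover induced counts of an arbitrary induced trimmed sub-pattern from $\Hom{G}{H}$.
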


Thus, for bounded $\ldegen{l}$ for any value of $l$, we have a precise characterization of when $\Hom{G}{H}$ can be computed in near-linear time.

We can extend the hardness characterization to subhypergraph counting. The quotient set of $H$  is the set of hypergraphs obtained by merging sets of vertices that form a partition of $H$ (see \Def{quotient} for a formal definition). Like in the case of graphs, we can use the homomorphism basis to lift the hardness from counting homomorphisms of the quotient set of $H$ to counting subhypergraphs of $H$.

\begin{theorem}[Characterization for subhypergraphs] \label{thm:sub}
    Let $\lparam \in \mathbb{Z}^+\cup \{\infty\}$ and $H$ be a hypergraph.
    \begin{itemize}
    	\item If every hypergraph in the quotient set of $H$ is $\setH{l}$ ITS free. Then, there is an algorithm that computes $\Sub{G}{H}$ in time $poly(\ldegen{l})n \log{n}$.
    	\item Otherwise, assuming the simplex detection conjecture, there exists an absolute constant $\gamma > 0$,
    	such that: for all functions $f:\NN \to \NN$, any algorithm that computes $\Sub{G}{H}$ requires $f(\ldegen{\lparam}) n^{1+\gamma}$ time.
    \end{itemize}
\end{theorem}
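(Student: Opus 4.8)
The strategy is to derive this theorem from \Thm{main} using the \emph{homomorphism basis}, the hypergraph analogue of the Curticapean-Dell-Marx identity between subgraph and homomorphism counts. Every homomorphism $f\colon V(H)\to V(G)$ determines the partition of $V(H)$ into its fibers, and $f$ factors uniquely as an embedding of the corresponding quotient; collecting homomorphisms by their fiber partition and M\"obius-inverting over the partition lattice yields a fixed identity
\[
\Sub{G}{H}=\sum_{H'\in\cQ}c_{H'}\cdot\Hom{G}{H'},
\]
where $\cQ$ is the set of isomorphism types in the quotient set of $H$ (\Def{quotient}) and the rational coefficients $c_{H'}$ depend only on $H$. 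Since $|V(H)|=O(1)$, the set $\cQ$ has constant size and the $c_{H'}$ are computable in constant time. One caveat absent for graphs: merging two vertices that lie in a common hyperedge of $H$ can produce a hyperedge of smaller arity or a duplicated hyperedge; deleting duplicates and arity-$1$ hyperedges as in \Def{ind} does not change any homomorphism count, so every $H'$ may be taken simple.

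For the algorithmic direction, assume every $H'\in\cQ$ is $\setH{l}$ ITS free. The first part of \Thm{main} computes each $\Hom{G}{H'}$ in $\poly(\ldegen{l}(G))\,n\log n$ time, and there are only $O(1)$ summands, so evaluating the identity above computes $\Sub{G}{H}$ within the same bound.

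For the hardness direction, assume some $H^\ast\in\cQ$ is not $\setH{l}$ ITS free. By the second part of \Thm{main} (which assumes the Simplex Detection Conjecture) there is an absolute $\gamma>0$ such that computing $\Hom{\cdot}{H^\ast}$ on bounded $l$-degeneracy inputs requires $f(\ldegen{l})\,n^{1+\gamma}$ time. I would exhibit a Turing reduction from $\Hom{\cdot}{H^\ast}$ to $\Sub{\cdot}{H}$ making $O(1)$ oracle calls on instances with $O(n)$ vertices and $l$-degeneracy $O(\ldegen{l}(G))$, so that a $g(\ldegen{l})\,n^{1+\gamma'}$ algorithm for $\Sub{\cdot}{H}$ with $\gamma'<\gamma$ would solve $\Hom{\cdot}{H^\ast}$ too quickly. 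The reduction is the standard homomorphism-basis machinery: move from $\Sub{\cdot}{H}$ to its colored variant by inclusion-exclusion over colorings, relate colored subhypergraph counts to colored homomorphism counts by M\"obius inversion over the color-pattern lattice, and read off the colored-homomorphism count of $H^\ast$, which is singled out by its own color pattern and so is not conflated with the other quotients; undoing the coloring recovers $\Hom{G}{H^\ast}$. Every instance produced here is a bounded blow-up, recoloring, or induced-subhypergraph restriction of $G$, hence has $O(n)$ vertices; these operations also keep bounded $l$-degeneracy. For the blow-up $G^{(t)}$ (each vertex split into $t$ independent copies, each hyperedge replaced by all transversals of the copies) and $S\subseteq V(G^{(t)})$ with projection $S'=\{v : v\text{ has a copy in }S\}$, one checks that $|e'\setminus S|\le l$ forces $|e\setminus S'|\le l$ for the projected hyperedge $e$, and that the hyperedges of the $l$-trimmed subhypergraph of $G^{(t)}$ induced by $S$ that are incident to a fixed copy of $v$ restrict onto those of the $l$-trimmed subhypergraph of $G$ induced by $S'$ incident to $v$, at most $t^{\,r-1}$-to-one, where $r$ is the (constant) rank; hence $\ldegen{l}(G^{(t)})\le t^{\,r-1}\ldegen{l}(G)=O(\ldegen{l}(G))$ as $t\le|V(H)|=O(1)$.

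The step I expect to be the crux is making the hardness reduction honor \emph{both} constraints simultaneously. For finite $l$ the trimming threshold does not obviously commute with blow-ups, so the projection argument must be carried out carefully (the endpoints $l=0$ and $l=\infty$ are cleaner, the latter reducing to degeneracy of the clique completion). More significantly, the colored-homomorphism extraction isolates only those $H'$ with nonzero net coefficient $c_{H'}$ --- the \emph{spasm} of $H$ --- whereas the hypothesis quantifies over the full quotient set, and in hypergraphs, unlike simple graphs, distinct partitions can give the same quotient type with cancelling M\"obius weights. Bridging this requires either that \Def{quotient} already coincide with the spasm, or a structural lemma: whenever some quotient of $H$ contains an $\setH{l}$ obstruction as an induced $\infty$-trimmed subhypergraph, then so does some quotient with nonzero coefficient --- most plausibly proved by tracking how an obstruction is preserved under passage to a coarsening carrying nonzero M\"obius weight. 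Granting these two points, the remaining ingredients (the identity, the colored inclusion-exclusion, and the interpolation) are routine adaptations of the graph case, and combining the two directions yields the claimed dichotomy.
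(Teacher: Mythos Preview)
Your upper bound is exactly the paper's: invoke the homomorphism basis (\Lem{hombasis}) and apply the algorithmic half of \Thm{main} to each of the $O(1)$ terms.

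For the lower bound the paper takes a different route from your coloring/blow-up scheme. It uses the hypergraph tensor product $G\otimes F$ (\Def{tensor}), which satisfies $\Hom{G\otimes F}{H'}=\Hom{G}{H'}\cdot\Hom{F}{H'}$ (\Lem{homtensor}) and preserves $l$-degeneracy up to constants whenever $|V(F)|=O(1)$ (\Lem{sizes}). A hypergraph version of the Erd\H{o}s--Lov\'asz--Spencer lemma (\Lem{lovasz}) then supplies constant-sized hypergraphs $F_1,\dots,F_k$ making the matrix $[\gamma(H_j)\,\Hom{F_i}{H_j}]_{i,j}$ invertible; evaluating $\Sub{G\otimes F_i}{H}$ for each $i$ gives a linear system whose solution recovers every $\Hom{G}{H'}$ (\Lem{sub_to_hom}). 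This sidesteps the trimming-commutativity issue you flag for finite $l$, since the degeneracy bound for the tensor product is established directly and uniformly in $l$ by a single ordering argument.

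Your main worry --- that distinct partitions might yield isomorphic quotients with cancelling M\"obius weights, so that the support of the expansion could be strictly smaller than the quotient set --- is precisely what \Lem{hombasis} (proved in \cite{BrBrDe+25}) rules out: every coefficient $\gamma(H')$ is nonzero for hypergraphs as well. So the structural lemma you propose is unnecessary. With that fact in hand your coloring route could in principle be completed, but the tensor-product interpolation is both cleaner and what the paper actually does.
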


\subsection{Perspectives on \Def{obstruct} and \Thm{main} } \label{sec:perspective}

We explain why all the technicalities of \Def{obstruct} are necessary. This provides useful
intuition and also demonstrates why subhypergraph counting is technically challenging. We consider
\Def{obstruct} as our main discovery, which illustrates why subhypergraph counting is not
a simple generalization of subgraph counting.

We give examples showing that removing any aspect of \Def{obstruct} leads to patterns that can be counted in linear time.
Let us focus on $\lparam = 0$ for ease of presentation.
\Fig{obstruction} has examples of patterns in $\setH{0}$.
The edge sets $E_i$ in \Def{obstruct} are called ``connectors''. For convenience, we use ``hardness''
for near-linear time hardness. When we say ``counting'', we refer to homomorphism counting. 

For context, we recall the characterization of hard pattern graphs for graph homomorphism counting \cite{BePaSe21,BeGiLe+22}.
If the longest induced cycle length (LICL) of $H$ is at least $6$, it is hard; otherwise,
there is a near-linear time algorithm to count $\Hom{G}{H}$. The difficulty of discovering
\Def{obstruct} and the many technical conditions for hypergraph counting should be contrasted with the easy categorization
of the graph case. 

\begin{figure}
	\centering
	\includegraphics[width=0.9\linewidth]{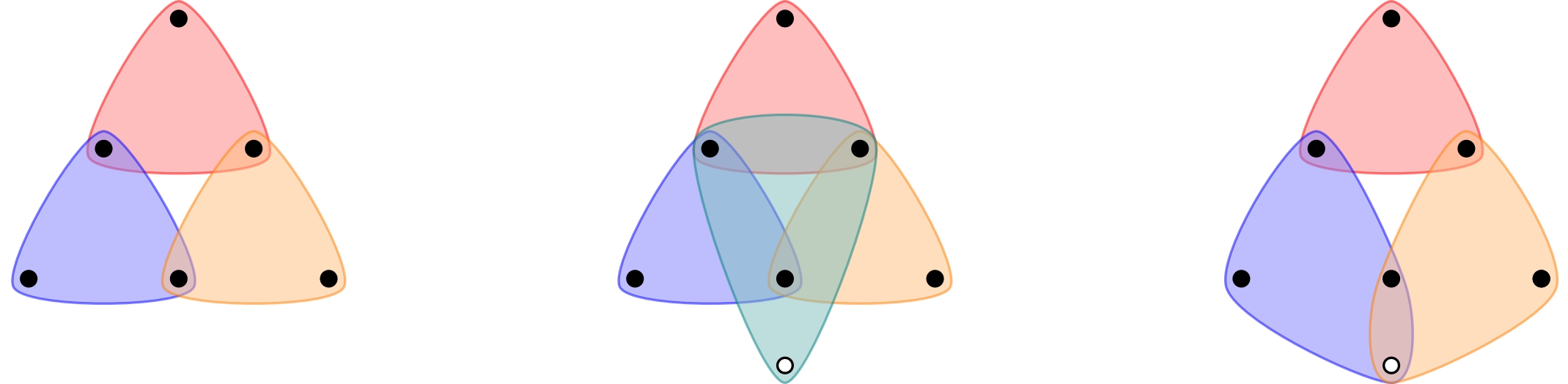}
	\caption{Left: An obstruction in $\setH{0}$. Center: A pattern that contains the obstruction as a ($0$-trimmed) subhypergraph when induced by the black vertices, but it does not contain any obstruction as a $\infty$-trimmed subhypergraph. Therefore, it can be counted efficiently. Right: A pattern that contains the obstruction as the $\infty$-trimmed hypergraph induced by the black vertices. Note that the induced $0$-trimmed subhypergraph only contains the red hyperedge. This pattern cannot be counted efficiently.}
	\label{fig:trimmed_vs_induced}
\end{figure}

{\bf Containing $\setH{0}$ as induced $\infty$-trimmed subhypergraphs:} \Thm{main}
states that the hard patterns contain a member of $\setH{0}$ as induced $\infty$-trimmed subhypergraphs.
We cannot replace that with induced (say) $0$-trimmed subhypergraphs, which is the typical
definition of induced hypergraphs. 
In fact, we can find both patterns that contain an obstruction in $\setH{0}$ as an induced ($0$-trimmed) subhypergraph that are easy to count, and patterns that do not contain an obstruction in $\setH{0}$ as an induced subhypergraph that are hard to count.
For example, consider \Fig{trimmed_vs_induced}.
On the left, we have a pattern $H$ in $\setH{0}$. In the center, we modify $H$ and obtaining a pattern $H'$ that contains $H$ as an induced $0$-trimmed (standard) subhypergraph but not as an induced $\infty$-trimmed subhypergraph. Our algorithm can count $\Hom{G}{H'}$ in near-linear time. Finally, on the right, we modify it again, obtaining $H''$, which contains $H$ as an induced $\infty$-trimmed subhypergraph but not as an induced $0$-trimmed subhypergraph. We can prove hardness for counting this pattern.

\begin{figure}
	\centering
	\includegraphics[width=0.75\linewidth]{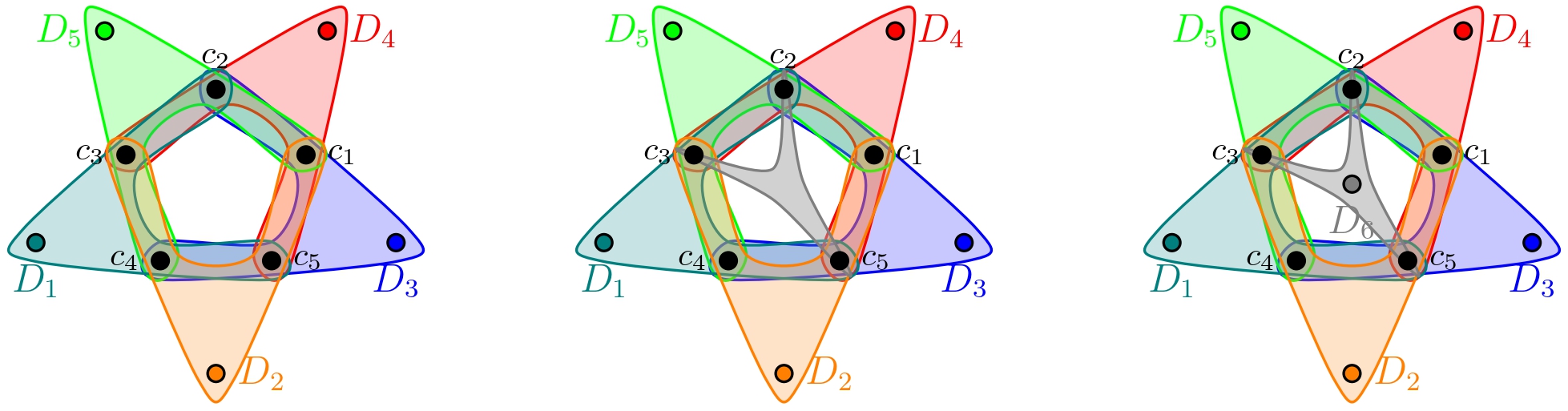}
	\caption{An example of an obstruction with a core of size $5$. Adding the gray hyperedge no longer results in an obstruction. Adding an external vertex to the gray hyperedge creates a new obstruction.}
	\label{fig:simplex5}
\end{figure}

{\bf The empty core:} The first condition of \Def{obstruct} requires that the core $C$ induces
no $\lparam$-trimmed hyperedges. As an example, we give an obstruction $H \in \setH{0}$
with $k=5$ (\Fig{simplex5}). Note that each hyperedge contains five vertices; four in the core, and one vertex outside. If we add any hyperedge to the core,
the resulting $H'$ can be counted in near-linear time. Our algorithm can exploit this extra connectivity
to efficiently count $H'$. Adding an external vertex to that hyperedge results in a pattern that is again hard to count, as the hyperedge just forms a new connector, and the induced $0$-trimmed subhypergraph of the core is again empty. 

This core condition is the only place where $\lparam$ occurs. Refer to \Fig{l_difference}. We can construct a pattern $H \in \setH{1}$,
so it is hard to count for both $0$-degeneracy and $1$-degeneracy bounded input graphs. We modify
it so that the $1$-trimmed hypergraph induced by the core is non-empty, but the corresponding
$0$-trimmed subhypergraph is empty. So, it remains hard for bounded $0$-degeneracy inputs, but can
be counted in near-linear time for bounded $1$-degeneracy inputs.

{\bf No $E_j$ contains the core:} We give an obstruction $H \in \setH{0}$ in \Fig{connector_all}.
We add one hyperedge to get $H'$ so that, according to \Def{obstruct}, there is a connector of hyperedges $E_j$ that contains the three vertices in the core. We can also see that $H'$ does not contain any obstruction
as an induced trimmed subhypergraph. So $H'$ can be counted by our algorithms in near-linear time.

{\bf $E_i$ contains $C \setminus \{c_i\}$:} Let us explain the last condition of \Def{obstruct}.

We give two examples. First, consider an example of an obstruction $H \in \setH{0}$ with $k=3$. The core has vertices $c_1, c_2, c_3$. As shown in \Fig{nocover}, there is a connector $E_3$ that contains $\{c_1, c_2\}$. We disconnect $D_3$, so no connector spans $\{c_1,c_2\}$. The resulting pattern is no longer an obstruction.

Second, consider now an obstruction in $\setH{0}$ with core $c_1,c_2,c_3,c_4$. As shown in \Fig{nocoverb}, there is a connector $E_4$ that contains $\{c_1,c_2,c_3\}$. We split it into two connectors spanning $\{c_1,c_2\}$ and $\{c_2,c_3\}$. Intuitively, all portions of the core are connected, but $H'$ is no longer an obstruction. 

However, in this case, we can see that the $\infty$-trimmed subhypergraph induced by all the vertices except $c_4$ will actually be an obstruction in $\setH{0}$ with $\{c_1, c_2, c_3\}$ as the core. Note that this was not the case before.

\begin{figure}[t]
	\centering
	\begin{subfigure}[t]{.51\textwidth}
		\centering
		\includegraphics[width=0.8\linewidth]{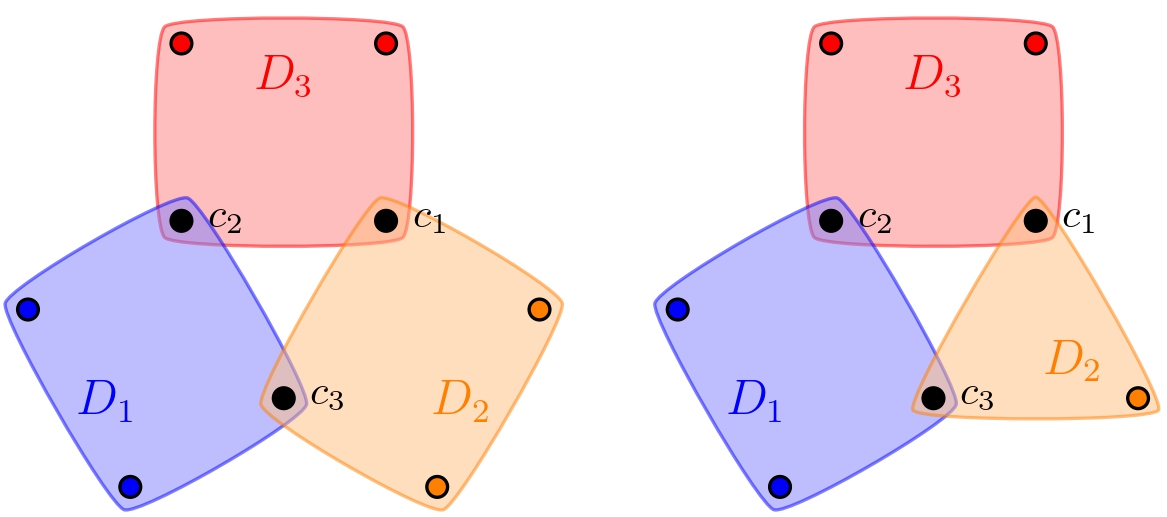}
		\caption{The pattern on the left is an obstruction in both $\setH{0}$ and $\setH{1}$. By contracting the two orange vertices, the $1$-trimmed subhypergraph induced by the core will no longer be empty. It will not be an obstruction in $\setH{1}$.}
		\label{fig:l_difference}
	\end{subfigure}
	\hspace{0.03\textwidth}
	\begin{subfigure}[t]{.43\textwidth}
		\centering
		\includegraphics[width=1\linewidth]{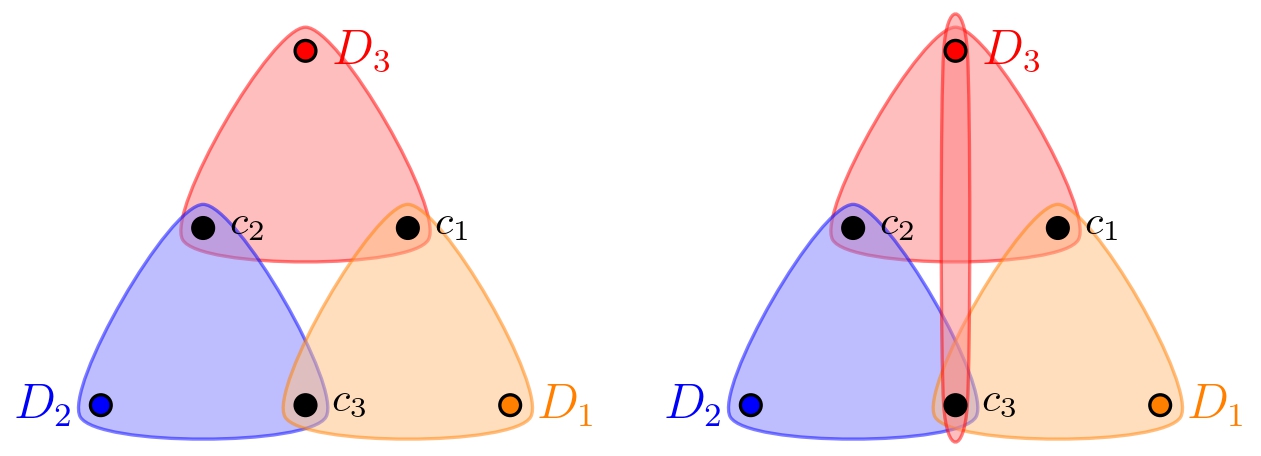}
		\caption{An example of an obstruction in $\setH{0}$. Adding a hyperedge makes $E_3$ contain all the vertices in the core, resulting in a pattern not in $\setH{0}$.}
		\label{fig:connector_all}
	\end{subfigure}%
	\caption{More examples of obstructions.}
	\label{fig:covers}
\end{figure}

{\bf The role of $k$:} It is easiest to understand $k=3$. In this case,
we can think of the connectors as (hyper)paths that connect pairs in the core.
The obstruction looks like an induced hypercycle with at least 6 vertices.
This is exactly the pattern characterization in the graph setting, of having
an induced cycle of length at least 6. For the graph setting, we can prove
that obstructions for larger $k$ contain an induced cycle with at least $6$ vertices.
Even for the hypergraph case, if all hyperedges include at most one vertex
of the core, we can show a similar statement (\Clm{l-infty}).

It becomes complicated when hyperedges of higher arity contain multiple vertices of the core.
Consider the obstructions in \Fig{obstruction} with $k = 4$ and in \Fig{simplex5} with $k=5$.
They do not contain
an obstruction with a smaller $k$, and therefore \Def{obstruct} must consider all values of $k$.

\begin{figure}[t]
	\centering
	\begin{subfigure}[t]{.350\textwidth}
		\centering
		\includegraphics[width=1\linewidth]{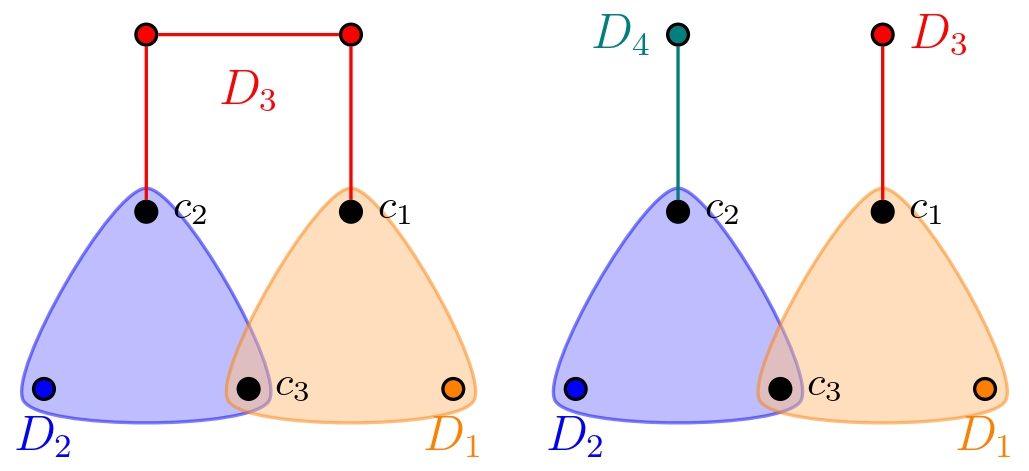}
		\caption{An example of an obstruction in $\setH{0}$. Disconnecting $D_3$ makes the pattern no longer an obstruction, as no connector covers $\{c_1,c_2\}$.}
		\label{fig:nocover}
	\end{subfigure}%
	\hspace{0.03\textwidth}
	\begin{subfigure}[t]{.59\textwidth}
		\centering
		\includegraphics[width=\linewidth]{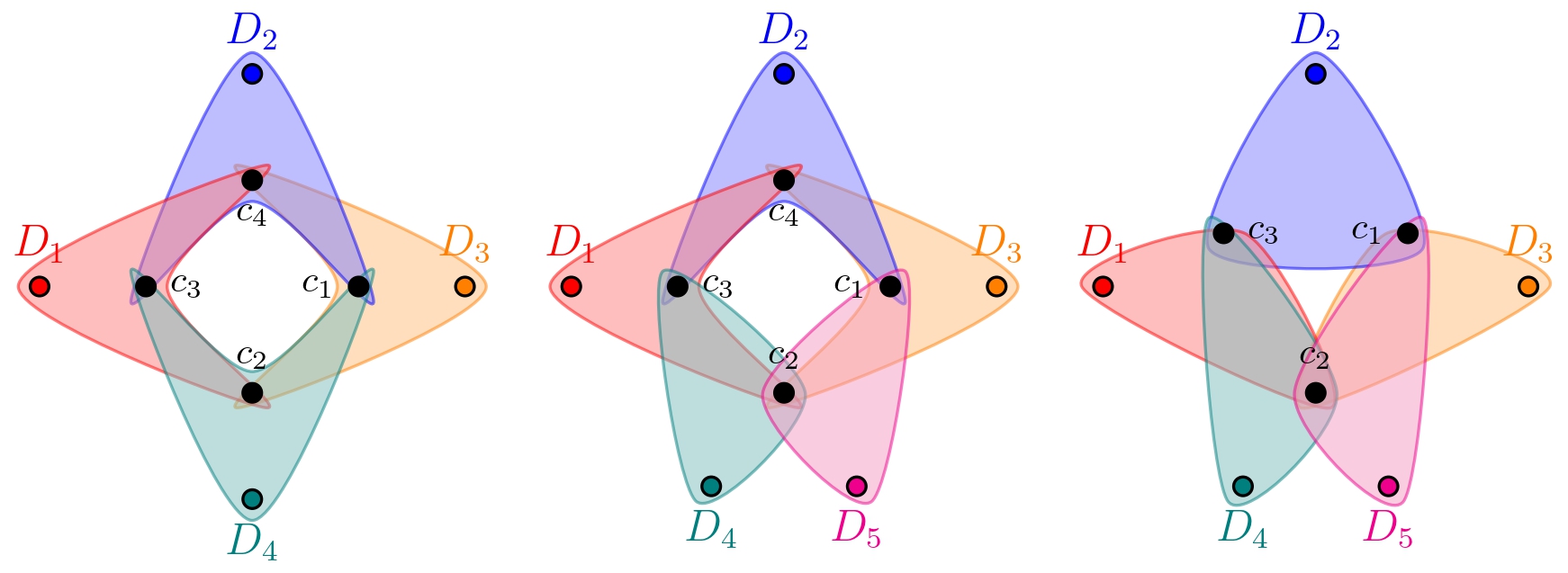}
		\caption{An example of an obstruction $H$ in $\setH{0}$. We obtain $H'$ by splitting the teal hyperedge into two different connectors, $C$ will no longer be a valid core. However, the induced $\infty$-trimmed subhypergraph by $V(H')\setminus \{c_4\}$ (right) actually forms an obstruction.}
		\label{fig:nocoverb}
	\end{subfigure}
	\caption{Examples showcasing the connectivity requirements of the connectors.}
	\label{fig:coversb}
\end{figure}

{\bf The role of $\lparam$:} As mentioned earlier, $\lparam$ only appears
in the first condition of \Def{obstruct}. We find it curious that 
all other notions of induced subhypergraphs either used the $0$-trimmed
or $\infty$-trimmed variants, regardless of the $\lparam$ value.  Moreover, note that for $\lparam < \lparam'$, $\setH{l} \supseteq \setH{l'}$.

When $\lparam = \infty$, the obstruction becomes simpler and defaults to
the graph case. Let the clique completion of a hypergraph be the graph
obtained by replacing each hyperedge with a clique. We can prove the following.

\begin{restatable}{claim}{linfty} \label{clm:l-infty}
	  $H$ is $\setH{\infty}$ ITS free iff $\licl(H^c) < 6$ where $H^c$ is the clique completion of $H$.
\end{restatable}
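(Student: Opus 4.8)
The plan is to prove both directions by relating the combinatorial structure of an $\setH{\infty}$-obstruction inside $H$ to an induced long cycle in the clique completion $H^c$. The key observation driving everything is the first condition of \Def{obstruct} specialized to $\lparam = \infty$: the $\infty$-trimmed subhypergraph induced by the core $C$ is empty. In the clique completion, the $\infty$-trimmed induced subhypergraph on $C$ corresponds to the ordinary induced subgraph $H^c[C]$, so condition~1 says precisely that $C$ is an \emph{independent set} in $H^c$. Moreover, any hyperedge $e$ of $H$ with $|e \cap C| = t$ contributes a clique on those $t$ core vertices to $H^c$, so condition~1 forces every hyperedge to meet $C$ in at most one vertex. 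This is the regime the paper flags (the ``$k=3$'' and ``all hyperedges meet the core in at most one vertex'' discussion): each connector $E_i$ becomes, after clique completion, a connected subgraph of $H^c$ attached to exactly the core vertices it contains, and condition~3 says $E_i$ attaches to all of $C \setminus \{c_i\}$ but not to $c_i$.

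For the direction ``$H$ contains an $\setH{\infty}$-obstruction as an induced $\infty$-trimmed subhypergraph $\Rightarrow \licl(H^c) \geq 6$'': first, taking an induced $\infty$-trimmed subhypergraph of $H$ and then clique-completing yields an induced subgraph of $H^c$ (this needs a short lemma: clique completion commutes with $\infty$-trimmed induced subhypergraphs up to taking induced subgraphs), so it suffices to show that the clique completion of an obstruction $H_0 \in \setH{\infty}$ itself has an induced cycle of length $\geq 6$. By the previous paragraph $C$ is independent in $H_0^c$ with $|C| = k \geq 3$. I would build a cycle alternating between core vertices and connector-interiors: pick $c_1, c_2 \in C$; by condition~3 there is a connector $E$ containing $C \setminus \{c_1\}$ but not $c_1$ and a connector $E'$ containing $C \setminus \{c_2\}$ but not $c_2$; because $C$ is independent, any path in $H_0^c$ between two core vertices that stays inside one connector must pass through at least one non-core vertex, giving length $\geq 2$ per core-to-core hop. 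Concatenating a $c_1 \leadsto c_2$ path through $E'$ (which omits $c_1$ internally... careful: $E'$ omits $c_2$, so use $E$, which contains both $c_1, c_2$) and a $c_2 \leadsto c_1$ path through a connector omitting one of them, and using condition~2 (no connector contains all of $C$, so the two paths can be chosen through \emph{different} connectors and hence are internally disjoint) produces a cycle of length $\geq 6$. One then has to chord-eliminate: take a shortest such cycle and argue, using that $C$ is independent and each $D_i$ is a separate component, that a shortest cycle using vertices from two connectors plus two core vertices has no chord, hence is induced; if $k \geq 4$ one may instead get a shorter description but \Fig{obstruction} and \Fig{simplex5} warn that the cycle might genuinely need to route through several connectors, so the cleanest argument fixes two core vertices and two connectors among the $\geq k$ available.

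For the converse ``$\licl(H^c) \geq 6 \Rightarrow H$ contains an $\setH{\infty}$-obstruction as an induced $\infty$-trimmed subhypergraph'': let $Z = z_0 z_1 \cdots z_{\ell-1}$ be an induced cycle in $H^c$ with $\ell \geq 6$. Each edge $z_j z_{j+1}$ of $H^c$ comes from some hyperedge $e_j$ of $H$ containing both endpoints. I would let $C$ be a maximal independent subset of $\{z_0, \dots, z_{\ell-1}\}$ in $H^c$ of size $\geq 3$ chosen to be ``spread out'' along the cycle (e.g. every other vertex of $Z$, which is independent since $Z$ is an induced cycle of even-or-odd length $\geq 6$ — for odd length take a near-alternating set of size $\geq 3$), and let $H_0$ be the $\infty$-trimmed subhypergraph of $H$ induced by $C$ together with all vertices lying on the hyperedges $e_j$ realizing the cycle edges. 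One then verifies the three conditions of \Def{obstruct} for $H_0$: condition~1 holds because $C$ is independent in $H^c$ (so no hyperedge of $H_0$ can have two of its vertices both in $C$, as that would clique-complete to an edge inside $C$); the connected components of $H_0 - C$ are the arcs of the cycle strictly between consecutive chosen core vertices, each such arc together with its bounding core vertices forms a connector $E_i$; condition~3 (each core vertex is omitted by exactly the connector ``opposite'' it — here ``opposite'' means the arc not incident to it, using $\ell \geq 6$ to guarantee such an arc exists and is nonempty) and condition~2 (no single arc can be incident to all $\geq 3$ core vertices when $\ell \geq 6$) follow from the geometry of the cycle. The hardest part here is that the arcs, not the whole cycle, are the connectors, and I must check that restricting to the $\infty$-trimmed induced subhypergraph does not accidentally merge two arcs or introduce a hyperedge spanning two core vertices — this is where inducedness of $Z$ in $H^c$ is used crucially.

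The main obstacle I expect is the chord-elimination / minimality argument in the forward direction when $k$ is large and hyperedges have high arity: the paper explicitly cautions (\Fig{obstruction}, \Fig{simplex5}) that obstructions with $k = 4, 5$ need not contain smaller-$k$ obstructions, which hints that naively extracting a cycle from a $k$-core obstruction can produce chords coming from hyperedges that touch three or more connector-interiors. The fix is to never use more than two core vertices and two internally-disjoint connectors at a time — condition~2 guarantees at least two connectors exist and $k \geq 3$ guarantees at least three core vertices, so there is enough room — and to choose a \emph{shortest} closed walk of this restricted form, then show shortest $\Rightarrow$ induced using independence of $C$ and the component structure of $H - C$. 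Getting the bound $\geq 6$ rather than $\geq 4$ requires noting that each of the two core-to-core segments has length $\geq 2$ (independence of $C$) \emph{and} that the two segments cannot share the endpoint core vertex in a degenerate way that would collapse length — i.e. a genuine cycle, not a theta graph — which follows because the two connectors are internally disjoint (different components $D_i$).
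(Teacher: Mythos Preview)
Your converse direction (long induced cycle $\Rightarrow$ obstruction) is basically right but overcomplicated: the paper simply takes the cycle $v_1,\dots,v_t$ with $t\ge 6$, sets the core to be the three fixed vertices $C=\{v_1,v_3,v_5\}$ (no maximal-independent-set argument, no parity worry), and observes that the three arcs $\{v_2\}$, $\{v_4\}$, $\{v_6,\dots,v_t\}$ are the components $D_i$. The verification of \Def{obstruct} is then immediate from the fact that $(H\langle V'\rangle_\infty)^c = H^c[V']$ is the induced cycle itself. Your worry that ``restricting to the $\infty$-trimmed induced subhypergraph might accidentally merge two arcs or introduce a hyperedge spanning two core vertices'' is handled by exactly the commutation lemma you state: any such hyperedge would produce a chord in $H^c[V']$, contradicting that the cycle is induced.

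Your forward direction, however, has a real gap. You propose to use \emph{two} core vertices $c_1,c_2$ and \emph{two} connectors, taking one $c_1\leadsto c_2$ path through each. Independence of $C$ only guarantees each path has length $\ge 2$, so the resulting cycle has length $\ge 4$, not $\ge 6$; nothing in your argument upgrades $4$ to $6$, and your final paragraph acknowledges this without resolving it. The paper instead picks \emph{three} core vertices $c_1,c_2,c_3$ and the three connectors $D_1,D_2,D_3$ (indexed so $E_i$ omits $c_i$), and takes shortest paths $c_2\leadsto c_3$ through $D_1$, $c_1\leadsto c_3$ through $D_2$, $c_1\leadsto c_2$ through $D_3$. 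Each has length $\ge 2$ and they are internally vertex-disjoint (distinct components of $H'-C$), giving a cycle of length $\ge 6$. Moreover the chord-elimination you worry about for large $k$ simply does not arise in this three-vertex picture: there are no chords between different $D_i$'s (different components), no chord from $c_i$ into $D_i$ (condition~3 says $E_i$ omits $c_i$), and no chord from $c_i$ into a non-neighbouring vertex of $D_j$ for $j\ne i$ (shortest path). So the paper's choice of three core vertices both fixes the length bound and makes the induced-cycle verification automatic, avoiding the minimality/chord-elimination machinery you set up.
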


Where $\licl$ is the length of the longest induced cycle in $H^c$. This is not too difficult to prove, since no hyperedge can contain more than one vertex of the core.
As we discussed earlier, the connectors can be broken down into paths, and it defaults to the $k=3$ setting.

{\bf Why hypergraphs are challenging:} The examples shown highlight the difficulties
in characterizing hard patterns for near-linear time algorithms. In the graph case, the basic
obstruction is a subdivision of a triangle, and the hardness of general triangle counting
neatly matches up with the algorithmic benefits of bounded degeneracy inputs. For hypergraphs,
while the same philosophy holds, the obstructions are much more complicated. At a high level,
it is some embedding of a simplex, but the specific connectivity imposed by \Def{obstruct}
is quite tricky. Obstructions can be quite brittle, in that adding a strategic hyperedge
allows for efficient algorithms. Overall, we believe that this shows the need for more
theory of subhypergraph counting, and the potential for precise dichotomy theorems.

\subsection{Related work} \label{sec:related}

Subgraph and homomorphism counting are vast topics, and we only give a brief
list of papers in this area. We focus on results that are directly relevant
for our work.

Tree decompositions have played a central role in subgraph/homomorphism counting.
Tree decomposition and treewidth were first discovered by Bertele-Brioschi~\cite{BeBr73}
and Halin~\cite{Ha76}. They were also developed in the context of graph minor theory
by Robertson and Seymour~\cite{RoSe83,RoSe84,RoSe86}.
A classic result of D{\'\i}az et al ~\cite{DiSeTh02} gives a $O(2^{k}n^{tw(H)+1})$
algorithm for determining the $H$-homomorphism count,
where $tw(H)$ is the treewidth of $H$. Assuming $\#W[1]$-hardness, Dalmau and Jonsson~\cite{DaJo04} proved
that $\Hom{G}{H}$ is polynomial time solvable iff
$H$ has bounded treewidth. 

The notion of graph \emph{degeneracy} has played an important role
in subgraph counting. The first use was pioneered by Chiba-Nishizeki for clique counting~\cite{ChNi85}.
Since then, it has been used in many theoretical and applied
subgraph counting results~\cite{ChNi85,Ep94,AhNeRo+15,JhSePi15,PiSeVi17,OrBr17,JaSe17,PaSe20}.
The short survey of Seshadhri discusses these connections~\cite{Se23}.

A seminal result of Bressan connected degeneracy to tree decompositions, through
the concept of DAG treewidth~\cite{Br19, Br21}. This concept has been central
in many near-linear time algorithms for bounded degeneracy inputs. 
Bera-Pashanasangi-Seshadhri initiated the theory of linear time homomorphism
counting \cite{BePaSe20}, showing that all patterns with at most $5$ vertices could be counted in linear time. 
Bera-Pashanasangi-Seshadhri and Bera-Gishboliner-Levanzov-Seshadhri then gave a precise characterization,
showing that $H$-homomorphisms can be counted in near-linear time in bounded degeneracy graphs
iff $LICL(H) < 6$ \cite{BePaSe21,BeGiLe+22}. (This assumes fine-grained complexity conjectures.) 

Hypergraph cut functions are a relatively recent area of study, but there has been a surge
of research on this problem in the network science community\cite{VeBeKl20,VeBeKl22}. Subhypergraph counting is not as well studied.
A recent work by Bressan-Brinkmann-Dell-Roth-Wellnitz studied FPT algorithms
for subhypergraph counting~\cite{BrBrDe+25}, showing that the homomorphism basis also extends to the hypergraph setting. The problem of homomorphism counting in hypergraphs has also been studied in hypergraphs of both bounded \cite{DaJo04} and unbounded rank \cite{GrMa14} from a parameterized complexity point of view.

The triangle detection conjecture was first stated by Abboud-Williams for getting
fine-grained complexity results \cite{AbWi14}.   
The best known algorithm for the triangle detection problem uses fast matrix multiplication and runs in time
$O(\min \{n^\omega, m^{{2\omega}/{(\omega+1)}}\}) \approx O(m^{1.41\ldots})$~\cite{AlYuZw97}. If $\omega=2$, this yields a running time of $m^{4/3}$, which
many believe to be the true complexity. 
Any improvement even on this bound would be considered a huge breakthrough in algorithms, and a near-linear time algorithm would completely
upend our understanding of triangle counting.

\section{Technical overview}

In this section, we give a brief exposition of the main technical aspects of the paper along with the main results of each of the sections.

\subsection{Directed acyclic hypergraphs}

Most of the work in homomorphism counting of degenerate graphs requires exploiting orientations of the input graph with low outdegree and reducing the problem to homomorphism counting of directed acyclic graphs. This approach was first started by Chiba-Nishizeki \cite{ChNi85}.

If we wish to do the same in the hypergraph setting, we need to start by defining directed hypergraphs. In the most common definition of a directed hypergraph, each edge is defined by two sets of vertices: the tail and the head \cite{AuLa17}; this directly generalizes directed graphs where, for every directed edge, one vertex is the tail and another the head. 

However, an alternative way of generalizing directed acyclic graphs is by using an ordering of the vertices. We will use the term directed acyclic hypergraph (DAH) to refer to this variant of direct hypergraphs.

\begin{definition}[Directed acyclic hypergraph]
	A directed acyclic hypergraph $\vec{G}$ is a hypergraph $G$ with an ordering function $\ord: V(G)\to [|V(G)|]$.
\end{definition}
 
 In a DAH, every hyperedge will be given by a sorted list of vertices that matches the order of the hypergraph.

\subsection{$\lparam$-skeletons}

In relation to DAH, we define different notions of ``skeletons''. The $\lparam$-skeleton of a DAH $\vec{H}$ is the directed acyclic graph obtained by connecting each of the first $(\lparam+1)$ vertices on each hyperedge of $\vec{G}$ to any vertex that appears after them in the internal ordering of the hyperedge. If $\lparam=0$, this is equivalent to replacing every hyperedge by an ``out-star'' (a star where the direct edges go from the center to the external nodes). If $\lparam = \infty$ (or just an integer at least the rank of $H$ minus $2$), this becomes the orientation of $\vec{H}^c$ that matches the ordering of $\vec{H}$. See \Fig{skeleton} for an example.

\begin{figure}
	\centering
	\includegraphics[width=1\linewidth]{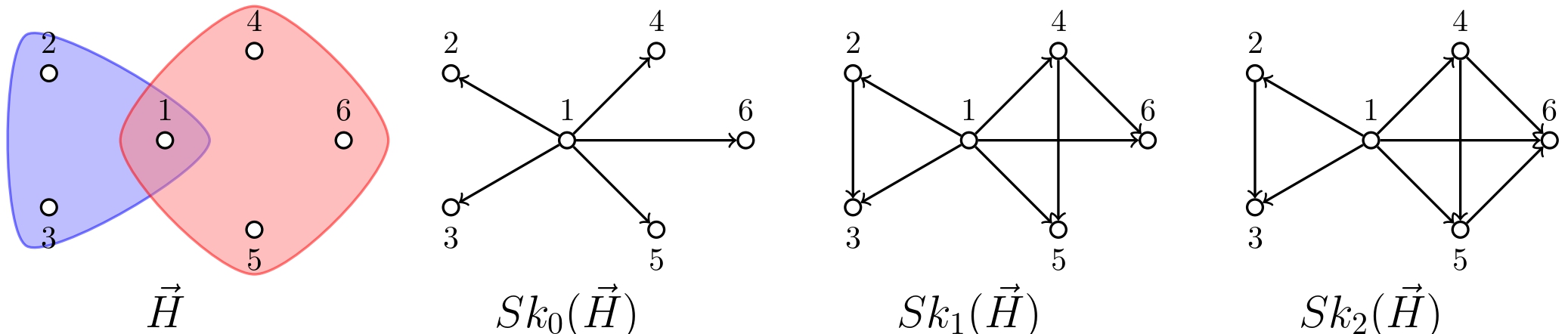}
	\caption{An example of a directed acyclic hypergraph following the ordering of the vertex numbering with all its different $l$-skeletons.}
	\label{fig:skeleton}
\end{figure}

\subsection{The degeneracy orientations}

A property of degenerate graphs is that one can obtain an ordering of the vertices such that orienting the graph using such ordering yields a directed acyclic graph with maximum out-degree equal to the degeneracy. Moreover, Matula and Beck showed that such an orientation can be constructed in linear time\cite{MaBe83}.

In \Sec{degeneracy} we show that this classical result can be extended to $l$-degenerate hypergraphs. First, we define the $l$-outdegree of a vertex in a DAH $\vec{G}$ as its outdegree in the $l$-skeleton of $\vec{G}$. We show that we can construct an orientation of any bounded $l$-degeneracy hypergraph $G$ that has bounded $l$-outdegree.

\begin{restatable}{lemma}{degeneracyordering} \label{lem:degeneracy_ordering}
   Let $\lparam \in \mathbb{Z}^+\cup \{\infty\}$ and let $G$ be a hypergraph with $\lparam$-degeneracy $\ldegen{l}(G)$ and rank $r(G)$. There exists an ordering of the vertices of $V(G)$ such that the directed acyclic hypergraph $\vec{G}^{(l)}$ obtained by applying this ordering to $G$ has a maximum $l$-outdegree $\Delta^+_l(\vec{G}^{(l)}) \leq \ldegen{\lparam}(G) \cdot r(G)$. Moreover, given $G$, one can construct $\vec{G}^{(l)}$ in $O(n+m(\log{m} \cdot r(G) + r(G)^2))$ time.
\end{restatable}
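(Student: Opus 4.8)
The plan is to mimic the Matula–Beck "smallest-last" / peeling argument, but carried out in the $\lparam$-skeleton. First I would recall the structural consequence of bounded $\lparam$-degeneracy: by \Def{degen}, every induced $\lparam$-trimmed subhypergraph of $G$ has a vertex of degree at most $\ldegen{\lparam}(G)$. I would build the ordering $\ord$ in reverse: repeatedly find such a minimum-degree vertex $v$ in the current induced $\lparam$-trimmed subhypergraph, assign it the largest unused position, delete it, and recurse. The key claim is then that when we orient $G$ by this $\ord$ and look at the $\lparam$-skeleton $\vec{G}^{(\lparam)}$, the out-degree of $v$ is bounded. The point is that an out-edge of $v$ in the $\lparam$-skeleton arises from a hyperedge $e$ in which $v$ is among the first $\lparam+1$ vertices (in the internal ordering of $e$), and such a hyperedge contributes at most $r(G)-1 \le r(G)$ out-edges. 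At the moment $v$ is removed, every hyperedge $e$ for which $v$ is among the $\lparam+1$ earliest vertices has at most $\lparam$ vertices preceding $v$ in $\ord$; but vertices preceding $v$ in $\ord$ are exactly those removed after $v$, i.e. those still present when $v$ is peeled. Hence $e$ has at most $\lparam$ vertices outside the current vertex set $S$ (the set alive when $v$ is peeled), so $e \cap S$ is a hyperedge of the induced $\lparam$-trimmed subhypergraph on $S$ and is counted in $v$'s degree there, which is $\le \ldegen{\lparam}(G)$. Each such hyperedge contributes at most $r(G)$ skeleton out-edges from $v$, giving $\Delta^+_\lparam(\vec{G}^{(\lparam)}) \le \ldegen{\lparam}(G)\cdot r(G)$.

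The slightly delicate point — and the one I'd want to state carefully — is the correspondence "$v$ among the first $\lparam+1$ vertices of $e$ in the hyperedge's internal order $\iff$ at most $\lparam$ vertices of $e$ precede $v$ in $\ord$": this is immediate since the internal order of a hyperedge is by definition the restriction of $\ord$, so "$v$ is in position $\le \lparam+1$ within $e$" means "$<\lparam+1$ vertices of $e$ come before $v$", i.e. at most $\lparam$ of them. And "precedes $v$ in $\ord$" is equivalent to "removed strictly after $v$ in the peeling", hence "present in $S$ when $v$ is peeled" (excluding $v$ itself). One subtlety for $\lparam=\infty$: then every vertex of every hyperedge generates out-edges, the $\lparam$-trimmed subhypergraph is just the full induced subhypergraph $\{e\cap S : e\in E(G)\}$, and the same counting goes through with the bound $\ldegen{\infty}(G)\cdot r(G)$; I'd note this case needs no separate treatment. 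I should also handle arity-1 hyperedges, which per \Def{ind} may be discarded — they create no skeleton edges, so they are harmless either way.

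For the running-time claim I would follow the standard bucket-queue implementation of smallest-last ordering. Preprocessing: sort the vertex list of each hyperedge according to a fixed global key so internal orders are consistent — $O(m \cdot r(G)\log r(G))$, absorbed into $O(m\log m \cdot r(G))$ since $r(G)\log r(G) = O(\log m \cdot r(G))$ when $r \le \log m$, and otherwise $r(G)^2$ dominates (the $r(G)^2$ term in the statement covers $r(G)\log r(G) \le r(G)^2$). Maintain for each alive vertex its current $\lparam$-trimmed degree, and a bucket array indexed by degree value (degrees are $\le$ the max degree $\le m$). Each peeling step: extract a vertex from the lowest nonempty bucket in amortized $O(1)$, then for each hyperedge containing it (there are $O(\ldegen{\lparam}\cdot r)$ relevant ones, but in the worst case all $\deg(v)$ of them) update the $\lparam$-trimmed degrees of the affected still-alive vertices — this requires, for each such hyperedge $e$, recomputing how many of its vertices are now outside $S$ and thus whether $e\cap S$ newly drops out of, or newly enters relevance for, each vertex, which costs $O(r(G))$ per hyperedge incidence, hence $O(m\cdot r(G))$ total hyperedge-incidence work, plus $O(r(G))$ per update to move a vertex between buckets; total $O(n + m\cdot r(G)^2)$ for the peeling and $O(m\log m\cdot r(G))$ for the initial sort, matching the stated $O(n+m(\log m\cdot r(G)+r(G)^2))$. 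Finally, producing $\vec{G}^{(\lparam)}$ is just recording $\ord$ and re-sorting each hyperedge's vertex list by $\ord$, another $O(m\cdot r(G)\log r(G)) = O(m\cdot r(G)\log m + m\cdot r(G)^2)$.

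The main obstacle I expect is not conceptual but bookkeeping: getting the incidence-update accounting tight enough to land the claimed $O(m(\log m\cdot r(G)+r(G)^2))$ bound rather than a sloppier $O(m\cdot r(G)\cdot\Delta)$, and making fully precise the (easy but easy-to-botch) equivalence between "first $\lparam+1$ positions of a hyperedge" and "$\le\lparam$ deleted neighbors", uniformly across finite $\lparam$ and $\lparam=\infty$.
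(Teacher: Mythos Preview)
Your overall strategy---smallest-last peeling in the $\lparam$-trimmed subhypergraph---is the same as the paper's. But the correctness argument contains a direction error. You assign the \emph{largest} unused position to each peeled vertex, so a vertex removed later receives a smaller $\ord$. You then correctly note that vertices preceding $v$ in $\ord$ are those removed after $v$, hence still present in $S$, and that if $v$ sits among the first $\lparam+1$ positions of a hyperedge $e$ then at most $\lparam$ vertices of $e$ precede $v$. Your next step, ``Hence $e$ has at most $\lparam$ vertices outside the current vertex set $S$'', is a non sequitur: you have just argued that at most $\lparam$ vertices of $e$ (other than $v$) lie \emph{inside} $S$, not outside it. Under your convention the hyperedges contributing to $v$'s skeleton out-edges need not survive into $G\langle S\rangle_\lparam$ at all, so nothing ties them to $\ldegen{\lparam}(G)$. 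Concretely, the last vertex peeled receives position $1$, is therefore first in every hyperedge containing it, and its $\lparam$-outdegree equals its degree in the clique completion of $G$, which can be $\Theta(n)$. The fix is trivial but essential: assign the \emph{smallest} unused position (equivalently, let $\ord$ be the removal order), as the paper does; then the vertices preceding $v$ are exactly those already removed, hence outside $S$, and your chain of inequalities becomes valid.

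A second, smaller gap concerns duplicate trimmed hyperedges. As vertices are peeled, distinct hyperedges of $G$ can trim to the same set $e\cap S$, and since hypergraphs are simple these are a single hyperedge of $G\langle S\rangle_\lparam$. If you track only ``number of original $e$ with $|e\setminus S|\le\lparam$ and $v\in e$'', this can strictly exceed the true degree of $v$ in $G\langle S\rangle_\lparam$, and the vertex minimizing the former need not have trimmed degree $\le\ldegen{\lparam}(G)$---so the bound $\Delta^+_\lparam\le\ldegen{\lparam}(G)\cdot r(G)$ is not guaranteed. The paper handles this explicitly by maintaining a dictionary keyed on the sorted tuple of $e\cap S$ and merging duplicates as they arise; this dictionary, not an initial sort, is the source of the $\log m$ factor in the stated running time. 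Your runtime sketch does not address this step, and without it both the degree bound and the time bound are unjustified.
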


\subsubsection{\dagtree{} for hypergraphs}

Bressan introduced the concepts of \dagtree{} and \dagtw{} \cite{Br19,Br21} in the context of counting homomorphisms of degenerate graphs. The \dagtree{} gives a way of partitioning the pattern graph in order to compute the homomorphism counts efficiently from the counts of each partition, while the \dagtw{} is the maximum size of the bags of sources of the tree decomposition, and gives an upper bound on the complexity of counting the pattern. It was proved that we can only find linear time algorithms for counting patterns with a \dagtw{} of $1$.

We extend this notion to $l$-degenerate hypergraphs. We define the $l$-\dagtw{} ($\ldtw{l}$) of a directed acyclic hypergraph $\vec{H}$ as the \dagtw{} of its $l$-skeleton (see \Def{dtw} for more details).

In \Sec{bressan} we show how to use these decompositions to compute the homomorphisms of directed acyclic hypergraphs, extending the homomorphism counting result of Bressan \cite{Br19} to hypergraph homomorphism counting, for all notions of $l$-degeneracy and $l$-\dagtw. 

\begin{restatable}[]{theorem}{computing}
\label{thm:computing}
    Let $\lparam \in \Z^+ \cup \{\infty\}$ and $H, G$ be two bounded rank hypergraphs. There is an algorithm that computes $\Hom{G}{H}$ in time $\poly(\ldegen{l})n^{\ldtw{l}(H)} \log{m}$.
\end{restatable}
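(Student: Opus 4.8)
The plan is to transport Bressan's DAG-treewidth homomorphism-counting algorithm~\cite{Br19} from graphs to hypergraphs, using the $\lparam$-skeleton as the bridge between the hypergraph world and the world of directed graphs of bounded outdegree. First, I would apply \Lem{degeneracy_ordering} to $G$ to obtain in near-linear time the directed acyclic hypergraph $\vec G^{(\lparam)}$ with maximum $\lparam$-outdegree $\Delta:=\Delta^+_\lparam(\vec G^{(\lparam)})\le \ldegen{\lparam}(G)\cdot r(G)$; equivalently, the DAG $\Skeleton{\lparam}{\vec G^{(\lparam)}}$ has maximum outdegree $\Delta$. Next, exactly as in the graph case, I would reduce to directed counting: by a standard quotient / inclusion--exclusion argument (over orderings of $V(H)$ and over the ways a homomorphism can identify vertices of $H$), $\Hom{G}{H}$ is a fixed integer combination of the counts $\Hom{\vec G^{(\lparam)}}{\vec H}$ of order-respecting homomorphisms, where $\vec H$ ranges over a constant-sized family of directed acyclic hypergraphs derived from $H$, each satisfying $\ldtw{\lparam}(\vec H)\le \ldtw{\lparam}(H)$ (see \Def{dtw}). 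Fix one such $\vec H$ and set $S_H:=\Skeleton{\lparam}{\vec H}$ and $S_G:=\Skeleton{\lparam}{\vec G^{(\lparam)}}$; then $S_H$ is a DAG, and it admits a \dagtree{} $(T,\beta)$ of width $\ldtw{\lparam}(\vec H)$, constructible in $O(1)$ time since $H$ is of constant size.

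The crucial structural point is that an order-respecting homomorphism $f\colon\vec H\to\vec G^{(\lparam)}$ maps $S_H$ homomorphically into $S_G$: an arc $u\to v$ of $S_H$ arises from a hyperedge $e$ of $\vec H$ with $u$ among its first $\lparam+1$ vertices and $v$ later, so in the hyperedge $f(e)\in E(G)$ the vertex $f(u)$ is among the first $\lparam+1$ and $f(v)$ is later, yielding the arc $f(u)\to f(v)$ of $S_G$. (This is also why the quotient step above is needed: after passing to the appropriate quotient of $H$, $f$ is injective on each hyperedge, so this correspondence is exact and no loops appear.) Iterating, if $v$ is $S_H$-reachable from a source $s$ then $f(v)$ is $S_G$-reachable from $f(s)$; since $S_G$ has outdegree at most $\Delta$ and every $S_H$-path has fewer than $|V(H)|$ arcs, the set of vertices $S_G$-reachable from any single vertex has size at most $\Delta^{|V(H)|}=\poly(\ldegen{\lparam})$ (with the rank $r(G)$ folded in). Hence, once the at most $\ldtw{\lparam}(H)$ sources of a bag $\beta(t)$ have been placed in $V(G)$, the images of all vertices of $H$ that are $S_H$-reachable from $\beta(t)$ are confined to a set of size $\poly(\ldegen{\lparam})$; moreover every hyperedge of $\vec H$ lies inside some such reachable sub-pattern, since all its vertices are $S_H$-reachable from its first vertex, which in turn is reachable from a source contained in some bag.

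Given this, I would run Bressan's bottom-up tree dynamic program on $(T,\beta)$ essentially verbatim. At a node $t$: enumerate the at most $n^{|\beta(t)|}\le n^{\ldtw{\lparam}(H)}$ placements of its source-bag; for each placement, enumerate the $\poly(\ldegen{\lparam})$ extensions to the sub-pattern of $\vec H$ reachable from $\beta(t)$ in $S_H$, and for every hyperedge $e'$ of $\vec H$ inside that sub-pattern, check that the image $f(e')$ really is a hyperedge of $G$ by a lookup in a hash table of $E(G)$ ($O(r(G))$ work plus $O(\log m)$ for the hashing); then combine these local counts with the counts already computed at the children of $t$ using the intersection/consistency step of the \dagtree{}. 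That combination step is exactly where Bressan's correctness argument is invoked, and it carries over unchanged once the structural point above guarantees bounded branching in the skeleton of $\vec G^{(\lparam)}$. Over the $O(1)$ tree nodes and the $O(1)$ choices of $\vec H$, the total running time is $\poly(\ldegen{\lparam})\cdot n^{\ldtw{\lparam}(H)}\log m$, as claimed.

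The part I expect to be the main obstacle is making all of this honest rather than routine: establishing the structural correspondence above together with a clean quotient reduction, in particular verifying that each directed acyclic hypergraph $\vec H$ appearing in the decomposition of $\Hom{G}{H}$ satisfies $\ldtw{\lparam}(\vec H)\le\ldtw{\lparam}(H)$ (so that every term costs only $n^{\ldtw{\lparam}(H)}$) --- which I would prove by showing that forming the $\lparam$-skeleton commutes suitably with contracting vertices and that DAG-treewidth does not increase under the relevant operations --- and handling the non-injective homomorphisms without over- or under-counting. Re-deriving Bressan's combination step and its running-time analysis with $\Delta=\ldegen{\lparam}(G)\cdot r(G)$ in place of the graph outdegree, and with the extra $r(G)$ and $\log m$ factors coming from hyperedge-membership tests, should then be a routine adaptation of the graph proof.
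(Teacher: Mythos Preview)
Your proposal is correct and follows the paper's approach closely: orient $G$ by the $\lparam$-degeneracy ordering, reduce $\Hom{G}{H}$ to directed counts over the contract set $\contract(H)$ and its orientations (your ``quotient / inclusion--exclusion'' step is the paper's \Lem{contracted} plus \Eqn{undirect_to_direct}), invoke \Lem{dtw_bound} to control $\ldtw{\lparam}$ under contraction --- exactly the obstacle you flag --- and run Bressan's tree DP on the $\lparam$-skeleton with a hyperedge-filtering pass. The one place you slightly understate the work is the claim that the combination step ``carries over unchanged'': because the filtered list $\Phi'$ of partial maps (your skeleton-enumerate-then-check step) is only a \emph{subset} of all homomorphisms of the possibly disconnected induced sub-pattern $\vec H[B]_\lparam$, the paper's correctness lemma for the DP is stated as ``exact on partial maps with nonzero extension to $\vec H$, and an undercount otherwise'' rather than exact everywhere; this is the routine adaptation you anticipate, and since zero-extension maps contribute nothing at the root, the final sum is unaffected.
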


\subsubsection{Characterizing the hypergraphs with l-\dagtw{} equal to $1$}

To find near linear time countable instances of $H$, we analyze which instances of $H$ will have $\ldtw{\lparam} = 1$ in all its acyclical orientations. In \Sec{characterize} we prove that being $\setH{\lparam}$ ITS free is the exact condition for having  $l$-\dagtw{} equal to $1$. This, together with \Thm{computing} gives an upper bound in the complexity of counting homomorphisms of $\setH{\lparam}$ ITS free hypergraphs, giving the upper bound of \Thm{main}.

\begin{restatable}[]{theorem}{characterize}
\label{thm:characterize}  
   Let $\lparam \in \Z^+ \cup \{\infty\}$ and $H$ a hypergraph. $\dtw_l(H)=1$ if and only if $H$ is $\setH{l}$ ITS free.
\end{restatable}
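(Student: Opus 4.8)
The plan is to prove both directions of the equivalence by analyzing the structure of a DAG-tree decomposition of the $\lparam$-skeleton $\vec{H}^{(\lparam)}$ over all acyclic orientations $\vec{H}$ of $H$. Recall $\dtw_\lparam(H) = 1$ means that for \emph{every} ordering $\ord$ of $V(H)$, the DAG-treewidth of the $\lparam$-skeleton equals $1$, i.e.\ there is a DAG-tree decomposition whose source bags are all singletons. By Bressan's characterization (which I would restate in the language of the excerpt), $\dtw = 1$ for a DAG $\vec{D}$ is equivalent to: for every source $s$ (vertex of $\lparam$-outdegree-reachability interest), the ``reachable set'' from $s$ can be covered by a tree of single-source bags; combinatorially this is closely tied to the absence of two sources whose reachable sets overlap in a way that forces a bag of size $\geq 2$. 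I would isolate the precise combinatorial obstruction to $\dtw = 1$ as a \emph{reachable $k$-cycle}-type configuration in the skeleton: $k$ vertices $c_1,\dots,c_k$ and, for each $i$, a directed path in the $\lparam$-skeleton from a common ``high'' vertex into $C\setminus\{c_i\}$ avoiding $c_i$, all of whose internal vertices lie in a single component of $\vec{H}^{(\lparam)} - C$. This is the skeleton-level shadow of \Def{obstruct}.

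For the direction ``$\setH{\lparam}$ ITS free $\Rightarrow \dtw_\lparam(H) = 1$'': I would argue contrapositively. Suppose some orientation $\vec{H}$ has $\dtw_\lparam(\vec{H}^{(\lparam)}) \geq 2$. By the combinatorial obstruction above, there is a set $C = \{c_1,\dots,c_k\}$, $k \geq 3$, witnessing it, together with, for each $i$, a connector (a set of hyperedges whose $\lparam$-skeleton arcs realize the required directed reachability) lying in one component $D_i$ of the $\infty$-trimmed subhypergraph on $V(H)\setminus C$. I would then check that $C$ together with the induced hyperedges $E_1, E_2, \dots$ on $C \cup D_i$ satisfies the three conditions of \Def{obstruct}: condition~1 (empty $\lparam$-trimmed core) holds because a hyperedge fully inside $C$ up to $\lparam$ outside vertices would add a skeleton arc that collapses the reachability obstruction (this is exactly where the value of $\lparam$ enters, via the definition of the $\lparam$-skeleton); condition~3 ($E_i \supseteq C\setminus\{c_i\}$, $c_i \notin E_i$) is a restatement of the reachability-avoiding-$c_i$ requirement; condition~2 (no $E_j \supseteq C$) holds because such an $E_j$ would again give skeleton arcs covering all of $C$ from one component, killing the obstruction. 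Passing from ``$\lparam$-skeleton reachability'' back to ``hyperedges of $H$'' is the bookkeeping-heavy part, and I would want \Clm{l-infty} as a sanity check for the $\lparam = \infty$ endpoint. Hence $H$ contains a member of $\setH{\lparam}$ as an induced $\infty$-trimmed subhypergraph (restricting $H$ to $C \cup D_1 \cup D_2 \cup \cdots$, or to a suitable sub-union, yields the actual obstruction, using the closure properties noted in the remarks after \Def{obstruct}).

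For the converse ``$H$ contains an $\lparam$-obstruction as an induced $\infty$-trimmed subhypergraph $\Rightarrow \dtw_\lparam(H) \geq 2$'': given such an obstruction on core $C$ with connectors $E_1, E_2, \ldots$, I would exhibit an explicit ordering $\ord$ of $V(H)$ making $\dtw_\lparam(\vec{H}^{(\lparam)}) \geq 2$. Order the core vertices $C$ \emph{first} (lowest), then the connector vertices, then the rest. Because the $\lparam$-trimmed subhypergraph on $C$ is empty (condition~1), no skeleton arc runs between two core vertices, so each $c_i$ is its own source in $\vec{H}^{(\lparam)}$. Each connector $E_i$ supplies, via the $\lparam$-skeleton construction (the first $\lparam+1$ vertices of a hyperedge point to all later ones), directed paths from the connector vertices up to $C\setminus\{c_i\}$; condition~3 guarantees these avoid $c_i$, and condition~2 guarantees no single connector reaches all of $C$. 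A short argument (mirroring the graph case of \cite{BePaSe21,BeGiLe+22} for induced cycles of length $\geq 6$, with $k=3$ as the base case and the figures in \Sec{perspective} for $k \geq 4$) shows that then no DAG-tree decomposition can have all singleton source bags: any bag containing the source used to ``enter'' component $D_i$ must also retain at least one core vertex to continue covering the cyclic reachability, forcing a source bag of size $\geq 2$. Because this holds for the exhibited ordering, $\dtw_\lparam(H) \neq 1$.

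The main obstacle I anticipate is the first direction: extracting, from the mere assumption $\dtw_\lparam \geq 2$ in \emph{some} orientation, a core $C$ and connectors satisfying \emph{all three} technical conditions of \Def{obstruct} simultaneously — in particular arranging condition~2 and condition~3 together (so that the connectors neither over-cover nor under-cover $C$) and verifying that after restricting $H$ to the relevant vertices one still has exactly an $\lparam$-obstruction rather than something that merely contains one messily. This is precisely the subtlety illustrated by Figures~\ref{fig:nocover} and~\ref{fig:nocoverb}, where splitting or disconnecting a connector destroys the obstruction at the current core but can resurrect it on a sub-core; handling that case analysis cleanly (probably by an induction on $k$, peeling off one core vertex at a time as in \Fig{nocoverb}) is where the real work lies.
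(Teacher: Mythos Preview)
Your ``easy'' direction (obstruction $\Rightarrow \dtw_\lparam \geq 2$) contains a concrete error: the ordering is backwards. If you place the core $C$ \emph{first} (lowest), then in any hyperedge $e$ containing two core vertices $c_a, c_b$, those vertices occupy the earliest positions of $e$; in particular $c_a$ sits at position $\leq \lparam+1$, so the $\lparam$-skeleton \emph{does} contain the arc $c_a \to c_b$. Condition~1 (the $\lparam$-trimmed core is empty) only tells you that such a hyperedge has at least $\lparam+1$ vertices \emph{outside} $C$ --- it says nothing about their position in the ordering. Likewise, with core first, all skeleton arcs point \emph{from} core \emph{into} the connectors, so there are no ``directed paths from the connector vertices up to $C\setminus\{c_i\}$'' as you claim. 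The paper uses the opposite ordering: connector vertices $V_i$ first, then the core $X$, then the rest. Now the $\geq \lparam+1$ connector vertices in any multi-core hyperedge occupy positions $1,\dots,\lparam+1$, pushing both core vertices past position $\lparam+1$ and killing any core--core skeleton arc; and a chosen vertex $y_i$ in each $V_i$ reaches $X\setminus\{x_i\}$, so $\{y_i\},\{x_i\}$ form a reachable $k$-simplex.

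For the hard direction, your sketch is missing the structural entry point. The paper does not extract the obstruction by a direct induction on $k$; it goes through the equivalence $\dtw=1 \Leftrightarrow$ the reachability hypergraph is $\alpha$-acyclic, and then invokes the Beeri--Fagin--Maier--Yannakakis characterization of $\alpha$-acyclicity, which yields exactly two obstruction types: a \emph{reachable $k$-cycle} or a \emph{reachable $k$-simplex}. These are handled by separate, quite different arguments (\Lem{cycle_to_seth} via a ``first reachable cycle'' extremal argument, and \Lem{simplex_to_seth} via the substantial \Lem{groups}, which shows that no connected reachability-component can cover all of $X$). Your proposal conflates these into a single ``reachable $k$-cycle-type configuration'' and hopes to peel off core vertices inductively; without the cycle/simplex dichotomy and something playing the role of \Lem{groups}, there is no mechanism to ensure that the extracted connectors satisfy condition~2 (no $E_j \supseteq C$) --- which is precisely the delicate point your final paragraph flags but does not resolve.
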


\subsubsection{The hardness}

In \Sec{hardness}, we prove that our characterization is tight. The proof combines the color-coding techniques introduced in \cite{AlYuZw94}, with a construction that allows us to relate colorful homomorphism counting of patterns in $\setH{\lparam}$ with finding colorful simplices. This theorem, together with \Conj{simplex} gives the hardness result of \Thm{main}.

\begin{restatable}[]{theorem}{hardness}
    Let $\lparam \in \mathbb{Z}^+\cup\infty$ and let $H$ be a hypergraph that contains $H' \in \setH{l}$. If there exists an algorithm that computes $\Hom{G}{H}$ in time $f(\ldegen{l})O(n^{\gamma})$ for all $G$ and some $\gamma>1$, then there is an algorithm that, for some $k\geq 2$, detects if a regular arity $k+1$ hypergraph $G$ contains a $k$-simplex in time $\tilde{O}(m^{\gamma})$.
\label{thm:hardness}
\end{restatable}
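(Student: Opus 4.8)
The plan is a reduction from $k$-simplex detection, following the template of the graph-case hardness results \cite{BePaSe21,BeGiLe+22}: color coding, plus a bounded-degeneracy gadget in which a colorful simplex of the input is forced to appear as a colorful copy of the obstruction. Write $H'\in\setH{\lparam}$ for the obstruction contained in $H$ as an induced $\infty$-trimmed subhypergraph, with core $C=\{c_1,\dots,c_p\}$ (so $p\ge 3$), internal components $D_1,D_2,\dots$ of the $\infty$-trimmed subhypergraph on $V(H')\setminus C$, and connectors $E_1,E_2,\dots$ reindexed (condition~3 of \Def{obstruct}) so that for $i\in[p]$ the connector $E_i$ spans $C\setminus\{c_i\}$ and avoids $c_i$. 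Contracting each $E_i$ to the hyperedge $C\setminus\{c_i\}$ turns the core together with its connectors into the $(p-1)$-simplex on $C$; set $k:=p-1\ge 2$. Given a hypergraph $G'$ in which to detect a $k$-simplex, a uniformly random $p$-coloring $\chi$ of $V(G')$ makes any fixed $k$-simplex \emph{rainbow} (all $p$ vertices distinctly colored) with probability $p!/p^{p}=\Omega(1)$, so after $O(\log m')$ independent colorings it suffices to count rainbow $k$-simplices for a fixed $\chi$; discarding isolated vertices and non-rainbow hyperedges, we may take $G'$ to be $(k+1)$-partite and $k$-uniform with $m'$ hyperedges and $O(m')$ vertices.

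From $(G',\chi)$ I would build a host $G$ with a vertex coloring $c:V(G)\to V(H)$ as follows. It has $p$ \emph{core slots}, slot $j$ being a fresh copy of the color class $\chi^{-1}(j)$, all of whose vertices are colored $c_j$. For every connector $E_i$ and every hyperedge $e'\in E(G')$ with color set $[p]\setminus\{i\}$, $G$ contains a fresh copy of $D_i$ (colored by the vertices of $D_i$) together with a copy of every hyperedge of $E_i$: the $C$-part of such a hyperedge is pinned to the slot vertices named by $e'$ and its $D_i$-part uses the fresh copy (likewise for the extra connectors, with the appropriate color sets). Finally the extra vertices $W:=V(H)\setminus V(H')$ become single fresh vertices $x_w$ ($w\in W$), colored $w$ and shared across all gadget hyperedges to which $w$ belongs: since, by the induced $\infty$-trimmed hypothesis, $f\cap V(H')$ is for every $f\in E(H)$ a hyperedge of $H'$ (lying in a unique $E_i$) or has at most one vertex, we extend every gadget hyperedge realizing $f\cap V(H')$ by $\{x_w:w\in f\cap W\}$, adding analogous small gadgets when $|f\cap V(H')|\le 1$. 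The role of the induced $\infty$-trimmed hypothesis is exactly that this attachment realizes \emph{every} hyperedge of $H$ while creating no hyperedge among core slots and no ``chord'' joining two distinct components $D_i,D_j$, so the embedded simplex is not short-circuited; one checks $|V(G)|,|E(G)|=O(m')$.

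Next I would check $\ldegen{\lparam}(G)=O(1)$ (depending only on $H$): orient every hyperedge of $G$ with its $D_i$-copy vertices first, then its core-slot vertices, then the $O(1)$ vertices $x_w$. By condition~1 of \Def{obstruct}, a hyperedge of $H'$ with $\ge 2$ core vertices has $>\lparam$ vertices outside $C$; hence whenever a core-slot vertex falls among the first $\lparam+1$ positions of a hyperedge of $G$, that hyperedge has at most one core-slot vertex, so in the $\lparam$-skeleton every core-slot vertex sends edges only to the $x_w$'s, every $x_w$ sends none, and every $D_i$-copy vertex lies in $O(1)$ hyperedges of $G$ — the $\lparam$-out-degree of the orientation is $O(1)$, and $\ldegen{\lparam}(G)=O(1)$. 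The crux of the proof is then the correspondence: I claim the number of $H$-homomorphisms into $G$ that are surjective on colors equals $\beta\cdot(\text{number of rainbow }k\text{-simplices in }G')$ for a positive constant $\beta=\beta(H)$. The ``$\ge$'' direction directly encodes a rainbow simplex $\hat v_1,\dots,\hat v_p$ by sending $c_j$ to the slot-$j$ copy of $\hat v_j$, routing each $E_i$ through the copy indexed by the hyperedge $\{\hat v_l:l\ne i\}$ and the $D_i$- and $W$-parts canonically, and then counting the $O(1)$ admissible routings. The ``$\le$'' direction is the main obstacle: a surjective-on-colors homomorphism $\phi$ must (because every hyperedge of $G$ carries the color set of a hyperedge of $H$, and only core-slot vertices carry the core colors) map $C$ into the slots covering all $p$ core colors; since each $D_i$ is connected, $\phi$ cannot split the $D_i$-part of $E_i$ across two gadget copies, so it sends $D_i$ into a single copy, and because $E_i$ spans all of $C\setminus\{c_i\}$ (condition~3) while no connector spans all of $C$ (condition~2 — which is also what makes $k$-uniformity of $G'$ sufficient) this forces $\{\hat v_l:l\ne i\}\in E(G')$ for every $i$, i.e.\ $\hat v_1,\dots,\hat v_p$ is a rainbow $k$-simplex; a count of the fibre over each simplex gives the constant $\beta$. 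Getting this ``no spurious homomorphism'' argument and the handling of $W$ exactly right is the delicate part.

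Finally I would strip the auxiliary coloring by the standard color-coding inclusion--exclusion: $\sum_{A\subseteq V(H)}(-1)^{|V(H)|-|A|}\Hom{G[A]}{H}$, where $G[A]$ is the subhypergraph of $G$ induced on the vertices colored within $A$, counts exactly the $H$-homomorphisms into $G$ that are surjective on colors, hence (by the correspondence) $\beta\cdot(\text{number of rainbow }k\text{-simplices in }G')$. Each $G[A]$ is an induced subhypergraph of $G$, so $\ldegen{\lparam}(G[A])\le\ldegen{\lparam}(G)=O(1)$, and there are only $2^{|V(H)|}=O(1)$ of them; running the hypothesised $f(\ldegen{\lparam})\cdot O(n^{\gamma})$-time algorithm on each of these $O(m')$-vertex hosts, and amplifying over the $O(\log m')$ colorings, decides whether $G'$ contains a $k$-simplex in $f(O(1))\cdot O(m'^{\gamma})\cdot O(\log m')=\tilde O(m'^{\gamma})$ time, proving the statement.
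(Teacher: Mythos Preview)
Your proposal is correct and follows essentially the same route as the paper: a gadget host that pins each connector to a rainbow hyperedge of $G'$, bounded $\lparam$-degeneracy via the ordering (connector copies first, then core slots, then the external vertices) that exploits condition~1 of \Def{obstruct}, the colorful-$H$/colorful-simplex correspondence, and inclusion--exclusion plus color-coding to strip the auxiliary coloring. The paper packages these as \Lem{colorful_hom}, \Lem{colorful_simplex}, \Lem{construct_hgl}, and \Lem{equivalence_simplex_to_pattern}, but the content is the same.

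One small caution: your claim that the surjective-on-colors count equals $\beta\cdot(\text{rainbow simplex count})$ for a constant $\beta=\beta(H)$ can fail when there are extra components $D_j$ with $j>p$ whose connector $E_j$ meets the core only in a proper subset of $C\setminus\{c_{i_j}\}$ --- the number of admissible gadget copies through which $D_j$ can be routed then depends on $G'$, not only on $H$. The paper sidesteps this by first passing to the equivalent \Def{seth}, which merges the extra components into exactly $k$ sets $V_0,\dots,V_{k-1}$ each spanning all of $X\setminus\{x_i\}$; and in any case it only asserts the positivity equivalence (\Lem{equivalence_simplex_to_pattern}: colorful copy of $H$ exists iff colorful $k$-simplex exists) rather than an exact count, which is all that is needed for detection and which your argument already establishes.
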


\subsubsection{Computing subhypergraph counts}

Finally, in \Sec{sub} we extend the homomorphism result to subhypergraphs. The number of subhypergraphs of a pattern can be written as a linear combination of the homomorphism counts of each hypergraph in its quotient set (see \Lem{hombasis}). This directly gives the upper bound of \Thm{sub}.

In a recent paper, Bressan et al. showed that one can extend the hardness of hypergraph homomorphism counting to subhypergraph counting using the \emph{homomorphism basis}~\cite{BrBrDe+25}, analogous to the graph case \cite{CuDeMa17}. We show that one can also use the homomorphism basis to extend more fine-grained bounds in hypergraphs.
\newpage
\section{Preliminaries}

\subsection{Hypergraphs and homomorphisms} A \textbf{hypergraph} $G$ is a generalization of a graph defined by a vertex set $V(G)$ and a hyperedge set $E(G)$ where every hyperedge $e \in E(G)$ is a non-empty subset of the vertices in $V(G)$ \footnote{An alternative way of defining hypergraphs uses an incidence function $f_G$ which maps every edge $e \in E(G)$ to a subset of $V(G)$. In our context, both definitions are equivalent.}. We assume that the hypergraphs are simple, that is, they do not contain duplicated hyperedges.

The \textbf{arity} of a hyperedge $e$ is the number of vertices that is incident to, $|e|$. The rank $r(H)$ of a hypergraph $H$ is the maximum arity of its hyperedges. Graphs corresponds to the special case of hypergraphs of regular arity $2$ (where all hyperedges have arity exactly equal to $2$).
The \textbf{degree} $d_{H}(v)$ of a vertex $v$ in a hypergraph $H$ is the number of hyperedges that contain the vertex. If $H$ is clear from the context, we might write $d(v)$ instead.

A \textbf{homomorphism} from a hypergraph $H$ to a hypergraph $G$ is a map $\phi: V(H) \to V(G)$ such that for each $e \in E(H)$ we have $\{\phi(v):v \in e\} \in E(G)$, that is, for every hyperedge in $H$ the image of its vertices in $G$ corresponds exactly with one of its hyperedges\footnote{An alternative way to define homomorphisms in hypergraphs uses two maps, one for vertices and one for edges, however, because we assume our input graph to be simple, this is not necessary.}. Note that a hyperedge in $H$ can be mapped to an edge in $G$ of equal or lesser arity, but never to an edge of greater arity. Hence, we can ignore all edges of $G$ with arity greater than the rank of $H$. We also define \textbf{arity-preserving homomorphisms} as a map $\phi': V(H) \to V(G)$ such that $\forall e \in E(H), \exists e' \in E(G) \text{ s.t. }\{\phi(v):v \in e\} = e' \text{ and } |e| = |e'|$.

We use $\Phi(H,G)$ to denote the collection of homomorphisms from $H$ to $G$. We use $\Hom{G}{H}$ for the number of homomorphisms from $H$ to $G$, $|\Phi(H,G)|$. Note that in the case of hypergraphs with arity $2$ in all their hyperedges, this problem corresponds to the standard homomorphism counting problem in graphs. We use $\Homr{G}{H}$ for the number of arity-preserving homomorphisms from $H$ to $G$.

We use $H^c$ to denote the \textbf{clique completion} of $H$, that is, the graph obtained by replacing every hyperedge $e$ of $H$ by a clique of size $|e|$. $E(H^c) = \{e' \subseteq e | e \in E(H), |e'| = 2\}$.

Additionally, we will use the following notation regarding homomorphisms:
\begin{itemize}
	\item We use $V(\phi)$ to represent the \textbf{domain} of a homomorphism, that is, the vertices being mapped by it.
	\item We say that two homomorphisms $\phi , \phi'$ \textbf{agree} if for every vertex $v$ in $V(\phi) \cap V(\phi')$, $\phi(v) = \phi'(v)$.
	\item The \textbf{restriction} of $\phi$ to a set $V' \subseteq V(\phi)$, is the homomorphism $\phi'$ with $V(\phi')=V'$ and $\phi'(v) = \phi(v)$ for all $v \in V'$.
	\item Let $H'$ be a subhypergraph of $H$. The \textbf{extension} of a homomorphism $\phi:V(H')\to V(G)$ to $H$, $\ext{\phi}{H}{G}$, is the number of homomorphisms $\phi': V(H) \to V(G)$ that agree with $\phi$.
\end{itemize}

\subsection{Subhypergraphs}

In the standard notion of subhypergraphs, every hyperedge of the subhypergraph must be completely contained in the set of vertices of the subhypergraph. Formally, a hypergraph $H'=(V(H'),E(H'))$ is a \textbf{subhypergraph} of $H=(V(H),E(H))$ if $V(H')\subseteq V(H)$ and $E(H') \subseteq \{e \in E(H) | e \subseteq V(H')\}$. Given two hypergraphs, we will use $\Sub{G}{H}$ for the number of subhypergraphs of $G$ that are isomorphic to $H$.

The quotient set of a hypergraph is the set of all possible hypergraphs that can be obtained by merging partitions of its vertex set. 

\begin{definition} [\textbf{Quotient set} ($\cQ(H)$) \cite{BrBrDe+25}] \label{def:quotient}
	Let $H$ be a hypergraph and $\tau = \{V_1,...,V_l\}$ a partition of its vertices. For a set of vertices $X\subseteq V$, the quotient of $X$ under $\tau$ is $X/\tau = \{V_i \in \tau : |V_i \cap X|>0\}$. For a hypergraph $H$, the quotient of $H$ under $\tau$ is the hypergraph $H/\tau = (V(H)/\tau, \{ e/\tau : e \in E(H)\})$.
	$\cQ(H)$ is the set of all the quotients of $H$.
\end{definition}

Note that the image of every homomorphism of $H$ in a graph $G$ corresponds to a subhypergraph of some pattern in the quotient set of $H$. \cite{BrBrDe+25} showed that like in the general graph case, this can be used to express the number of subhypergraphs of $H$ as a linear combination of the number of homomorphisms of all the patterns in its quotient set.

\begin{lemma} \cite{BrBrDe+25} \label{lem:hombasis}
	For all hypergraphs $H$ there is a function $\gamma : \cQ(H) \to \mathbb{Q}$ such that, for all hypergraph $G$ we have:
	\[
	\Sub{G}{H} = \sum_{H' \in \cQ(H)} \gamma(H') \Hom{G}{H'}
	\]
	Moreover, for all $H' \in \cQ(H)$, $\gamma(H') \neq 0$.
\end{lemma}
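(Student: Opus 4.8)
The plan is to run the standard inclusion–exclusion argument (Lovász / Curticapean–Dell–Marx style) over the partition lattice $\Pi(V(H))$, adapted to hypergraphs, and to extract non-vanishing of every coefficient from the sign-consistency of the Möbius function of $\Pi(V(H))$.

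First I would set up the bridge between homomorphisms, embeddings, and subhypergraphs. Write $\mathrm{Inj}(F,G)$ for the number of injective homomorphisms (embeddings) from $F$ to $G$; since two embeddings of $H$ into $G$ have the same image subhypergraph exactly when they differ by an automorphism of $H$, and every copy of $H$ in $G$ arises this way, we get $\mathrm{Inj}(H,G)=|\mathrm{Aut}(H)|\cdot\Sub{G}{H}$. Next, every homomorphism $\phi\colon H\to G$ factors uniquely through the quotient $H/\rho_\phi$, where $\rho_\phi$ is the partition of $V(H)$ into preimage classes of $\phi$, and the induced map $H/\rho_\phi\to G$ is an embedding; conversely, composing an embedding of $H/\rho$ with the quotient map produces a homomorphism whose preimage partition is exactly $\rho$. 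Hence for every hypergraph $F$ and every $\rho\in\Pi(V(F))$,
\[
\Hom{G}{F/\rho}\;=\;\sum_{\sigma\ge\rho}\mathrm{Inj}(F/\sigma,G),
\]
where $\sigma$ ranges over coarsenings of $\rho$ in $\Pi(V(F))$ and $(F/\rho)/\rho'=F/\sigma$ under the evident correspondence $\rho'\leftrightarrow\sigma$. Applying this with $F=H$ and Möbius inversion in $\Pi(V(H))$ gives
\[
\mathrm{Inj}(H,G)\;=\;\sum_{\sigma\in\Pi(V(H))}\mu_{\Pi}(\hat 0,\sigma)\,\Hom{G}{H/\sigma},
\]
with $\hat 0$ the all-singletons partition. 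Grouping the $\sigma$'s by the isomorphism type of $H/\sigma$ (homomorphism counts being isomorphism-invariant) and dividing by $|\mathrm{Aut}(H)|$ yields the claimed identity $\Sub{G}{H}=\sum_{F\in\cQ(H)}\gamma(F)\Hom{G}{F}$ with $\gamma(F)=\frac{1}{|\mathrm{Aut}(H)|}\sum_{\sigma:\,H/\sigma\cong F}\mu_{\Pi}(\hat 0,\sigma)$.

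The remaining — and only delicate — step is $\gamma(F)\neq 0$ for all $F\in\cQ(H)$, since the signed sum defining $\gamma(F)$ could a priori cancel. The key observation is that $|V(H/\sigma)|$ equals the number of blocks of $\sigma$, so every $\sigma$ contributing to $\gamma(F)$ has exactly $|V(F)|$ blocks. Together with the product formula $\mu_{\Pi}(\hat 0,\sigma)=\prod_{B\in\sigma}(-1)^{|B|-1}(|B|-1)!$, this forces each contributing $\mu_{\Pi}(\hat 0,\sigma)$ to have the single fixed sign $(-1)^{|V(H)|-|V(F)|}$, and each such term is nonzero; since $F\in\cQ(H)$ makes the index set nonempty, $\gamma(F)$ is a nonempty sum of nonzero terms all of one sign, hence nonzero. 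I expect this sign-consistency point to be the main conceptual obstacle; everything else is routine bookkeeping. The only hypergraph-specific care needed is handling quotients in which a hyperedge shrinks or two hyperedges collapse into one (and whatever convention one adopts for arity-$1$ hyperedges in $\cQ(H)$) — none of this affects the identities or the Möbius sign argument.
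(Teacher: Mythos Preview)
Your argument is correct and is precisely the standard partition-lattice M\"obius inversion with the sign-consistency observation for non-vanishing. Note, however, that the paper does not supply its own proof of this lemma: it is quoted from \cite{BrBrDe+25} (who in turn adapt the graph case of Curticapean--Dell--Marx), so there is nothing in the present paper to compare against beyond confirming that your write-up matches the cited approach.
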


%

\subsection{Induced Subhypergraphs and Induced Trimmed Subhypergraphs}

\begin{definition}[Induced subhypergraph]
	For any subset $S \subseteq V(G)$, the induced subhypergraph of $G$ on $S$ is the hypergraph $G ( S ) = (S, E( S ))$ with $E( S ) =  \{ e \in E(H) \mid e\subseteq S \}$.
\end{definition}

We can obtain a more \emph{relaxed} definition of an induced subhypergraph by including hyperedges that are only partially included in the vertex set of the subhypergraph. This was called a shrinkage in \cite{ScSt21} and a trimmed subhypergraph in \cite{BrBrDe+25}. We will refer to it as the latter in this paper.

\begin{definition} [Induced Trimmed Subhypergraph]
	For any subset $S \subseteq V(G)$, the induced trimmed subhypergraph of $G$ on $S$ is the hypergraph $G \langle V \rangle = (S, E\langle S \rangle)$ with $E\langle S \rangle =  \{ e \cap S \mid e \in E(H),\ e \cap S \not= \emptyset \}$.
\end{definition}

Note that, if $H$ is a graph, then all induced trimmed subhypergraphs are also induced subhypergraphs. Let $\cH$ be a set of hypergraphs, we say that the hypergraph $H$ is $\mathbf{\cH}$\textbf{-ITS free} (Induced trimmed subhypergraph free) if no induced trimmed subhypergraph of $H$ is isomorphic to $H'$.

We can also define intermediary notions of induced subhypergraphs where we allow edges that contain only a limited number of vertices that are not part of the vertex set of the subhypergraph. We call this an induced $l$-trimmed subhypergraph.

\begin{definition} [Induced $\lparam$-trimmed subhypergraph] 
	Let $\lparam \in \Z^+ \cup \{\infty\}$. For any subset $S \subseteq V(G)$,
	the induced $\lparam$-trimmed subhypergraph of $G$ on $S$ is the hypergraph $G \langle S \rangle_\lparam = (S, E \langle S \rangle_\lparam)$ with 
	$E \langle S \rangle_\lparam = \{e \cap S \ | \ e \in E(G), |e \setminus S| \leq \lparam, e \cap S \not = \emptyset\}$.
\end{definition}

Note that an induced subhypergraph is equivalent to an induced $0$-trimmed subhypergraph, and a trimmed subhypergraph corresponds to a $\infty$-trimmed subhypergraph. Additionally, note that for a fixed subset $S \subseteq V$, $E \langle S \rangle_\lparam \subseteq E \langle S \rangle_{(\lparam+1)}$.  

\subsection{Directed Acyclic Hypergraphs}

Like in the standard graph homomorphism counting problem, we will be exploiting specific orientations of the input graph to compute the number of homomorphisms. Hence, we need to define directed hypergraphs. In the most common definition of a \textbf{directed hypergraph} \cite{AuLa17}, each hyperedge is defined by two sets of vertices: the tail and the head; this directly generalizes directed graphs where, for every directed edge, one vertex is the tail and another the head. 

For our purposes, we will think of directed hypergraphs as given by an ordering of its vertices. Each directed hyperedge will hence be given by an ordered list of vertices, which matches the ordering of the hypergraph. Directed hypergraphs defined this way are equivalent to directed acyclic graphs, and we will call them Directed Acyclic Hypergraphs. A \textbf{Directed Acyclic Hypergraph (DAH)} $\vec{G}$, is a hypergraph $H$ with an ordering of its vertices $\ord: V(G) \to [V(G)]$, every hyperedge $e \in E(G)$ becomes an ordered list in $E(\vec{G})$, where the order of each hyperedge follows the ordering of $G$.

We use $\Sigma(H)$ to denote the set of acyclical orientations of $H$, up to isomorphism.

For a directed acyclic hypergraph, a \textbf{source} is a vertex that is first in the ordering of all hyperedges to which it belongs. We use $S(\vec{H})$ to denote the set of sources of the directed acyclic hypergraph $\vec{H}$.

For a directed acyclic hypergraph $\vec{H}$ we define its \textbf{skeleton} $\Skeleton{0}{\vec{H}}$ as the directed acyclic graph obtained by replacing every directed hyperedge by an out-star centered at the first vertex of the edge, that is, a set of directed edges connecting the first vertex in the ordering of the hyperedge to every other vertex in it. 

Similarly, we can define the $\mathbf{l}$\textbf{-skeleton} $\Skeleton{l}{\vec{H}}$ of $\vec{H}$ as the graph obtained by replacing every directed hyperedge $e$ by a set of directed edges connecting each of the first $l+1$ vertices in $e$ to all vertices that appear after it in the internal ordering of $e$. See \Fig{skeleton} for an example.

Note that the $\infty$-skeleton of a hypergraph $\vec{H}$ corresponds to the acyclic orientation of the clique completion of $\vec{H}$ that matches the vertex ordering of $\vec{H}$. Note that for all $l$, the set of sources of a hypergraph and its $l$-skeletons are the same. $S(\vec{H})=S(\Skeleton{l}{\vec{H}})$.

Given two directed acyclic hypergraphs $\vec{H},\vec{G}$, a \textbf{DAH homomorphism} is a mapping $\phi: V(\vec{H}) \to V(\vec{G})$, such that for every directed hyperedge $e \in E(\vec{H})$ we have a $\phi(e) \in E(\vec{G})$. Note that because the directed hyperedges are ordered lists, they will necessarily preserve arity. Moreover, note that for $e = (v_1, v_2,..., v_{|e|})$, we will have $\phi(e) = (\phi(v_1), \phi(v_2),..., \phi(v_{|e|})$, and the internal hyperedge ordering will also be preserved.

\subsection{Reachability and outdegree}

The concept of reachability in graphs is central to the homomorphism counting techniques based on \dagtw{} (which we will define next). In a directed graph, we say that a vertex $u$ \textbf{reaches} $v$ if there is a direct path connecting $u$ to $v$. However, in a directed hypergraph, we can think of reachability in different ways.

Using the standard definition of directed hypergraphs, reaching the tail set of a hyperedge allows one to reach the vertices of the head. The most naive way to think about reachability in a directed acyclic hypergraph $\vec{H}$ would be to have the first vertex in the internal ordering of an edge reach all the other vertices in the edge; this would match the reachability of the $0$-skeleton of $\vec{H}$. It is also equivalent to the standard notion of reachability in hypergraphs when setting the first vertex in each hyperedge as tail and the rest as head. 

Furthermore, we can consider alternative notions of reachability by instead allowing the first $l+1$ vertices in a hyperedge reach the vertices appearing after it. Note that this would be equivalent to the reachability of the $l$-skeleton of the directed acyclic hypergraph.

We say that a vertex $u$ $\lparam$-\textbf{reaches} $v$ in a DAH $\vec{H}$, if $u$ reaches $v$ in the $\lparam$-skeleton of $\vec{H}$.

In a directed acyclic graph $\vec{G}$, we will use $\Reach_{\vec{G}}(s)$ for the set of vertices reachable by $s$ in $\vec{G}$, and $\Reach_{\vec{G}}(S)$ for the union of $\Reach_{\vec{G}}(s)$ for all $s \in S$. Moreover, we will use $\vec{G}[s]$ and $\vec{G}[S]$ for the subgraph of $G$ induced by $\Reach_{\vec{G}}(s)$ and $\Reach_{\vec{G}}(S)$ respectively. 

Similarly for a DAH $\vec{H}$, we will $\Reach_{\vec{H}, \lparam}(s)$ (or $\Reach_{\vec{H}, \lparam}(s)$ if clear from the context) for the set of vertices $l$-reachable by $s$ in $\vec{H}$. Note that $\Reach_{\vec{H}, \lparam}(s) = \Reach_{\Skeleton{\lparam}{\vec{H}}}(s)$. We use  $\Reach_{\vec{H}, \lparam}(S)$ for the union of $\Reach_{\vec{H}, \lparam}(s)$ for all $s \in S$. And we will use $\vec{H}_{\lparam}[s]$ and $\vec{H}_{\lparam}[S]$ for the ($0$-trimmed) subhypergraphs of $G$ induced by the vertices in $\Reach_{\vec{H}, \lparam}(s)$ and $\Reach_{\vec{H}, \lparam}(S)$ respectively.

Similarly to how we did for reachability, we define the $\mathbf{\lparam}$\textbf{-outdegree} of a vertex $v$ in a DAH $\vec{H}$, $d^+_{\vec{H},\lparam}(v)$ as the outdegree in the $l$-skeleton. We will use $\Delta^+(\vec{H})$ to denote the maximum $l$-outdegree of $\vec{H}$.

\subsection{The DAG-tree decomposition}

Bressan \cite{Br19, Br21} introduced the concept of \dagtree{} of a directed acyclic graph to find efficient algorithms for homomorphism counting in degenerate graphs.

\begin{definition}[\dagtree{} \cite{Br19}]
    For a given DAG $\vec{H}$, a \dagtree{} of $\vec{H}$ is a rooted tree $T=(\cB,\cE)$ such that:
    \begin{itemize}
        \item Each node of the tree $B \in \cB$ is a subset of the sources of $\vec{H}$, $B \subseteq S(\vec{H})$.
        \item Every source $s \in S(\vec{H})$ appears in at least one node $B \in \cB$, $\bigcup_{B \in \cB} B = S(\vec{H})$.
        \item $\forall B, B_1, B_2 \in \cB$. If $B$ is in the (unique) path between $B_1$ and $B_2$ in $T$, then $\Reach_{\vec{H}}(B_1) \cap \Reach_{\vec{H}}(B_2) \subseteq \Reach_{\vec{H}}(B)$. 
    \end{itemize}
\end{definition}

The \dagtw{} $\dtw(T)$ of a \dagtree{} $T$ is the maximum size among all the nodes in $\cB$. The \dagtw{} $\dtw(\vec{H}$ of a DAG $\vec{H}$ is the minimum $\tau(T)$ of all valid \dagtree{} $T$ of $\vec{H}$. There exists a relation between \dagtw{} and $\alpha$-acyclicity.

\begin{definition}[$\alpha$-acyclicity]
    A hypergraph $H$ is $\alpha$-acyclic if there exists a tree $T$ whose vertices are the hyperedges of $F$, such that, for all $e_1,e_2,e_3 \in E(H)$ such that $e_2$ is in the unique path of $T$ between $e_1$ and $e_3$, we have $e_1 \cap e_3 \subseteq e_2$.
\end{definition}

The \textbf{reachability hypergraph} of a dag $\vec{H}$, is the hypergraph $H'$ on vertex set $V(H') = \vec{H}$ that contains, for every source $s \in S(\vec{H})$ a hyperedge $e_s = \Reach_{\vec{H}}(s)$. A dag $\vec{H}$ having $\dtw(\vec{H})=1$ is equivalent to its reachability hypergraph being $\alpha$-acyclic.

\begin{lemma}\label{lem:alpha-acyclic}
	A directed acyclic $\vec{H}$ graph has $\dtw(\vec{H})$ iff the reachability hypergraph of $\vec{H}$ is $\alpha$-acyclic.
\end{lemma}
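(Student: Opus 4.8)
The statement is, at bottom, a syntactic translation between the two definitions: a DAG-tree decomposition of $\vec H$ in which every bag is a \emph{single} source is the same object as a join tree witnessing $\alpha$-acyclicity of the reachability hypergraph. The one structural fact that makes the translation clean is that \emph{distinct sources of $\vec H$ reach distinct vertex sets}. Indeed a source $s$ has in-degree $0$, so there is no directed path ending at $s$ other than the trivial one; hence $s \notin \Reach_{\vec H}(s')$ for every source $s' \ne s$, while always $s \in \Reach_{\vec H}(s)$. Consequently $s \mapsto \Reach_{\vec H}(s)$ is a bijection from $S(\vec H)$ onto the hyperedge set $\cF$ of the reachability hypergraph $H'$. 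I would state and prove this first, and then treat the two directions.

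\textbf{($\Leftarrow$).} Suppose $H'$ is $\alpha$-acyclic, and let $T'$ be a join tree: a tree with vertex set $\cF$ such that $e_1 \cap e_3 \subseteq e_2$ whenever $e_2$ lies on the $T'$-path between $e_1$ and $e_3$. Relabel each hyperedge $e \in \cF$ by the singleton $\{s_e\}$, where $s_e \in S(\vec H)$ is the unique source with $\Reach_{\vec H}(s_e) = e$. This gives a rooted tree whose bags are singleton sets of sources, with each source used exactly once; the two covering conditions of a DAG-tree decomposition are immediate, and the third condition (the reachability-intersection condition along paths) is, via the bijection, exactly the running-intersection property of $T'$. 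Hence $\dtw(\vec H) \le 1$; since a nonempty $\vec H$ has at least one source, every DAG-tree decomposition has a nonempty bag, so $\dtw(\vec H) = 1$ (the empty case being trivial).

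\textbf{($\Rightarrow$).} Suppose $\dtw(\vec H) = 1$ and fix a DAG-tree decomposition $T$ with $\dtw(T)=1$, so every bag is a singleton $\{s\}$. The only wrinkle is that a source might occur in several bags, so I would first normalize. If $\{s\}$ occurs as two distinct nodes $B_1,B_2$ of $T$, then for every node $B$ on the $T$-path between them the third condition forces $\Reach_{\vec H}(s)=\Reach_{\vec H}(B_1)\cap\Reach_{\vec H}(B_2)\subseteq \Reach_{\vec H}(B)$; writing $B=\{s''\}$, this means $s \in \Reach_{\vec H}(s'')$, which (again by in-degree $0$ of sources) forces $s''=s$. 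So every bag on that path equals $\{s\}$, i.e. the path is monochromatic; contracting it to a single node preserves all three conditions and reduces the number of nodes. Iterating (equivalently: taking a $\dtw=1$ decomposition with fewest nodes) yields a decomposition using each source exactly once. Relabel each bag $\{s\}$ by $\Reach_{\vec H}(s)$: by the bijection this is a tree on $\cF$, and the third condition becomes precisely the running-intersection property, so $H'$ is $\alpha$-acyclic.

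\textbf{Main obstacle.} There is no genuine difficulty here; the only point requiring care is the normalization step in the ($\Rightarrow$) direction — showing that between two occurrences of the same singleton bag all intermediate bags are forced to be that same bag — and this is exactly the place where the in-degree-zero property of sources is used. Everything else is unwinding the two definitions.
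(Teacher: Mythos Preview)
The paper states this lemma without proof (it is essentially a folklore equivalence between the two definitions), so there is nothing to compare against. Your argument is correct: the bijection $s \mapsto \Reach_{\vec H}(s)$ between sources and hyperedges of the reachability hypergraph (justified by the in-degree-$0$ property of sources) turns a join tree into a width-$1$ DAG-tree decomposition and vice versa, and your normalization step in the $(\Rightarrow)$ direction---forcing all bags on a path between two copies of $\{s\}$ to equal $\{s\}$, then taking a minimum-node decomposition to kill repeated and empty bags---is exactly what is needed to make the relabeling a bijection onto the hyperedge set.
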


We can generalize the concept of \dagtree{} to directed acyclic hypergraphs. Note that the definition of \dagtree{} only depends on the reachability function of the graph. We could define multiple notions of \dagtree{s} depending on which $\lparam$-reachability of the DAH $\vec{H}$ we are using, however this turns out to be equivalent to computing \dagtree(s) for each possible $\lparam$-skeleton of $\vec{H}$. Therefore, we will instead define notions of \dagtw{} using the \dagtree of the $\lparam$-skeletons.

\begin{definition} [$\ldtw{\lparam}$] \label{def:dtw}
	Let $\lparam \in \mathbb{Z}^+\cup \{\infty\}$. The $\lparam$-\dagtw{} of a directed acyclic hypergraph $\vec{H}$ is the \dagtw{} of its $\lparam$-skeleton.
	\[
		 \ldtw{\lparam}(\vec{H}) = \dtw(\Skeleton{\lparam}{\vec{H}})
	\]
	 The $\lparam$-\dagtw{} of an undirected hypergraph $H$ is the maximum $\lparam$-\dagtw{} across all DAH $\vec{H} \in \Sigma(H)$.
\end{definition}

The \textbf{down-closure} of a bag $B$ in a \dagtree{} $\cT$, $\down(B)$, is the union of all the bags in the sub-tree of $\cT$ rooted at $B$. We will use $\vec{H}[\down(B)]_l$ for the subhypergraph of $H$ induced by all the vertices in $\Reach_{\vec{H},l}(\down(B))$.

\subsection{Degeneracy}

The degeneracy of a graph is a measurement of its sparsity. A graph $G$ is $\kappa$-degenerate if all its subgraphs have minimum degree of at most $\kappa$, while the degeneracy of $G$ is the maximum $\kappa$ such that $G$ is $\kappa$-degenerate. A similar notion exists for hypergraphs \cite{KoRo06}. 

\begin{definition} [Degeneracy]
    We say that a hypergraph $H$ is $\kappa$-degenerate if every induced subhypergraph of $H$ has a minimum degree of at most $\kappa$. The degeneracy $\degen(H)$ of a hypergraph $H$ is the minimum value of $\kappa$ such that $H$ is $\kappa$-degenerate.
\end{definition}

We can define additional notions of degeneracy by using induced $\lparam$-subhypergraphs instead of induced subhypergraphs.

\begin{definition} [$\lparam$-degeneracy]
    We say that a hypergraph is $\kappa$-$l$-degenerate if every induced $l$-trimmed subhypergraph of $G$ has a minimum degree of at most $\kappa$. The $l$-degeneracy $\ldegen{\lparam}(G)$ of a hypergraph $H$ is the minimum value of $\kappa$ such that $H$ is $\kappa$-degenerate.
\end{definition}

We have $\degen(G) = \ldegen{0}(G)$. Also, if $G$ has bounded rank, the $\ldegen{\lparam}(G)$ will be bounded by $O(\degen(G^c))$. Furthermore, note that the $\lparam$-degeneracy increases with $\lparam$, since the induced $\lparam$-trimmed subhypergraphs are subhypergraphs of the induced ($\lparam+1$)-trimmed subhypergraphs. Thus, for all $\lparam$, $\ldegen{\lparam}(G) \leq \ldegen{\lparam}(G)$.

\section{The degeneracy orientation of hypergraphs} \label{sec:degeneracy}

A property of degenerate graphs is that one can obtain an ordering of the vertices such that orienting the graph using such ordering yields a directed acyclic graph with maximum outdegree equal to the degeneracy. Moreover, such an orientation can be constructed in linear time \cite{MaBe83}.

We can show that this classical result can be extended to all the notions of $\lparam$-degeneracy that we have defined, giving orientations with maximum $l$-outdegree bounded by the $l$-degeneracy. We will use $\vec{G}^{(\lparam)}$ to denote the $\lparam$-degeneracy orientation of $G$.

\degeneracyordering*

\begin{algorithm} 
	\caption{$Compute\_ordering(G,l)$}
	\begin{algorithmic}[1] 
		\State Set $G' = G$, $V' = V(G)$.
		\For{$i \in [n]$}
			\State Set $v_i$ as a vertex of $G'$ with minimum degree
			\State Set $V' = V \setminus \{v_i\}$
			\State Set $G' = G\langle V'\rangle_l$
		\EndFor
		\State Return $v_0,...,v_{n-1}$
	\end{algorithmic}
\end{algorithm}\label{alg:ordering}

\begin{proof}
    Fix $\lparam$. We can show that algorithm $Compute\_ordering$ (\Alg{ordering}), gives the desired ordering. Moreover, we show how to implement it efficiently. The algorithm follows a similar structure to the original ordering algorithm from \cite{MaBe83}.
    
    \begin{itemize}
    \item Step $3$: We use the degree structure from \cite{MaBe83}: for each possible degree, we create a doubly linked list of vertices of degree $i$ for each $i$, which we initialize putting each vertex in the appropriate list based on their degree in $G$. This structure allows to find a vertex $v_i$ with minimum degree at any point in time $O(d(v_i))$, where $d(v_i)$ is the degree at the time of deletion. It also allows us to update the degree of a vertex in constant time. We can initialize the structure in $O(n+m)$ time.
    
 	\item Step $4$: Updating the vertex set can be done in constant time, by maintaining an array of active vertices.
 	
 	\item Step $5$: We need to update $G'$ by removing $v_i$ from each hyperedge. To do this, we keep a dictionary that maps a tuple of vertices to each active hyperedge, this represents the trimmed portion of each hyperedge left in the induced $\lparam$-trimmed subhypergraph. Additionally, for each hyperedge, we keep a counter of how many of its vertices have been removed.
 	
 	We iterate over all the hyperedges $e \in E(G)$ that contain $v_i$ (at most $d_G(v_i)$) and increase the counter for all of them.
 	If any hyperedge has lost more than $l$ vertices, then we remove it and update the degrees of the vertices on it (at most $r(e)$). Otherwise we compute the tuple corresponding to $e \cap V'$ \footnote{We can use an arbitrary total order on the vertices so the tuple is unique for each possible subset.}, this can be done in $O(r(e))$ time. If that tuple is not in the dictionary, we add it. Otherwise, it means that the hyperedge is already present in $G'$, we use the dictionary to retrieve the conflicting hyperedge, and keep the one with lower counter of deleted vertices, deleting (and updating the degrees) of the other, which will take $O(r(e))$ time.
 	
 	The entire process of step $5$ will take $O(r(e) + \log{m})$ per hyperedge, where $\log{m}$ is the dictionary cost. 
    
    \end{itemize}
    Therefore, steps $3$-$5$ will take a total of $d_G(O(v_i)(r(G) + \log{m}))$ time. The sum of degrees in a hypergraph is bounded by the rank times the number of hyperedges, therefore, the total runtime of the algorithm will be bounded by $O(n+m(\log{m} \cdot r(G) + r(G)^2))$.
    
	Now, let $\vec{G}^{(\lparam)}$ be the directed acyclic hypergraph obtained from $G$ and using the deletion ordering $v_0,...,v_{n-1}$. We will have the following.
    \begin{equation}
        d^+_{\vec{G}^{(\lparam)},\lparam}(v_i) \leq d_{G\langle \{v_i,...,v_n\}\rangle_{\lparam}}(v_i)\cdot r(G) \leq \ldegen{\lparam}(G)  \cdot r(G)
    \end{equation}
	The first inequality holds from the fact that the $\lparam$-outdegree of the vertex $v_i$ will be bounded by its degree in the induced $\lparam$-trimmed subhypergraph at the time of deletion times the rank of the hypergraph. The second inequality holds because $v_i$ is a vertex with minimum degree in $G\langle \{v_i,...,v_n\}\rangle_{\lparam}$, and by definition the minimum degree of any induced $\lparam$-trimmed subhypergraph is bounded by $\ldegen{\lparam}(G)$. Finally, we will have:
	\begin{equation}
		\Delta^+(\vec{G}^{(\lparam)}) = \max_{v \in V(G)} d^+_{\vec{G}^{(\lparam)},\lparam}(v_i) \leq \ldegen{\lparam}(G) \cdot r(G)
	\end{equation}
\end{proof}

\section{Counting hypergraph homomorphisms in degenerate hypergraphs} \label{sec:bressan}

\subsection{Reducing to directed homomorphisms}

Recall from the definition of homomorphism in hypergraphs that we are allowed to map hyperedges to other hyperedges of smaller arity. If both the pattern and the input have the exact same arity for all their hyperedges, then this will not be an issue, but when considering hyperedges of any arity, this can become problematic.

\begin{definition} [$\contract(H)$]
	For a hypergraph $H$, the contract set $\contract(H) \subseteq \cQ(H)$, is the set of all the quotient hypergraphs $H/\tau$ generated by a partition $\tau = \{ V_1,V_2,...V_l \}$ such that every set $V_i$ forms a connected component in $H$.
\end{definition}

Alternatively, one can generate $\contract(H)$ by recursively merging two vertices in the same hyperedge. Note that every homomorphism from $H$ to an input hypergraph $G$ will correspond to one homomorphism in $\contract(H)$; hence, we can obtain $\Hom{G}{H}$ by computing the number of arity-preserving homomorphisms for each pattern in $\contract(H)$.

\begin{lemma} \label{lem:contracted}
	For all graph $H$:
	\[
    \Hom{G}{H} = \sum_{H' \in \contract(H)}  \beta(H') \Homr{G}{H'}
    \]
    Where $\beta(H')>0$ for all $H'$ and can be computed in $O(1)$ time.
\end{lemma}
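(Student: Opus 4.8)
The plan is to show that every homomorphism out of $H$ factors \emph{uniquely} as a vertex contraction (producing a member of $\contract(H)$) followed by an arity-preserving homomorphism, and then let $\beta$ collect those contractions by isomorphism type. For $\phi\in\Phi(H,G)$, I would define $\tau(\phi)$ to be the partition of $V(H)$ whose blocks are the classes of the equivalence relation generated by $\{(u,v):\exists e\in E(H),\ u,v\in e,\ \phi(u)=\phi(v)\}$. Then I would record two facts. (i) Since every generating pair lies inside a common hyperedge, each block of $\tau(\phi)$ is connected in $H$, so $H/\tau(\phi)\in\contract(H)$. (ii) The map $\widehat\phi\colon V(H/\tau(\phi))\to V(G)$ given by $\widehat\phi([v])=\phi(v)$ is well defined (it is constant on blocks) and is an arity-preserving homomorphism. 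The crux of (ii) is that for two vertices $u,v$ of a \emph{single} hyperedge $e$ one has $[u]=[v]$ in $\tau(\phi)$ iff $\phi(u)=\phi(v)$ (forward because $\phi$ is constant on blocks; backward because $(u,v)$ is then a generating pair), so $|e/\tau(\phi)|=|\phi(e)|$ and $\widehat\phi(e/\tau(\phi))=\phi(e)\in E(G)$; hence $\widehat\phi\in\Homr{G}{H/\tau(\phi)}$.

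Next I would build the inverse. For a partition $\tau$ with $H/\tau\in\contract(H)$ and any $\psi\in\Homr{G}{H/\tau}$, let $\mathrm{lift}(\psi)\colon V(H)\to V(G)$ be $v\mapsto\psi([v]_\tau)$; this is a homomorphism of $H$ (not necessarily arity-preserving) because $\mathrm{lift}(\psi)(e)=\psi(e/\tau)\in E(G)$. Consider $\alpha\colon\phi\mapsto(\tau(\phi),\widehat\phi)$ into $\bigsqcup_{\tau\,:\,H/\tau\in\contract(H)}\Homr{G}{H/\tau}$ and $\beta'\colon(\tau,\psi)\mapsto\mathrm{lift}(\psi)$. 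One composition is immediate: $\mathrm{lift}(\widehat\phi)(v)=\widehat\phi([v]_{\tau(\phi)})=\phi(v)$, so $\beta'\circ\alpha=\mathrm{id}$ and $\alpha$ is injective. The remaining work is $\alpha\circ\beta'=\mathrm{id}$, which (since $\widehat{\mathrm{lift}(\psi)}=\psi$ once the index set is correct) reduces to proving $\tau(\mathrm{lift}(\psi))=\tau$ for every arity-preserving $\psi$.

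This last identity is the main obstacle and splits into two halves. For "$\tau(\mathrm{lift}(\psi))$ refines $\tau$": since $\psi$ is arity-preserving it is injective on each hyperedge $e/\tau$ of $H/\tau$, so any generating pair $(u,v)$ of $\tau(\mathrm{lift}(\psi))$ — i.e.\ $u,v\in e$ with $\psi([u]_\tau)=\psi([v]_\tau)$ — already has $[u]_\tau=[v]_\tau$; hence every block of $\tau(\mathrm{lift}(\psi))$ sits inside a block of $\tau$. For the converse I would use the alternative description of $\contract(H)$ as the quotients obtained by repeatedly merging two vertices sharing a hyperedge: a short induction on the number of merges shows that in such a partition $\tau$, any two vertices of a common block $C$ are joined by a path $x_0,\dots,x_m$ with all $x_i\in C$ and each pair $x_i,x_{i+1}$ in a common hyperedge of $H$. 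Since $\mathrm{lift}(\psi)$ takes the single value $\psi(C)$ on $C$, each $(x_i,x_{i+1})$ is a generating pair of $\tau(\mathrm{lift}(\psi))$, so $C$ lies in one block of $\tau(\mathrm{lift}(\psi))$; thus $\tau$ refines $\tau(\mathrm{lift}(\psi))$ and the two partitions coincide.

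With the bijection $\Phi(H,G)\;\cong\;\bigsqcup_{\tau\,:\,H/\tau\in\contract(H)}\Homr{G}{H/\tau}$ in hand, the lemma follows by grouping the partitions $\tau$ according to the isomorphism type of $H/\tau$: taking $\beta(H')$ to be the number of partitions $\tau$ of $V(H)$ with $H/\tau\cong H'$ and using that $\Homr{G}{\cdot}$ is an isomorphism invariant yields $\Hom{G}{H}=\sum_{H'\in\contract(H)}\beta(H')\,\Homr{G}{H'}$. Each $\beta(H')$ is a positive integer — the partition defining $H'$ itself is counted — and since $|V(H)|$ is a constant, $\beta$ takes finitely many values depending only on $H$ and is computable in $O(1)$ time.
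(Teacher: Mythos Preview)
Your proof is correct and follows the same approach as the paper: associate to each homomorphism $\phi$ the partition whose blocks are the connected pieces of the $\phi$-fibers, show the induced map on the quotient is arity-preserving, and invert by lifting. Your bijection argument is more carefully carried out than the paper's sketch; in particular your formula $\beta(H')=\lvert\{\tau:H/\tau\cong H'\}\rvert$ is the right one, whereas the paper's additional division by $\lvert\mathrm{Aut}(H')\rvert$ appears to be an accounting slip.
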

\begin{proof}
	Let $\phi$ be a homomorphism from $H$ to $G$, for every vertex $v$ in the image of $\phi$, let $V_v = \{u \in V(H) : \phi(u) = v \}$, divide each set $V_v$ into connected components of $H \langle V_v \rangle$ and set $\tau = \{V_1,...,V_l\}$ be the set of all connected components. Note that $\tau$ is a valid partition of $V(H)$, and $H'=V/\tau \in \contract(H)$. Let $\phi': \tau \to V(G)$ by assigning $v$ to every set obtained from $V_v$. 
	
	We can show that $\phi'$ must be an arity preserving homomorphism from  $H/\tau$ to $G$. Consider any hyperedge $e$ in $H/\tau$, every vertex $V_i,V_j$ in $e$ must have a different image $\phi'(V_i) \neq \phi'(V_j)$, as otherwise they would be part of the same connected component of $V_v$ and would not have been split into different sets. Moreover, $e$ corresponds to at least a hyperedge $e'$ in $H$, and $\phi'(e) = \phi(e') \in E(G)$.
	
	Let $\alpha(H')$ be the number of valid partitions $\tau$ of $H$ such that $H/\tau$ is isomorphic to $H'$. For each arity preserving homomorphism of $H'$ and each such $\tau$ we can obtain a homomorphism $\phi: H \to G$ by setting $\phi(u) = \phi'(V_i \ni u)$ for each $u$, moreover, for each automorphism (map of $H'$ to itself that yields an isomorphic hypergraph) of $H'$ we will have a homomorphism $\phi'$ that generates the same maps $\phi$. Therefore, we can set $\beta(H') = \alpha(H')/Aut(H')$ to obtain the expression of the lemma.
\end{proof}

We can show that the patterns in $\contract(H)$ cannot have a greater $\lparam$-\dagtw{} than $H$. Note that this is not true in general for the entire set of quotients of $H$, $\cQ(H)$.

\begin{lemma} \label{lem:dtw_bound}
    Let $\lparam \in \mathbb{Z}^+\cup \{\infty\}$ and $H$ be a hypergraph. For every hypergraph $H' \in \contract(H)$ $\ldtw{\lparam}(H') \leq \ldtw{\lparam}(H)$.
\end{lemma}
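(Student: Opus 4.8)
Since, by the remark following the definition of $\contract$, the set $\contract(H)$ is generated by repeatedly merging two vertices that lie in a common hyperedge, it suffices to prove the statement for one merge: if $H'' = H/\{ab\}$ where $a,b$ lie in a common hyperedge $e^\star\in E(H)$ and $z$ denotes the merged vertex, then $\ldtw{\lparam}(H'')\le\ldtw{\lparam}(H)$; the general case follows by induction along a merge sequence, each step applying the single-merge statement to the (already quotiented) current hypergraph. Fix an arbitrary acyclic orientation $\vec{H''}\in\Sigma(H'')$; the goal is $\dtw(\Skeleton{\lparam}{\vec{H''}})\le\ldtw{\lparam}(H)$. I lift it: take the linear order of $\vec{H''}$ on $V(H'')$ and split $z$ back into $a$ and $b$ placed in the two consecutive slots formerly occupied by $z$, obtaining $\vec H\in\Sigma(H)$. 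The relative order of $a$ and $b$ inside that slot is chosen so that the ``first of the pair'' is a source of $\vec H$ exactly when $z$ is a source of $\vec{H''}$ — such a choice always exists, because whenever $z$ fails to be a source of $\vec{H''}$ this is witnessed by a hyperedge $e_0$ touching only one of $a,b$, and putting that vertex \emph{second} kills its candidacy for being a source of $\vec H$. Writing $q:V(H)\to V(H'')$ for the quotient map, two structural facts then hold: (i) $\Skeleton{\lparam}{\vec{H''}}$ is exactly the DAG obtained from $D:=\Skeleton{\lparam}{\vec H}$ by contracting the pair $\{a,b\}$ to a single vertex and deleting the resulting loop and parallel edges — this uses only that merging two order-adjacent vertices never raises any vertex's position inside a hyperedge, hence is compatible with the $\lparam$-skeleton construction; and (ii) $q$ restricts to a \emph{bijection} $S(\vec H)\to S(\vec{H''})$ (every source of $\vec H$ other than $a,b$ is a source of $\vec{H''}$ and conversely, the second-of-pair is never a source of $\vec H$, and the first-of-pair is a source of $\vec H$ iff $z$ is a source of $\vec{H''}$ — this is exactly where the order choice is used).

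\textbf{Pushing the decomposition forward.} Let $T=(\cB,\cE)$ be a DAG-tree decomposition of $D$ with $\dtw(T)=\dtw(D)\le\ldtw{\lparam}(H)$, and let $T''$ be the tree with the same shape and bag $q(B)$ in place of each $B\in\cB$. By the source bijection each $q(B)\subseteq S(\vec{H''})$, the bags cover $S(\vec{H''})$, and $|q(B)|=|B|$, so $\dtw(T'')\le\dtw(T)$. The only nontrivial point is the reachability/separator axiom. Two ingredients drive it. First, a precise \emph{reachability formula}: $\Reach_{\Skeleton{\lparam}{\vec{H''}}}(q(v))=q(\widehat{\Reach_D(v)})$ for every $v$, where $\widehat X:=X$ if $X\cap\{a,b\}=\emptyset$ and $\widehat X:=X\cup\Reach_D(a)\cup\Reach_D(b)$ otherwise; intuitively, once a path reaches $z$ it may continue using the out-neighbourhoods of \emph{both} $a$ and $b$, and a short topological-order argument (any strict descendant of $a$ in $D$ reaches neither $a$ nor $b$) shows this is all the extra reachability created. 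Second, a \emph{common-source fact}: since $a$ and $b$ both lie in $e^\star$, its first vertex $\lparam$-reaches both of them in $D$, and that vertex is in turn reachable from some source, so there exists $s_0\in S(\vec H)$ with $\{a,b\}\subseteq\Reach_D(s_0)$.

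\textbf{Verifying the separator axiom and concluding.} Let $B$ lie on the $B_1$--$B_2$ path in $T$ and let $u'\in\Reach_{\Skeleton{\lparam}{\vec{H''}}}(q(B_1))\cap\Reach_{\Skeleton{\lparam}{\vec{H''}}}(q(B_2))$; using the reachability formula this means $u'\in q(\widehat{\Reach_D(B_1)})\cap q(\widehat{\Reach_D(B_2)})$, and I want $u'\in q(\widehat{\Reach_D(B)})$. A case split on whether $u'=z$ and on which of $\Reach_D(B_1),\Reach_D(B_2)$ meet $\{a,b\}$ reduces this to the following claim: if $\Reach_D(B_i)\cap\{a,b\}\neq\emptyset$ for both $i$ then $\Reach_D(B)\cap\{a,b\}\neq\emptyset$. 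To see the claim, place $s_0$ in some bag $B_0$ of $T$; since $\{a,b\}\subseteq\Reach_D(B_0)$ and in a tree any node of the $B_1$--$B_2$ path lies on the $B_0$--$B_1$ path or on the $B_0$--$B_2$ path, the axiom for $T$ applied to $B_0$ together with the appropriate $B_i$ forces a vertex of $\{a,b\}$ into $\Reach_D(B)$. The residual sub-cases (where $u'$ already lies in some $\Reach_D(B_i)$ but lies in $\Reach_D(a)\cup\Reach_D(b)$ on the other side) are settled identically, with $s_0$ providing the common witness since $\Reach_D(a)\cup\Reach_D(b)\subseteq\Reach_D(s_0)$. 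Hence $T''$ is a valid DAG-tree decomposition of $\Skeleton{\lparam}{\vec{H''}}$, so $\dtw(\Skeleton{\lparam}{\vec{H''}})\le\dtw(T'')\le\dtw(D)\le\ldtw{\lparam}(H)$; maximizing over $\vec{H''}\in\Sigma(H'')$ yields $\ldtw{\lparam}(H'')\le\ldtw{\lparam}(H)$.

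\textbf{Main obstacle.} The delicate step is the separator axiom for $T''$: a genuine vertex merge creates new reachability (reach $z$ ``through $a$'', then proceed ``through $b$''), so the pushforward of reach-sets is not an equality and one must isolate the correction term $\widehat{\,\cdot\,}$ and then use the shared hyperedge $e^\star$ — via the common source $s_0$ and the tree-betweenness trick — to restore the separation property. Getting the source correspondence to be a bijection (by the order choice) is what keeps this bookkeeping tractable; without it one is left tracking sources of $\vec H$ that are not in the image of $q|_{S(\vec H)}$, which is where a more naive attempt breaks down.
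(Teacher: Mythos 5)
Your overall strategy (reduce to a single merge, lift the orientation of the quotient by splitting $z$ into two consecutive vertices, push a DAG-tree decomposition forward) is the same as the paper's. But there is a genuine gap at your structural fact (i). It is \emph{not} true in general that $\Skeleton{\lparam}{\vec{H''}}$ is the contraction of $D=\Skeleton{\lparam}{\vec H}$ along $\{a,b\}$. The justification you give — that merging never \emph{raises} a vertex's position inside a hyperedge — points at exactly the wrong direction: the problem is that merging \emph{lowers} by one the position of every vertex that comes after the pair inside a hyperedge containing both $a$ and $b$ (or one of them, once $z$ replaces it). If one of $a,b$ is among the first $\lparam+1$ vertices of a hyperedge $e$ with $|e|>\lparam+2$, the vertex formerly in position $\lparam+2$ moves into position $\lparam+1$ after the merge and acquires out-arcs to all later vertices of $e$; these arcs are present in $\Skeleton{\lparam}{\vec{H''}}$ but not in the contraction of $D$. (Concretely, for $\lparam=1$ and $e=(a,b,x,y)$ the merged skeleton contains $x\to y$ while $D$ does not.) Consequently your reachability formula $\Reach_{\Skeleton{\lparam}{\vec{H''}}}(q(v))=q(\widehat{\Reach_D(v)})$ fails — a source reaching $x$ but not $y$, $a$, or $b$ in $D$ reaches $y$ after the merge — and the entire verification of the separator axiom, which rests on that formula, is unsound for $1\le\lparam<\infty$. (For $\lparam=0$ and $\lparam=\infty$ your claim (i) happens to hold, which may be why the issue is easy to miss.) The paper isolates exactly this phenomenon as a separate case ("adding a direct edge") and shows the extra arcs do not increase the DAG-treewidth because their two endpoints always have a common in-neighbour among the first $\lparam+1$ vertices of $e$; your argument needs an analogous step.

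Two smaller remarks. First, in your source-correspondence the order choice is stated backwards: if $z$ fails to be a source because of a hyperedge $e_0$ meeting only one of $a,b$, you must put \emph{that} vertex \emph{first} (it is then a non-source first-of-pair, and the second-of-pair is killed by $e^\star$); putting it second leaves open the possibility that the other vertex is a source of $\vec H$ while $z$ is not a source of $\vec{H''}$, breaking the claimed bijection. Second, the paper does not need a source bijection at all — it handles the case where a source of the larger skeleton ceases to correspond to a source after the merge by substituting another source into the bags — so your bijection is a nice simplification if repaired, but it is not where the real difficulty lies.
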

\begin{proof}
    Fix $\lparam$. Let $H$ be a hypergraph with $\tau =\ldtw{l}(H)$. We show that all hypergraphs $H'$ in $\contract(H)$ have $\ldtw{l}(H') \leq \tau$, we do induction on the number of vertices of $H'$. The base case, $|V(H')| = |V(H)|$, is trivial as $H' = H$, and therefore $\ldtw{l}(H') = \tau$.

    For the inductive step, assume that for all hypergraphs $H'$ in $\contract(H)$ with $k$ vertices we have $\ldtw{l}(H') \leq \tau$, we show that any hypergraph $H'$ in $\contract(H)$ with $k-1$ vertices will also have $\ldtw{l}(H') \leq \tau$.

    Fix a hypergraph $H'$ in $\contract(H)$ with $|V(H')| = k-1$. We show that any orientation $\vec{H'}$ of $H'$ has $\ldtw{l}(\vec{H'})\leq \tau$.

    Let $H'' \in \contract(H)$ with $|V(H'')| = k$ be a hypergraph such that merging the vertices $u,v \in V(H'')$ results in $H'$. Such a hypergraph must exist; otherwise $V(H')$ would not be in $\contract(H)$. Moreover, from the induction hypothesis, we have $\ldtw{l}(V(H'')) \leq \tau$. We call $uv$ to the vertex in $H'$ resulting from merging $u$ and $v$.

    Let $\ord': V(H') \to [k-1]$ be the ordering of the vertices of $H'$ according to $\vec{H'}$. We define $\ord'': V(H'')\to[k]$ as follows:
    \begin{equation}
        \ord''(w)= 
    \begin{cases}
    \ord'(w) & \text{if } \ord'(w) < \ord'(uv) \\
    \ord'(uv) & \text{if } w=u\\
    \ord'(uv)+1 & \text{if } w=v\\
    \ord'(w)+1 & \text{if } \ord'(w) > \ord'(uv)
    \end{cases}
    \end{equation} 
    
    Let $\vec{H}''$ be the directed acyclic hypergraph obtained from $H''$ using the ordering $\ord''$. Consider the $l$-skeleton of $\vec{H}'$ and $\vec{H}''$, $Sk'=\Skeleton{l}{\vec{H'}}$ and $Sk''=\Skeleton{l}{\vec{H''}}$. Let $\cT''$ be a \dagtree{} of $Sk''$ with $\dtw(\cT'') \leq \tau$, such \dagtree{} must exist from the definition of $\ldtw{l}$. We show that $Sk'$ admits a \dagtree{} $\cT'$ with $\dtw(\cT') \leq \tau$. The direct arc $(u,v)$ might or might not be present in $Sk''$. We consider each case separately.
    
    \paragraph{$\mathbf{(u,v) \not\in E(Sk'')}$} In this case, $Sk'$ will be the result of merging $u$ and $v$ in $Sk''$. Note that $u$ and $v$ share at least one hyperedge $e \in E(\vec{H}'')$. Let $s$ be the source of $e$, the direct arcs $(s,u)$ and $(s,v)$ must both be present in $Sk''$. We can show that $\cT''$ will also be a valid \dagtree{} for $Sk'$, and thus $\dtw(Sk)\leq \tau$. It suffices to show that for every vertex $w$ of $Sk'$, the sub-tree of $\cT''$ induced by the sources that reach $w$ is connected:
    	\begin{itemize}
    		\item $w=uv$. $w$ will be reached by both the bags that were reaching $u$ and the bags that were reaching $v$. Note that the bags of sources reaching $u$ will form a connected sub-tree, similarly with the bags of sources reaching $v$. Moreover, both sub-trees intersect in the bags of sources that can reach $s$ (or that contain $s$ if $s$ is a source of the skeleton), which means that they must form a single connected sub-tree.
    		\item $w \in \Reach_{Sk''}(u)$. $w$ will be reached by the same bags as in $Sk''$ plus all the bags that were reaching $v$. Both will form connected sub-trees in $\cT''$, and again they must intersect in at least the bags that reach $s$, which means that they will form a connected sub-tree.
    		\item $w \in \Reach_{Sk''}(v)$. Analogous to the previous case.
    		\item $w \not\in (\Reach_{Sk''}(u) \cup \Reach_{Sk''}(c))$. In this case the set of sources reaching $w$ in $Sk'$ is the same as in $Sk''$, and therefore they will form a connected sub-tree.
    	\end{itemize}
    	
    \paragraph{$\mathbf{(u,v) \in E(Sk'')}$} In this case $Sk'$ is the result of merging the direct arc $(u,v)$. Moreover, it is possible that the merge might create new direct edges in $Sk'$. This can happen if the vertices $u,v$ are part of a hyperedge $e$ in $\vec{H''}$ and $v$ is one of the first $l+1$ vertices in the internal ordering of $e$ and $|e|>l+2$. In that case, $\vec{H'}$ will contain a hyperedge $e'$ where a vertex $u'$ that was in position $l+2$ in $e$ is now in the position $l+1$ in the internal ordering of $e'$, this means that for all $v'$ appearing after $u'$ in $e'$ we will have a direct arc $(u',v') \in Sk'$ that might not have been present in $Sk''$.
    
	We show that both merging the direct edge and adding a direct edge between two vertices on the same hyperedge do not increase $\dtw$.

    \begin{itemize}
        \item \textbf{Merging a direct arc: } We need to consider three sub-cases:
        \begin{enumerate}
            \item $u$ is a source in $Sk''$ and $uv$ is a source in $Sk'$. Replacing $u$ for $uv$ in every bag of $\cT''$ will result in a valid \dagtree{} $\cT'$ of $Sk'$, as $\Reach_{Sk'}(uv) \setminus \{uv\} = \Reach_{Sk''}(u) \setminus (u)$, and every other vertex will have the same reachability in both dags. No bag will increase in size and therefore $\dtw(\cT') \leq \dtw(\cT'') \leq \tau$.
            
            \item $u$ is a source in $Sk''$ but $uv$ is not a source in $Sk'$. This implies that $v$ had some incoming edges in $Sk''$, all the sources other than $u$ that reached $v$ in $Sk''$ now also reach $u$ and all the vertices in $\Reach_{Sk''}(u)$. Note that in $\cT''$ all bags containing $u$ must form a connected sub-tree, similarly all bags containing a source (including $u$) that reaches $v$ will form a connected sub-tree. Let $s$ be a source in $Sk''$ that reaches $v$, and either appears in a bag of $\cT''$ together with $u$, or is in a bag adjacent to a bag containing $u$. 
            
            We can form a \dagtree{} $\cT'$ of $Sk'$ by replacing $u$ by $s$ in every bag. No bag will increase in size and therefore $\dtw(\cT') \leq \dtw(\cT'') \leq \tau$. We can verify that this will be a valid \dagtree{}, let $w$ be a vertex in $V(Sk')$, we can show that the sub-tree of $\cT'$ that contains a source reaching $w$ is connected.
            \begin{itemize}
                \item If $w = uv$, then the bags that reach $uv$ in $Sk'$ are the same bags that reached $v$ in $Sk''$, which must form a connected sub-tree.
                \item If $w \in \Reach_{Sk''}(u)$, then $w$ is reachable by every bag that reached $v$ in $Sk''$ which forms a connected sub-tree, and every bag that reaches $w$ in $Sk''$, but those bags must form a connected sub-tree with the bags of $\cT''$, both sub-trees overlap in the bags that were containing $u$ (now $s$ instead), and therefore they will all form a bigger connected sub-tree.
                \item If $w \not\in \Reach_{Sk''}(u)$, then all bags that reached $w$ in $\cT''$ also reach $w$ in $\cT'$, if $s$ reaches $w$ then there might be some additional bags reaching $w$, but they will still form a connected sub-tree.
            \end{itemize}
            
            \item $u$ is not a source in $Sk''$. In this case we can show that $\cT''$ will be a valid \dagtree{} for $Sk'$. Let $w$ be a vertex in $V(Sk')$:
            \begin{itemize}
                \item If $w = uv$, then the bags that reach $w$ in $Sk'$ are the exact same bags that reached $v$ in $Sk''$, and therefore form a connected sub-tree.
                \item If $w \in \Reach_{Sk''}(u)$, note that all the sources reaching $v$ in $Sk''$ now will reach $w$ in $Sk'$. These sources form a connected sub-tree. All the sources that reached $w$ in $Sk''$ will also reach $w$ in $Sk''$ and therefore also form a connected sub-tree. Both sub-trees intersect in the bags containing sources that reach $u$ in $Sk''$, thus combining them yields a larger connected sub-tree.
                \item If $w \not\in \Reach_{Sk''}(u)$, then the sources that reach $w$ in $Sk'$  are the same sources that reached $w$ in $Sk''$, and therefore the bags containing those sources will form a connected sub-tree.
            \end{itemize}
        \end{enumerate}
        \item \textbf{Adding a direct edge: } We show that adding a direct edge between two vertices that are part of the same hyperedge does not increase $\dtw$. Let $\vec{F}$ and $\vec{F'}$ be two dags with the same vertex set and $E(\vec{F'}) = E(\vec{F}) + \{(u,v)\}$ for some vertices $u,v$ such that $\exists s\in V(\vec{F})$ with $(s,u),(s,v) \in E(\vec{F})$. Let $\cT$ be a \dagtree of $\vec{F}$, we can show that it will also be a valid \dagtree of $\vec{F'}$.
        
        Let $w$ be a vertex of $\vec{F'}$, we show that the bags reaching $w$ in $\vec{F'}$ must form a connected sub-tree:
        \begin{itemize}
        	\item If $w \in \Reach_{\vec{F}}(v)$, then $w$ will be reached in $\vec{F'}$ by the bags that reached $v$ and $u$. Both sets of bags form connected sub-trees in $\cT$ and intersect on the bags that reach $s$, therefore, they will all form a connected sub-tree.
        	\item If $w \not\in \Reach_{\vec{F}}(v)$, then the bags reaching $w$ are the same in $\vec{F}$ and $\vec{F'}$, therefore forming a connected sub-tree in $\cT$.
        \end{itemize}
    
   		Therefore, no matter how many direct edges $Sk'$ adds with respect to $Sk''$ the \dagtree{} will still be valid and we will have that $\dtw(Sk') \leq \dtw(Sk'')$. 
    \end{itemize}
\end{proof}

The next step is to exploit the degeneracy orientations of the input hypergraph. We will compute the degeneracy order of $G$ and obtain a directed acyclic hypergraph $\vec{G}^{(\lparam)}$ with $\Delta_{\lparam}^+(\vec{G}^{(\lparam)}) = O(\ldegen{\lparam}(G))$. Note that the image of each homomorphism of $H$ in $G$ appears now in a specific acyclic orientation, and corresponds to one of the possible directed acyclic hypergraphs that can be obtained from $H$. Hence, analogously to the graph case, we can compute the number of undirected hypergraph homomorphisms as the sum of the directed homomorphism for each orientation of $H$.

\begin{equation} \label{eq:undirect_to_direct}
    \Homr{G}{H} = \sum_{\vec{H} \in \Sigma(H)}\Hom{\vec{G}}{\vec{H}}
\end{equation}

\subsection{Counting directed hypergraph homomorphisms}

In this section, we show how to generalize Bressan's algorithm \cite{Br19,Br21} to our different notions of degeneracy and \dagtw{} to compute the number of DAH homomorphisms.

Let $\vec{H}$ and $\vec{G}$ be two directed acyclic hypergraphs. We could think of using the skeletons of $\vec{H}$ and $\vec{G}$ to compute the number of homomorphisms from $\vec{H}$ to $\vec{G}$. However, while for every valid homomorphism $\phi: V(\vec{H}) \to V(\vec{G})$ from $\vec{H}$ to $\vec{G}$ is also a valid homomorphism from $\Skeleton{l}{\vec{H}}$ to $\Skeleton{l}{\vec{G}}$, the opposite is not true. We could think of obtaining the homomorphisms from the skeletons and then just filtering out the ones that are not valid. This again does not directly work, as we would require to \emph{enumerate} the homomorphisms, rather than just counting, which can be potentially harder.

Alternatively, one could think of just computing the homomorphisms of the induced $l$-trimmed subhypergraph of each of the partitions of the \dagtree{}, but those homomorphisms will not match the homomorphisms of the entire pattern. Alternatively, one could try th induced ($0$-trimmed) subhypergraphs, but while the skeletons of the partitions and the $l$-trimmed subhypergraphs will always be connected, the induced subhypergraphs might not, and therefore we may not be able to compute the homomorphism list as efficiently.

Therefore we will need to do a hybrid approach, we will enumerate all the homomorphisms of the skeletons for each partition of the \dagtree{}, we will then filter that list by checking that every hyperedge of the corresponding induced subhypergraph is present. The result of this is a list of homomorphisms of the induced subhypergraphs for each of the partitions; this list does not include all such homomorphisms, as, as we said, the induced subhypergraphs of the partitions might be disconnected. But we can show that it includes all the homomorphisms that \emph{matter}, that is, the ones that can be extended to valid homomorphisms of the entire pattern $H$.

The proof follows a similar structure to the original algorithm from Bressan (see Section $3$ in \cite{Br21}).
We start by introducing the following lemma.

\begin{lemma} [Lemma $4$ in \cite{Br21}, restated] \label{lem:partialbressan}
    Let $\vec{H},\vec{G}$ be two directed acyclic graphs. Given any $B \subseteq S(\vec{H})$, the set of homomorphisms $\Phi(\vec{H}[B], \vec{G})$ has size $poly(d)n^{|B|}$ and can be enumerated in time $poly(d)n^{|B|}$. Where $k = |V(\vec{H})|$, $n=|V(\vec{G})|$ and $d = \Delta^+(\vec{G})$.
\end{lemma}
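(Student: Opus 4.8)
The plan is to enumerate $\Phi(\vec{H}[B], \vec{G})$ by branching first over the ways of mapping the sources in $B$ and then extending greedily in topological order, using the bounded out-degree of $\vec{G}$ to keep each extension step to at most $d$ choices; this is Bressan's argument (Lemma~4 in \cite{Br21}) transcribed into our notation.

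First I would record the structural fact that the set of sources of $\vec{H}[B]$ is exactly $B$. Recall that $\vec{H}[B]$ is the subgraph of $\vec{H}$ induced on $\Reach_{\vec{H}}(B)$. A vertex $v$ of this set is either in $B$, in which case it has no in-edges in $\vec{H}$ and hence none in $\vec{H}[B]$, or it is reached from some $s \in B$ along a directed path of length at least one, whose penultimate vertex also lies in $\Reach_{\vec{H}}(B)$ and thus gives $v$ an in-edge in $\vec{H}[B]$. Fix a topological order $v_1, \dots, v_t$ of $\vec{H}[B]$ in which the $|B|$ sources come first; then every $v_j$ with $j > |B|$ has an in-neighbor among $v_1, \dots, v_{j-1}$.

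The enumeration then proceeds as follows. Branch over all $n^{|B|}$ assignments of $B$ into $V(\vec{G})$ (here $n = |V(\vec{G})|$). For each partial map, process $v_{|B|+1}, \dots, v_t$ in order; when handling $v_j$, fix any already-assigned in-neighbor $u$ of $v_j$ and take as candidate images of $v_j$ the out-neighbors in $\vec{G}$ of the image of $u$ --- at most $d = \Delta^+(\vec{G})$ of them. For each candidate, verify in $O(k)$ time that it is an out-neighbor of the image of every already-assigned in-neighbor of $v_j$, discarding it on failure and recursing on success; output a map once all of $v_1, \dots, v_t$ are assigned. Since every edge of $\vec{H}[B]$ is tested precisely when its head is processed, the output is exactly $\Phi(\vec{H}[B], \vec{G})$.

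For the bounds, the recursion tree has at most $n^{|B|}$ nodes at the source level and branching factor at most $d$ at each of the at most $k-1$ later levels, hence at most $n^{|B|} d^{\,k}$ leaves; since $k$ is a constant this is $\poly(d)\, n^{|B|}$, which bounds both $|\Phi(\vec{H}[B], \vec{G})|$ and, after multiplying by the $O(k)$ per-node verification cost, the running time. The only point needing care is completeness and incrementality of the verification --- that testing each edge once, at the moment its head is assigned, captures all constraints --- but this is immediate from the topological order, so I expect no genuine obstacle here; the proof is essentially bookkeeping on top of Bressan's lemma.
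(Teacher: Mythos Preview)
Your proof is correct and is precisely Bressan's argument. Note, however, that the paper does not give its own proof of this lemma: it is quoted verbatim as Lemma~4 of \cite{Br21} and used as a black box, so there is nothing in the paper to compare against beyond the citation. What you have written is a faithful reconstruction of Bressan's original proof (branch over source assignments, extend along a topological order using the out-degree bound), and the bounds are as stated; the only implicit assumption is constant-time adjacency lookup in $\vec{G}$ for the verification step, which is standard.
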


Like we mentioned previously, we can filter the homomorphisms of the skeletons of each partition to obtain a subset of the DAH homomorphisms of the induced subhypergraphs of the same partition.

\begin{lemma} \label{lem:partialhyper}
    Let $\vec{H},\vec{G}$ be two directed acyclic hypergraphs and $0 \leq \lparam \leq k = |V(\vec{H})|$. Given any $B \subseteq S(\vec{H})$, we can compute a set of homomorphisms $\Phi' \subseteq \Phi(\vec{H} [B]_\lparam,\vec{G})$ such that every $\phi \in \Phi(\vec{H} [B]_\lparam,\vec{G}) \setminus \Phi'$ has $\ext{\phi}{\vec{H}}{\vec{G}} = 0$ in time $poly(d_{\lparam})n^{|B|}+O(m)$. Moreover, $|\Phi'| = poly(d_{\lparam})n^{|B|}$. Where $d_{\lparam}$ is the maximum $l$-outdegree of $\vec{G}$.
\end{lemma}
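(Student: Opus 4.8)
The plan is to enumerate the homomorphisms of the $\lparam$-skeleton of $\vec{H}$ restricted to the set of vertices $\lparam$-reachable from $B$ (this set is connected out of $B$, so \Lem{partialbressan} applies), and then discard every enumerated map that fails to map some hyperedge of $\vec{H}[B]_\lparam$ onto a hyperedge of $\vec{G}$. The whole content of the statement is that the maps we never enumerate — and the maps we throw away in the filter — are precisely maps with no extension to $\vec{H}$.

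\emph{The algorithm.} Write $R := \Reach_{\vec{H},\lparam}(B) = \Reach_{\Skeleton{\lparam}{\vec{H}}}(B)$; this is the common vertex set of the DAG $\Skeleton{\lparam}{\vec{H}}[B]$ and the DAH $\vec{H}[B]_\lparam$. In a preprocessing step costing $O(m)$ (rank and $|V(H)|$, hence $\lparam$, are constants and are suppressed), build a hash table storing every hyperedge of $\vec{G}$ as an ordered tuple, and materialize the DAG $\Skeleton{\lparam}{\vec{G}}$. Since $B \subseteq S(\vec{H}) = S(\Skeleton{\lparam}{\vec{H}})$, apply \Lem{partialbressan} to the DAGs $\Skeleton{\lparam}{\vec{H}}$ and $\Skeleton{\lparam}{\vec{G}}$ to obtain the complete list $\Phi^{\mathrm{sk}} := \Phi\bigl(\Skeleton{\lparam}{\vec{H}}[B],\,\Skeleton{\lparam}{\vec{G}}\bigr)$, of size $\poly(d_\lparam)n^{|B|}$, in time $\poly(d_\lparam)n^{|B|}$, where $d_\lparam = \Delta^+(\Skeleton{\lparam}{\vec{G}})$ is the maximum $\lparam$-outdegree of $\vec{G}$. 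Then set
\[
\Phi' \ := \ \bigl\{\, \phi \in \Phi^{\mathrm{sk}} \ :\ \phi(e) \in E(\vec{G}) \text{ for every } e \in E(\vec{H}[B]_\lparam) \,\bigr\},
\]
testing each $\phi(e)$ as an ordered tuple against the hash table. Because $\vec{H}$ is a fixed pattern, $|E(\vec{H}[B]_\lparam)| = O(1)$, so the filter costs $O(1)$ per map; the total time is $\poly(d_\lparam)n^{|B|}+O(m)$ and $|\Phi'| \le |\Phi^{\mathrm{sk}}| = \poly(d_\lparam)n^{|B|}$.

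\emph{Correctness.} Every $\phi \in \Phi'$ has domain $R = V(\vec{H}[B]_\lparam)$ and maps each hyperedge of $\vec{H}[B]_\lparam$ onto a hyperedge of $\vec{G}$ with its internal order preserved (DAH homomorphisms preserve arity), so $\Phi' \subseteq \Phi(\vec{H}[B]_\lparam,\vec{G})$. Now take $\phi \in \Phi(\vec{H}[B]_\lparam,\vec{G}) \setminus \Phi'$. Then $\phi \notin \Phi^{\mathrm{sk}}$: if it were in $\Phi^{\mathrm{sk}}$, then, being a DAH homomorphism of $\vec{H}[B]_\lparam$, it would satisfy the filter condition and land in $\Phi'$ — contradiction. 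So $\phi$ is not a homomorphism from $\Skeleton{\lparam}{\vec{H}}[B]$ to $\Skeleton{\lparam}{\vec{G}}$, and there is an arc $(u,w) \in E(\Skeleton{\lparam}{\vec{H}}[B])$ with $(\phi(u),\phi(w)) \notin E(\Skeleton{\lparam}{\vec{G}})$. Here $u,w \in R$, and this arc is generated by some hyperedge $e = (v_1,\dots,v_{|e|}) \in E(\vec{H})$ with $u = v_i$, $i \le \lparam+1$, and $w = v_j$, $j > i$; crucially $e$ itself need not lie in $R$ (the vertices $v_1,\dots,v_{i-1}$ preceding $u$ may be outside $R$), so $e$ need not be a hyperedge of $\vec{H}[B]_\lparam$ — this is exactly why $\phi$ can be a genuine homomorphism of $\vec{H}[B]_\lparam$ yet escape the enumeration.

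\emph{The crux, and the main obstacle.} Suppose toward a contradiction that $\ext{\phi}{\vec{H}}{\vec{G}} > 0$, so some DAH homomorphism $\psi:\vec{H} \to \vec{G}$ agrees with $\phi$ on $R$. Then $\psi$ preserves the hyperedge $e$ above: $\psi(e) = (\psi(v_1),\dots,\psi(v_{|e|})) \in E(\vec{G})$, consistently with $\vec{G}$'s vertex order. Since $v_i$ is among the first $\lparam+1$ vertices of $e$ and $v_j$ comes later, $\psi(v_i)$ is among the first $\lparam+1$ vertices of $\psi(e)$ and $\psi(v_j)$ comes later, so $(\psi(v_i),\psi(v_j)) \in E(\Skeleton{\lparam}{\vec{G}})$. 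But $u,w \in R$ give $\psi(v_i) = \phi(u)$ and $\psi(v_j) = \phi(w)$, so $(\phi(u),\phi(w)) \in E(\Skeleton{\lparam}{\vec{G}})$, contradicting the choice of $(u,w)$; hence $\ext{\phi}{\vec{H}}{\vec{G}} = 0$. The only subtle point — and what makes the lemma nontrivial — is precisely this interplay: the $0$-trimmed subhypergraph $\vec{H}[B]_\lparam$ may be disconnected, and its own $\lparam$-skeleton can be a \emph{proper} subgraph of $\Skeleton{\lparam}{\vec{H}}[B]$, so the skeleton enumeration genuinely misses some valid homomorphisms of $\vec{H}[B]_\lparam$; the argument above is what certifies that every such missed map, and every map removed by the filter, has extension count $0$ to $\vec{H}$, which is all that \Lem{partialhyper} claims.
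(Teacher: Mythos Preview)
Your proof is correct and follows essentially the same approach as the paper: enumerate skeleton homomorphisms via \Lem{partialbressan}, filter by hyperedge membership in $\vec{G}$, and argue that any missed $\phi$ with nonzero extension would, via its full extension $\psi$, be a skeleton homomorphism after all. You supply more detail than the paper does---in particular, you explicitly trace the failing skeleton arc $(u,w)$ back to a generating hyperedge $e$ and use the fact that DAH homomorphisms preserve the internal order of hyperedges to derive the contradiction, whereas the paper compresses this into the one-line assertion that a DAH homomorphism $\vec{H}\to\vec{G}$ is automatically a DAG homomorphism $\Skeleton{\lparam}{\vec{H}}\to\Skeleton{\lparam}{\vec{G}}$.
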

\begin{proof}
    First we compute the $\lparam$-skeletons of $\vec{H}$ and $\vec{G}$, this can be done in $O(m)$ time. Note that $\Delta^+(\Skeleton{\lparam}{\vec{G}}) = d_{\lparam}$. We use \Lem{partialbressan} to compute  $\Phi(\Skeleton{\lparam}{\vec{H}}[B], \vec{G})$ in time $poly(d_\lparam) n^{|B|}$. We construct $\Phi'$ by filtering every homomorphism and ensuring that they are a valid homomorphism from $\vec{H}[B]_l$ to $\vec{G}$, this takes $O(1)$ time for homomorphism. The entire process will take $poly(d_{\lparam})n^{|B|}+m$ time.
    
    We just need to show that every homomorphism in $\Phi(\vec{H} [B]_\lparam,\vec{G}) \setminus \Phi'$ has extension equal to $0$. Assume otherwise, there is a homomorphism $\phi' \in \Phi(\vec{H} [B]_\lparam,\vec{G}) \setminus \Phi'$ with $\ext{\phi'}{\vec{H}}{\vec{G}} \not= 0$. Let $\phi$ be a valid homomorphism from $\vec{H}$ to $\vec{G}$ that extends $\phi'$, $\phi$ must also be a valid homomorphism from $\Skeleton{\lparam}{\vec{H}}$ to $\Skeleton{\lparam}{\vec{G}}$, and therefore its restriction to $\vec{H} [B]_\lparam$, $\phi'$ must also be a valid homomorphism of the skeleton. But in that case ,we would have that $\phi'\in \Phi'$, reaching a contradiction.
    
\end{proof}

We must also adapt Lemma $3$ in \cite{Br21} to hypergraph homomorphisms. Recall $\down(B)$ is the down-closure of $B$ in $\cT$.

\begin{lemma} \label{lem:hom_product}
    Let $\lparam \in \mathbb{Z}^+\cup\{\infty\}$ and let $\cT$ be a \dagtree{} of $\Skeleton{\lparam}{\vec{H}}$ and $B$ a node of $\cT$. Fix $\phi : \vec{H}[B]_\lparam \to \vec{G}$. For every child $B'$ of $B$ in $\cT$, let $\phi_{B'}$ be the restriction of $\phi$ to the vertices of $\Reach_{\vec{H},l}(B)\cap \Reach_{\vec{H},l}(\down(B'))$.

    \[
        \ext{\phi}{\vec{H}[\down(B)]_\lparam}{\vec{G}} = \prod_{B' \in children(B)} \ext{\phi_{B'}}{\vec{H}[\down(B')]_\lparam}{\vec{G}}
    \]
\end{lemma}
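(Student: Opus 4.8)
The plan is to mirror Bressan's proof of Lemma~3 in \cite{Br21}, modifying the two places where hypergraphs behave differently: how hyperedges (not just edges) split across the children of $B$, and the fact that here ``reachability'' means reachability in $\Skeleton{\lparam}{\vec{H}}$ rather than in $\vec{H}$ directly.

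Write $V_0 = \Reach_{\vec{H},\lparam}(B)$, $V_{B'} = \Reach_{\vec{H},\lparam}(\down(B'))$ for each child $B'$ of $B$, and $V = \Reach_{\vec{H},\lparam}(\down(B))$ (the vertex set of $\vec{H}[\down(B)]_\lparam$). First I would record two structural facts. \emph{(a) Overlap:} for distinct children $B_1',B_2'$ of $B$, $V_{B_1'}\cap V_{B_2'}\subseteq V_0$. A vertex in the intersection is $\lparam$-reached from a source $s_1$ contained in some bag of the subtree of $\cT$ rooted at $B_1'$ and from a source $s_2$ contained in some bag of the subtree rooted at $B_2'$; the path in $\cT$ between those two bags passes through $B$, so the third axiom of the DAG-tree decomposition, applied to $\Skeleton{\lparam}{\vec{H}}$ (of which $\cT$ is a DAG-tree, by \Def{dtw}) together with $\Reach_{\vec{H},\lparam}(\cdot)=\Reach_{\Skeleton{\lparam}{\vec{H}}}(\cdot)$, forces the vertex into $V_0$. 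Since $\down(B)=B\cup\bigcup_{B'}\down(B')$ we also get $V=V_0\cup\bigcup_{B'}V_{B'}$, so $V$ is the disjoint union of $V_0$ and the pairwise-disjoint sets $V_{B'}\setminus V_0$; note $V_0\cap V_{B'}$ is exactly the domain of $\phi_{B'}$. \emph{(b) Hyperedge localisation:} every hyperedge $e$ of $\vec{H}[\down(B)]_\lparam$ satisfies $e\subseteq V_0$, or $e\subseteq V_{B'}$ for some child $B'$. This is where I would use the $\lparam$-skeleton structure: if $e=(v_1,\dots,v_{|e|})$ in internal order, then $v_1$ is adjacent in $\Skeleton{\lparam}{\vec{H}}$ to every later vertex of $e$, so $v_1$ $\lparam$-reaches all of $e$; as $v_1\in V=\bigcup_{s\in\down(B)}\Reach_{\vec{H},\lparam}(s)$, there is a source $s\in\down(B)$ with $e\subseteq\Reach_{\vec{H},\lparam}(s)$, and picking a bag in the subtree rooted at $B$ that contains $s$ — either $B$ itself (giving $e\subseteq V_0$) or a bag in the subtree rooted at some child $B'$ (giving $e\subseteq V_{B'}$) — finishes the case analysis.

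With (a) and (b) in hand, the key step is a bijection between the maps $\psi:V\to V(\vec{G})$ counted by $\ext{\phi}{\vec{H}[\down(B)]_\lparam}{\vec{G}}$ (the homomorphisms of $\vec{H}[\down(B)]_\lparam$ agreeing with $\phi$ on $V_0$) and the tuples $(\psi_{B'})_{B'}$, where $\psi_{B'}$ ranges over the homomorphisms counted by $\ext{\phi_{B'}}{\vec{H}[\down(B')]_\lparam}{\vec{G}}$. In one direction I would restrict $\psi$ to each $V_{B'}$; since $\psi$ agrees with $\phi$ on $V_0\supseteq V_0\cap V_{B'}$ it agrees with $\phi_{B'}$, and it respects every hyperedge contained in $V_{B'}$, so it is a valid $\psi_{B'}$. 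In the other direction, given $(\psi_{B'})_{B'}$ I would define $\psi$ to equal $\phi$ on $V_0$ and $\psi_{B'}$ on $V_{B'}\setminus V_0$; this is well-defined by the disjointness in (a) together with the agreement $\psi_{B'}|_{V_0\cap V_{B'}}=\phi_{B'}=\phi|_{V_0\cap V_{B'}}$, and it is a homomorphism of $\vec{H}[\down(B)]_\lparam$ because, by (b), each of its hyperedges $e$ lies in $V_0$ — where $\psi$ coincides with the homomorphism $\phi$ of $\vec{H}[B]_\lparam$, so $\psi(e)=\phi(e)\in E(\vec{G})$ — or in some $V_{B'}$ — where $\psi$ coincides with the homomorphism $\psi_{B'}$ of $\vec{H}[\down(B')]_\lparam$, so $\psi(e)=\psi_{B'}(e)\in E(\vec{G})$. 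The two constructions are readily mutually inverse, and counting both sides gives the claimed product, with the empty product (a leaf $B$) correctly evaluating to $1$.

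The hard part, I expect, is item (b): in the graph case an edge $(u,v)$ automatically lives inside a single $\Reach(s)$, but a hyperedge can meet $V_0$ as well as several of the $V_{B'}$, and it is exactly the property that the first vertex of a directed hyperedge $\lparam$-reaches all the others in $\Skeleton{\lparam}{\vec{H}}$ that prevents a hyperedge from straddling two distinct children. The remaining care is bookkeeping — keeping straight which objects are $\lparam$-reachability closures and which are the $0$-trimmed induced subhypergraphs $\vec{H}[\cdot]_\lparam$ built on them — but once (a) and (b) are in place that part is routine.
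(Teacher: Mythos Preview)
Your proposal is correct and follows essentially the same route as the paper: both set up a bijection between extensions of $\phi$ and tuples of extensions of the $\phi_{B'}$, and both establish the hyperedge-localisation fact (your item~(b)) by observing that the first vertex of a directed hyperedge $\lparam$-reaches all the others, so the whole hyperedge lies in $\Reach_{\vec{H},\lparam}(B)$ or in some $\Reach_{\vec{H},\lparam}(\down(B'))$. If anything, your write-up is slightly more explicit than the paper's about item~(a) (the paper just asserts that the $\phi'_{B'}$ ``can only intersect on the vertices of $\vec{H}[B]_\lparam$'' without invoking the third DAG-tree axiom), but the argument is the same.
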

\begin{proof}
    Let $\Phi(\phi)$ be the set of homomorphisms from $\vec{H}[\down(B)]_\lparam$ to $\vec{G}$ that extend to $\phi$. Similarly, let $\Phi_{B'}(\phi_{B'})$ be the set of homomorphisms from $\vec{H}[\down(B')]_\lparam$ to $\vec{G}$ that extend $\phi_{B'}$. We can show that there is a bijection between $\Phi(\phi)$ and $\bigtimes_{B' \in\ children(B)} \Phi_{B'}(\phi_{B'})$.

    First, for any $\phi' \in \Phi(\phi)$ let $\phi'_{B'}$ be the restriction of $\phi'$ to $\vec{H}[\down(B')]_\lparam$. Note that $\phi'_{B'}$ must agree with $\phi_{B'}$, therefore $\phi'_{B'} \in \Phi_{B'}(\phi_{B'})$. Then we will have that the tuple $(\phi'_{B'}:B' \in\ children(B)) \in \bigtimes_{B' \in\ children(B)} \Phi_{B'}(\phi_{B'})$.

    Conversely, let $(\phi'_{B'}:B' \in children(B)) \in \bigtimes_{B' \in\ children(B)} \Phi_{B'}(\phi_{B'})$, we show that the homomorphism  $\phi'$ obtained by combining $\phi$ and $\phi'_{B'}$ for all $B' \in\ children(B)$ is in $\Phi(\phi)$. First, note that every homomorphism $\phi'_{B'}$ must agree with each other, since they can only intersect on the vertices of $\vec{H}[B]_\lparam$, but the values of those vertices for all homomorphism will correspond to $\phi$. Second, $\phi'$ is a mapping from $\vec{H}[\down(B)]_\lparam$ to $\vec{G}$ since every vertex in $\vec{H}[\down(B)]_\lparam$ is in $\vec{H}[B]_\lparam$ or $\vec{H}[\down(B')]_\lparam$ for some child $B'$ of $B$. Finally, we can show that every hyperedge in $\vec{H}[\down(B)]_\lparam$ is properly maintained in the homomorphism.

    Let $e$ be a hyperedge in $\vec{H}[\down(B)]_\lparam$ and $s$ its source (first vertex in its internal ordering). We can distinguish two cases:
    \begin{itemize}
        \item $s \in \Reach_{\vec{H},\lparam}(B)$, in this case $e \subseteq \Reach_{\vec{H},\lparam}(B)$, and therefore $\phi'(e) = \phi(e) \in E(\vec{G})$.
        \item $s \in \Reach_{\vec{H},\lparam}(\down(B'))$ for some $B' \in\ child(B)$, in this case $e \subseteq \Reach_{\vec{H},\lparam}(\down(B'))$, and therefore $\phi'(e)=\phi'_{B'}(e) \in E(\vec{G})$.
    \end{itemize}

    Therefore, $\phi' \in \Phi(\phi)$. 
\end{proof}

We now introduce the following algorithm, which is almost identical to Algorithm $1$ in \cite{Br21} (Bressan's algorithm), the main difference being that we only ensure that homomorphisms with non-zero extension get counted accurately.

\begin{algorithm}
\caption{$HomCount_l(\vec{H},\vec{G},B,\cT)$}
\begin{algorithmic}[1]
\State Compute $\Phi'$ as in \Lem{partialhyper}.
\State Initialize $C_B$ as an empty dictionary with default value of $0$.
\If{$B$ is a leaf of $\cT$}
    \For{every homomorphism $\phi \in \Phi'$}
        \State $C_B(\phi) = 1$
    \EndFor
\Else
    \For{every child $B'$ of $B$ in $\cT$}
        \State $C_{B'} = HomCount(\vec{H},\vec{G},B',\cT)$
        \State Initialize $AGG_{B'}$ as an empty dictionary with default value of $0$.
        \For {every nonzero entry $\phi$ in $C_{B'}$}
            \State Let $\phi_r$ be the restriction of $\phi$ to $\Reach_{\vec{H},l}(B)\cap \Reach_{\vec{H},l}(\down(B'))$
            \State $AGG_{B'}(\phi_r) += C_{B'}(\phi)$.
        \EndFor
    \EndFor
    \For{every homomorphism $\phi \in \Phi'$}
        \For{every child $B'$ of $B$ in $\cT$}
            \State Let $\phi_{B'}$ be the restriction of $\phi$ to $\Reach_{\vec{H},l}(B)\cap \Reach_{\vec{H},l}(\down(B'))$.
        \EndFor
        \State $C_B(\phi) = \prod_{B' \in \text{children}(B)}$ $AGG_{B'}(\phi_{B'})$
    \EndFor
\EndIf
\State Return $C_B$
\end{algorithmic}
\end{algorithm}

\begin{lemma}
    Let $\vec{H}$,$\vec{G}$ be any hypergraphs, $\cT=(\cB,\cE)$ a $l$-\dagtree{} of $\vec{H}$ and $B \in \cB$ any vertex from $\cT$. Algorithm $HomCount_l(\vec{H},\vec{G},B,\cT)$ returns a dictionary that for all homomorphisms $\phi: H[B]_\lparam \to \vec{G}$:
    \begin{itemize}
        \item If $\ext{\phi}{\vec{H}}{\vec{G}} > 0$, then $C_B(\phi) = \ext{\phi}{\vec{H}[\down(B)]_\lparam}{G)}$. 
        \item If $\ext{\phi}{\vec{H}}{\vec{G}} = 0$, then $C_B(\phi) \leq \ext{\phi}{\vec{H}[\down(B)]_\lparam}{G)}$.
    \end{itemize}

    Moreover, the algorithm runs in time $poly(d_\lparam)n^{\ldtw{l}(\cT)}\log{n}+O(m)$. Where $d_\lparam$ is the maximum $l$-outdegree of $\vec{G}$.
\end{lemma}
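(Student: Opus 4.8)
The plan is to establish both bullets together by induction on the subtree of $\cT$ rooted at $B$, mirroring the structure of Bressan's argument (Section~3 of \cite{Br21}) but working throughout with $\lparam$-skeletons and invoking \Lem{hom_product} in place of its graph analogue. For the base case, where $B$ is a leaf, $\down(B)=\{B\}$, so $\vec{H}[\down(B)]_\lparam=\vec{H}[B]_\lparam$ and $\ext{\phi}{\vec{H}[\down(B)]_\lparam}{\vec{G}}=1$ for every homomorphism $\phi$ of this hypergraph, since $\phi$ is already defined on all of $\Reach_{\vec{H},\lparam}(B)$. The algorithm sets $C_B(\phi)=1$ precisely for $\phi\in\Phi'$ and leaves $C_B(\phi)=0$ otherwise, and by \Lem{partialhyper} any $\phi\notin\Phi'$ has $\ext{\phi}{\vec{H}}{\vec{G}}=0$; so the first bullet holds (it only concerns $\phi\in\Phi'$) and the second holds because $C_B(\phi)\le 1$ always.

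For the inductive step, with $B$ internal, fix a child $B'$ and write $Z_{B'}=\Reach_{\vec{H},\lparam}(B)\cap\Reach_{\vec{H},\lparam}(\down(B'))$. Applying the DAG-tree axiom to $B'$, which lies on the path from $B$ to every bag of the subtree rooted at $B'$, gives $Z_{B'}\subseteq\Reach_{\vec{H},\lparam}(B')$; hence $\phi_{B'}:=\phi|_{Z_{B'}}$ is exactly the object indexing $AGG_{B'}$ in the algorithm and appearing in \Lem{hom_product}. Grouping the homomorphisms of $\vec{H}[\down(B')]_\lparam$ that extend $\phi_{B'}$ by their restriction to $\vec{H}[B']_\lparam$ gives $\ext{\phi_{B'}}{\vec{H}[\down(B')]_\lparam}{\vec{G}}=\sum_\psi \ext{\psi}{\vec{H}[\down(B')]_\lparam}{\vec{G}}$, summed over homomorphisms $\psi$ of $\vec{H}[B']_\lparam$ with $\psi|_{Z_{B'}}=\phi_{B'}$, while the algorithm forms $AGG_{B'}(\phi_{B'})=\sum_\psi C_{B'}(\psi)$ over the same $\psi$ (absent dictionary entries contributing $0$). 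The inductive hypothesis gives $C_{B'}(\psi)\le\ext{\psi}{\vec{H}[\down(B')]_\lparam}{\vec{G}}$ for every $\psi$, so $AGG_{B'}(\phi_{B'})\le\ext{\phi_{B'}}{\vec{H}[\down(B')]_\lparam}{\vec{G}}$; combined with $C_B(\phi)=\prod_{B'}AGG_{B'}(\phi_{B'})$ for $\phi\in\Phi'$ (and $C_B(\phi)=0\le\ext{\phi}{\vec{H}[\down(B)]_\lparam}{\vec{G}}$ for $\phi\notin\Phi'$) and \Lem{hom_product}, this yields $C_B(\phi)\le\ext{\phi}{\vec{H}[\down(B)]_\lparam}{\vec{G}}$ for all $\phi$, i.e.\ the second bullet.

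The main obstacle is upgrading this to equality when $\ext{\phi}{\vec{H}}{\vec{G}}>0$ (which also forces $\phi\in\Phi'$ by \Lem{partialhyper}). By the inductive hypothesis it suffices to prove: if $\phi$ extends to a full homomorphism $\Psi:\vec{H}\to\vec{G}$, the homomorphism $\psi$ of $\vec{H}[B']_\lparam$ agrees with $\phi$ on $Z_{B'}$, and $\psi$ extends to some $\chi:\vec{H}[\down(B')]_\lparam\to\vec{G}$, then $\psi$ extends to a full homomorphism of $\vec{H}$ — for then the inductive hypothesis gives $C_{B'}(\psi)=\ext{\psi}{\vec{H}[\down(B')]_\lparam}{\vec{G}}$, every term of the two sums matches, $AGG_{B'}(\phi_{B'})=\ext{\phi_{B'}}{\vec{H}[\down(B')]_\lparam}{\vec{G}}$, and \Lem{hom_product} gives $C_B(\phi)=\ext{\phi}{\vec{H}[\down(B)]_\lparam}{\vec{G}}$. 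I would prove this claim by gluing: set $\Theta(v)=\chi(v)$ for $v\in\Reach_{\vec{H},\lparam}(\down(B'))$ and $\Theta(v)=\Psi(v)$ otherwise; then $\Theta$ extends $\psi$, and one only checks $\Theta(e)\in E(\vec{G})$ for each hyperedge $e$ of $\vec{H}$. If the first vertex of $e$ lies in $\Reach_{\vec{H},\lparam}(\down(B'))$ then all of $e$ does (the first vertex $\lparam$-reaches every vertex of $e$), so $e\in E(\vec{H}[\down(B')]_\lparam)$ and $\Theta(e)=\chi(e)\in E(\vec{G})$; if $e$ avoids $\Reach_{\vec{H},\lparam}(\down(B'))$ then $\Theta(e)=\Psi(e)$. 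The delicate case is a hyperedge whose first vertex $s$ is outside $\Reach_{\vec{H},\lparam}(\down(B'))$ but which contains some $v\in\Reach_{\vec{H},\lparam}(\down(B'))$: here I would take a source $t$ of $\vec{H}$ in the subtree of $B'$ with $v\in\Reach_{\vec{H},\lparam}(t)$ and a source $t'$ of $\vec{H}$ with $s\in\Reach_{\vec{H},\lparam}(t')$ (walking backward along the $\lparam$-skeleton from $s$), observe that no bag containing $t'$ can lie in the subtree of $B'$ (else $s\in\Reach_{\vec{H},\lparam}(\down(B'))$), conclude that $B$ lies on the tree path between a bag of $t$ and a bag of $t'$, and invoke the DAG-tree axiom to get $v\in\Reach_{\vec{H},\lparam}(B)$, i.e.\ $v\in Z_{B'}$; since $\chi$ and $\Psi$ both agree with $\phi$ on $Z_{B'}$, this gives $\Theta(e)=\Psi(e)\in E(\vec{G})$. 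This is the hypergraph analogue of the corresponding step in Bressan's proof; the one genuinely new point is that a hyperedge of arity more than $\lparam+2$ has its first $\lparam+1$ vertices reach the rest, which is precisely what makes ``the first vertex reaches all of $e$'' and the induced-$0$-trimmed structure of $\vec{H}[\down(B')]_\lparam$ the right tools.

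For the running time, the recursion tree is $\cT$ itself, which has $O(1)$ nodes because $\vec{H}$ is fixed. At a node $B$, computing $\Phi'$ costs $\poly(d_\lparam)n^{|B|}+O(m)$ with $|\Phi'|=\poly(d_\lparam)n^{|B|}$ by \Lem{partialhyper}; building each $AGG_{B'}$ scans the $\poly(d_\lparam)n^{|B'|}$ entries of $C_{B'}$ with $O(\log n)$ dictionary work apiece; and the closing loop over $\Phi'$ performs $O(1)$ restrictions and $O(\log n)$-time lookups per homomorphism. Summing over the $O(1)$ bags and using $|B|\le\dtw(\cT)$ for every bag yields the claimed bound $\poly(d_\lparam)n^{\dtw(\cT)}\log n+O(m)$.
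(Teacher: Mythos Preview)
Your proposal is correct and follows essentially the same approach as the paper: induction on the subtree of $\cT$, the same base case, the same use of \Lem{partialhyper} and \Lem{hom_product}, and the same gluing of a full extension of $\phi$ with an extension of $\psi$ over $\vec{H}[\down(B')]_\lparam$, followed by the identical hyperedge case-split on whether the first vertex of $e$ lies in $\Reach_{\vec{H},\lparam}(\down(B'))$. Your formulation of the key claim (hypothesis ``$\psi$ extends to some $\chi$'' rather than ``$C_{B'}(\psi)\neq 0$'') and your explicit verification that $Z_{B'}\subseteq\Reach_{\vec{H},\lparam}(B')$ are slightly cleaner than the paper's presentation, but the argument is the same.
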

\begin{proof}
    The runtime analysis is similar to the one from Lemma $5$ in \cite{Br21}, as the algorithm takes the same steps, we ignore dependencies on the size of $H$. We will need to run the algorithm recursively for each bag in $\cT$, the construction of $\cT$ depends only on $H$ and therefore we will have $O(1)$ bags. Computing $\Phi'$ takes $poly(d_{\lparam})n^{|B|}+O(m)$ from \Lem{partialhyper}, and we will have $|\Phi'| = poly(d_{\lparam})n^{|B|}$. For each element in $\Phi$ we will require $O(\log{n})$ cost to access and write in the dictionary. Therefore, the complexity per iteration is $poly(d_{\lparam})n^{|B|}\log{n}+O(m)$, additionally it $|B|$ is bounded by $\ldtw{l}(\cT)$ for all $B$, therefore the algorithm will take $poly(d_{\lparam})n^{\ldtw{l}(\cT)}\log{n}+O(m)$ total time.
    
    We prove the correctness by induction on the nodes of the tree $\cT$.

    \textbf{Base case: }If $B$ is a leaf then $\vec{H}[B]_\lparam = \vec{H}[\down(B)]_\lparam$, note that from \Lem{partialhyper} we have that every homomorphism $\phi: \vec{H}[B]_\lparam\to \vec{G}$ with $\ext{\phi}{\vec{H}}{\vec{G}} > 0$ will be in $\Phi'$ and therefore the algorithm will set $C_B(\phi) = 1$ which means $C_B(\phi) = \ext{\phi}{\vec{H}[\down(B)\_\lparam}{\vec{G}} = 1$. If $\ext{\phi}{\vec{H}}{\vec{G}} = 0$ then either $\phi \in \Phi'$ and then $C_B(\phi) = \ext{\phi}{\vec{H}[\down(B)]_\lparam}{\vec{G}} = 1$, or $\phi \not\in \Phi'$ and then $C_B(\phi)$ just takes the default value of $0 < \ext{\phi}{\vec{H}[\down(B)]_\lparam}{\vec{G}}$.

    \textbf{Inductive step: }For the inductive step, assume that for every child $B'$ of $B$ in $\cT$ we have $C_{B'}(\phi) = \ext{\phi}{\vec{H}[\down(B')]_\lparam}{\vec{G}}$ for all $\phi \in \Phi(\vec{H}[B']_\lparam,\vec{G})$ with $\ext{\phi}{\vec{H}}{\vec{G}} > 0$ and $C_{B'}(\phi) \leq \ext{\phi}{\vec{H}[\down(B')]_\lparam}{\vec{G}}$ otherwise. Let $\phi$ be a homomorphism in $\Phi(\vec{H}[B]_\lparam,\vec{G})$ with $\ext{\phi}{\vec{H}}{\vec{G}}\neq 0$, we can show that $C_B(\phi) = \ext{\phi}{\vec{H}[\down(B)]_\lparam}{G)}$.
    
    Let $B'$ be a child of $B$, and $\phi_{B'}$ the restriction of $\phi$ to $\Reach_{\vec{H},l}(B)\cap \Reach_{\vec{H},l}(\down(B'))$, we can show that every homomorphism $\phi' \in \Phi(\vec{H}[B']_\lparam,\vec{G})$ with $C_{B'}(\phi') \neq 0$ that agrees with $\phi_{B'}$ will also have non-zero extension, $\ext{\phi'}{\vec{H}}{\vec{G}}\neq 0$. Let $\phi''$ be a homomorphism from $\vec{H}[\down(B')]_\lparam$ to $\vec{G}$ that agrees with $\phi'$, such homomorphism exists as $C_{B'}(\phi') \neq 0$. Let $\phi^*$ be a homomorphism from $\vec{H}$ to $\vec{G}$ that agrees with $\phi$, such homomorphism exists as $\phi$ has non-zero extension, and $\phi^*_{B'}$ its restriction to $V(\vec{H})\setminus (\Reach_{\vec{H},l}(\down(B'))\setminus \Reach_{\vec{H},l}(B))$.

    Note that $\phi^*_{B'}$ contains all the vertices in $\Reach_{\vec{H},l}(B))$, and therefore agrees with $\phi$ and $\phi_{B'}$. Similarly, $\phi''$ contains all the vertices in $(\Reach_{\vec{H},l}(\down(B'))$, and agrees with $\phi'$, which in turn agrees with $\phi_{B'}$. Let $V(\phi)$ be the set of vertices mapped by the homomorphism $\phi$, we have the following:

    \begin{align} \label{eq:union_hom}
        V(\phi^*_{B'}) \cup V(\phi'') = (V(\vec{H})\setminus (\Reach_{\vec{H},l}(\down(B'))\setminus \Reach_{\vec{H},l}(B)) ) \cup  (\Reach_{\vec{H},l}(\down(B'))= V(\vec{H})
    \end{align}
    \begin{align}
        V(\phi^*_{B'}) \cap V(\phi'') \nonumber
        &= (V(\vec{H})\setminus (\Reach_{\vec{H},l}(\down(B'))\setminus \Reach_{\vec{H},l}(B)) ) \cap  (\Reach_{\vec{H},l}(\down(B')) 
        \\&= \Reach_{\vec{H},l}(\down(B'))\setminus (\Reach_{\vec{H},l}(\down(B')) \setminus (\Reach_{\vec{H},l}(\down(B')) \cap \Reach_{\vec{H},l}(B)))
        \\&= \Reach_{\vec{H},l}(\down(B')) \cap \Reach_{\vec{H},l}(B) 
        = V(\phi_{B'}) \nonumber
    \end{align}
        
    Note that $\phi^*_{B'}$ and $\phi''$ intersect only in the vertices of $\phi_{B'}$, and we saw that both agree with $\phi_{B'}$. Therefore, we can combine $\phi^*_{B'}$ and $\phi''$ into a homomorphism $\phi^{T} = \phi''+\phi^*_{B'}$, and we can show that it must be a valid homomorphism from $\vec{H}$ to $\vec{G}$. First, from \Eqn{union_hom} we have that $V(\phi^T) = V(\phi^*_{B'}) \cup V(\phi'') = V(\vec{H})$. Only left to check is that every hyperedge is preserved in the image. Let $e$ be a hyperedge in $\vec{H}$, and let $s$ be the ``source'' of $e$, that is, the first vertex of its internal ordering. We distinguish two cases:
    \begin{itemize}
        \item $s \in \Reach_{\vec{H},l}(\down(B'))$, then $e \subseteq V(\phi'')$, and $\phi^T(e)=\phi''(e) \in E(\vec{G})$.
        \item $s \not\in \Reach_{\vec{H},l}(\down(B'))$, then for every vertex $v \in e$ we have $v \in \Reach_{\vec{H},l}(B)$, and therefore $v \in V(\phi^*_{B'})$; or $v \not\in \Reach_{\vec{H},l}(B)$, but then $v \not\in \Reach_{\vec{H},l}(\down(B'))$ as otherwise it would not be a valid $l$-\dagtree{}, which also means $v \in V(\phi^*_{B'})$. Thus, $e \subseteq V(\phi^*_{B'})$ and $\phi^T(e)=\phi^*_{B'}(e) \in E(\vec{G})$.
    \end{itemize}
    Therefore, $\phi^T$ is a valid homomorphism, which means $\ext{\phi'}{\vec{H}}{\vec{G}}\neq 0$. Thus, from the inductive hypothesis we will have that $C_{B'}(\phi') = \ext{\phi'}{\vec{H}[\down(B')]_\lparam}{\vec{G}}$. We will then have that:

    \begin{equation} \label{eq:agg}
        AGG_{B'}(\phi_{B'}) = \sum_{\substack{\phi' \in C_{B'} \\ \phi' \text{ agrees } \phi_{B'}}} C_{B'}(\phi') = \sum_{\substack{\phi': \vec{H}[B']_\lparam \to \vec{G} \\ \phi' \text{ agrees } \phi_{B'}}}  \ext{\phi'}{\vec{H}[\down(B')]_\lparam}{\vec{G}} = \ext{\phi_{B'}}{\vec{H}[\down(B')]_\lparam}{\vec{G}}
    \end{equation}

    The first equality comes from lines $9$-$12$ in the algorithm, the second from the inductive hypothesis together with the previous explanation, and the third from the fact that every homomorphism contributing to $\ext{\phi_{B'}}{\vec{H}[\down(B')]}{\vec{G}}$ also contributes to exactly one of the $\ext{\phi'}{\vec{H}[\down(B')]}{\vec{G}}$.
    Therefore, for every $\phi$ with non-zero extension, we will have:

    \begin{equation} \label{eq:cb}
        C_B(\phi) = \prod_{B' \in children(B)} \ext{\phi_{B'}}{\vec{H}[\down(B')]_\lparam}{\vec{G}} = \ext{\phi}{\vec{H}[\down(B)]_\lparam}{\vec{G}}
    \end{equation}

    Here, the first equality comes from combining line $16$ in the algorithm and \Eqn{agg}, and the second equality comes from \Lem{hom_product}.

    Now, let $\phi$ be a homomorphism in $\Phi(\vec{H}[B]_\lparam,\vec{G})$ with $\ext{\phi}{\vec{H}}{\vec{G}} = 0$, we can show that $C_B(\phi) \leq \ext{\phi}{\vec{H}[\down(B)]_\lparam}{G)}$. We can rewrite \Eqn{agg} by replacing the second equality with an inequality giving:
    \begin{equation}
        AGG_{B'}(\phi_{B'}) \leq \ext{\phi_{B'}}{\vec{H}[\down(B')]_\lparam}{\vec{G}}
    \end{equation}

    Using this instead in \Eqn{cb} will finally give:

    \begin{equation}
        C_B(\phi) \leq \ext{\phi}{\vec{H}[\down(B)]_\lparam}{\vec{G}}
    \end{equation}
    Completing the induction.
\end{proof} 

\subsection{Completing the proof}

We finally have all the pieces for completing the proof of \Thm{computing}, which we restate.

\computing*
\begin{proof}
	We start by computing the $l$-degeneracy ordering of $G$, and then create the directed acyclic hypergraph $\vec{G}^{(l)}$. From \Lem{degeneracy_ordering}, this can be done in time $O(n+m(\log{m} \cdot r(G) + r(G)^2)) = O(n+m\log{m})$ as the rank of $G$ is bounded. Moreover, we have that the maximum $l$-outdegree of $\vec{G}^{(l)}$ is $\Delta^+_\lparam({\vec{G}^{(l)}}) \leq \ldegen{\lparam}(G) \cdot r(G) = O(\ldegen{\lparam}(G))$.

    Using \Lem{contracted} and equation \Eqn{undirect_to_direct}, we can compute $\Hom{G}{H}$ by computing the value of $\Hom{\vec{G}}{\vec{H}}$ for all the directed acyclic hypergraphs in $\bigcup_{H' \in \contract(H)}\Sigma(H')$. The number of these instances is bounded by some function of $k$. Also, using \Lem{dtw_bound} we have that for any $H' \in \contract(H)$, $\ldtw{l}(H') \leq \ldtw{l}(H)$. And therefore, each $\vec{H}$ for which we need to compute $\Hom{\vec{G}}{\vec{H}}$, will also have $\ldtw{l}(\vec{H}) \leq \ldtw{l}(H)$.

    Finally, for each $\vec{H}$, we compute a $l$-\dagtree{} $\cT$ and root it arbitrarily in some node $R$. We then run $HomCount(\vec{H},\vec{G}^{(l)}),R,\cT)$, because $R$ is the root of $\cT$, we will have that the output $C_R$ is a dictionary such that for each homomorphism $\phi: \vec{H}[R]_\lparam\to \vec{G}$:
    \begin{itemize}
        \item if $\ext{\phi}{\vec{H}}{\vec{G}}>0$, then $C_R(\phi) = \ext{\phi}{\vec{H}[\down(R)]_\lparam}{\vec{G}} = \ext{\phi}{\vec{H}}{\vec{G}}$.
        \item if $\ext{\phi}{\vec{H}}{\vec{G}}=0$, then $C_R(\phi) \leq \ext{\phi}{\vec{H}[\down(R)]_\lparam}{\vec{G}} = \ext{\phi}{\vec{H}}{\vec{G}} = 0$.
    \end{itemize}
    Therefore, summing over all entries of $C_R$ we will obtain
    \[
        \sum_{\phi \in C_R} C_R(\phi) = \sum_{\phi: \vec{H}[R]_\lparam\to \vec{G}} \ext{\phi}{\vec{H}}{\vec{G}} = \Hom{\vec{G}}{\vec{H}}
    \]

     Running $HomCount(\vec{H},\vec{G}^{(l)}),R,\cT)$ will take $poly(d_\lparam)n^{\ldtw{l}(\cT)}\log{n}+O(m)$, we will have $\ldtw{\lparam}(\cT) \leq \ldtw{l}(\vec{H}) \leq \ldtw{l}(H)$, and $d_\lparam=O(\ldegen{\lparam}(G))$. Moreover, $m = O(\ldegen{\lparam}(G)\cdot n)$. Therefore, we will need $poly(\ldegen{\lparam}(G)) O(n^{\ldtw{l}(H)}\log{m})$ total time to compute $\Hom{G}{H}$.
    
 \end{proof}

\section{A characterization of patterns with $\dtw_l = 1$} \label{sec:characterize}

In this section, we prove \Thm{characterize}.

We start by giving an alternative definition of the sets $\setH{\lparam}$, this definition will help us prove the theorem. First, we define an \connector{l}.

\begin{definition}[\connector{l}] \label{def:connector}
	Let $H$ be a hypergraph. Let $X$ and $V$ be two disjoint sets of vertices of $H$, such that $X\cup V = V(H)$. We say that $H$ is an \connector{l} of $V$ if:
	\begin{itemize}
		\item $H$ is connected.
		\item Every hyperedge of $H$ with at least two vertices in $X$ has at least $l+1$ vertices in $V$. $\forall e \in E(H') | |e \cap X|\geq 2, |e \cap V| > l$.
		\item Let $H'$ be the trimmed subhypergraph of $H$ induced by $V$, there exists at least one connected component $V' \subset V$ in $H'$, such that every vertex in $X$ is a neighbor (shares a hyperedge in $H$) with a vertex in $V'$.
	\end{itemize}
\end{definition}

Secondly, we define the sets $\setHk{l}{k}$. Basically, a hypergraph belongs to $\setHk{l}{k}$ if it can be split into a core $V$ of $k$ vertices and $k$ \connector{l}s, such that every subset of $k-1$ vertices has its own \connector{l}.

\begin{definition} [$\setHk{l}{k}$] \label{def:seth}
	We define $\setHk{l}{k}$ as the set of all hypergraphs $H$ such that we can split the vertices of $V(H)$ into the set $X=\{x_0,..,x_{k-1}\}$ and $k$ disjoint sets $V_0,...,V_{k-1}$, with $X \cup \bigcup_i V_{i} = V(H)$; and split the set of hyperedges $E(H)$ into $k$ disjoint sets $E_0,...,E_{k-1}$, with $\bigcup_i E_{i} = E(H)$; satisfying:
	\begin{itemize}
		\item For all $i \in [0,k-1]$, the hypergraph $H_i=(V_i\cup (X\setminus\{x_i\},E_i)$ is an \connector{l} of $X \setminus \{x_i\}$.
		\item For all $i$, all hyperedges $e \in E_i$ are completely contained in $V_i\cup (X\setminus\{x_i\})$. Equivalently, no hyperedge $e \in E(H)$ contains two vertices in different subsets $V_i,V_j$, or connects $V_i$ and the vertex $x_i$.
	\end{itemize}
\end{definition}

We can show that \Def{obstruct} and \Def{seth} are equivalent.

\begin{lemma}
	\[
		\setH{\lparam} = \bigcup_{k\geq 3} \setHk{\lparam}{k}
	\]
\end{lemma}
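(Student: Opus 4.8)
The plan is to prove the two inclusions separately, each time reading off the data demanded by one definition from the structure guaranteed by the other. Throughout I discard arity-$1$ hyperedges, which affect neither definition, and I use ``trimmed'' for the $\infty$-trimmed notion, as both \Def{obstruct} and \Def{connector} do.

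\emph{From $\setHk{\lparam}{k}$ to $\setH{\lparam}$.} Given $H\in\setHk{\lparam}{k}$ with core $X=\{x_0,\dots,x_{k-1}\}$, parts $V_0,\dots,V_{k-1}$ and edge sets $E_0,\dots,E_{k-1}$, take $C:=X$. Since no hyperedge meets two distinct $V_i$ and none joins $V_i$ to $x_i$, the hyperedges meeting a given $V_i$ are exactly those of $E_i$, so the trimmed subhypergraph on $V(H)\setminus C=\bigcup_iV_i$ is the disjoint union over $i$ of the trimmed subhypergraphs of $H_i$ on $V_i$; its components are therefore the components of these. For each $i<k$ let $D_i$ be the distinguished component $V'$ furnished by the third bullet of \Def{connector}, and list the remaining components after $D_0,\dots,D_{k-1}$. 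The three conditions follow quickly. If $e\in E_i$ witnesses that some $c\in X\setminus\{x_i\}$ is adjacent to $V'$, then $e\cap V_i$ is a connected subset of the trimmed subhypergraph of $H_i$ on $V_i$ that meets $D_i$, hence $e\cap V_i\subseteq D_i$ and $e\subseteq C\cup D_i$; so the hyperedges of $H$ lying inside $C\cup D_i$ cover $C\setminus\{x_i\}$ (condition 3), and since $E_i\subseteq V_i\cup(X\setminus\{x_i\})$ they omit $x_i$, so no such edge set spans $C$ (condition 2 and the second half of condition 3). Finally, any $e$ with $|e\cap C|\ge 2$ lies in some $E_i$ with $x_i\notin e$, so $|e\cap(X\setminus\{x_i\})|\ge 2$ and the second bullet of \Def{connector} gives $|e\setminus C|=|e\cap V_i|>\lparam$ (condition 1).

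\emph{From $\setH{\lparam}$ to $\setHk{\lparam}{k}$.} Given an obstruction with core $C=\{c_0,\dots,c_{k-1}\}$, components $D_1,\dots,D_p$ and sets $E_1,\dots,E_p$, let $C_j:=C\cap\bigcup_{e\in E_j}e$. Condition 2 gives $C_j\subsetneq C$ for every $j$, and the numbering of condition 3 gives $C_i=C\setminus\{c_i\}$ for $i\le k$. Put $X:=C$; keep $V_i\supseteq D_i$ and $E'_i\supseteq E_i$ for $i\le k$; and for each extra component $D_j$ ($j>k$) pick $\beta(j)\le k$ with $c_{\beta(j)}\notin C_j$ (possible as $C_j\subsetneq C$) and place $D_j$ in $V_{\beta(j)}$ and $E_j$ in $E'_{\beta(j)}$. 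Then every $e\in E'_i$ lies in some $E_j$ with $c_i\notin C_j$, so $e\subseteq(C\setminus\{c_i\})\cup V_i$ (the separation clause of \Def{seth}), and $H_i=(V_i\cup(C\setminus\{c_i\}),E'_i)$ is an \connector{\lparam} of $C\setminus\{c_i\}$: its second bullet follows from condition 1 exactly as above; its third bullet is witnessed by $V':=D_i$, which remains a component of the trimmed subhypergraph of $H_i$ on $V_i$ (no hyperedge of $E'_i$ joins two of the blocks $D_j$ composing $V_i$) and for which every $c\in C\setminus\{c_i\}=C_i$ lies in some $e\in E_i$, each meeting $D_i$; and connectedness holds because $D_i\cup(C\setminus\{c_i\})$ is already connected through $E_i$, while each added block $D_j$ attaches to $C\setminus\{c_i\}$ via $E_j$ provided $C_j\ne\emptyset$.

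The hard part is precisely this last clause: if an extra component $D_j$ has $C_j=\emptyset$ and carries only hyperedges internal to $D_j$, then $H$ is disconnected, whereas every member of $\bigcup_{k\ge3}\setHk{\lparam}{k}$ is connected (for $k\ge3$ the gadgets $H_i,H_{i'}$ share the nonempty set $C\setminus\{c_i,c_{i'}\}$). I will dispatch this by recording at the outset that every $H\in\setH{\lparam}$ is connected --- either by reading connectedness as an intended (and harmless) clause of \Def{obstruct}, or by noting that a component carrying only internal hyperedges can be deleted without disturbing conditions 1--3, so it is absent in a minimal obstruction. Modulo this bookkeeping, both inclusions are the verifications above; the only structural idea behind them is the dictionary ``distinguished component $V'$ of \Def{connector} $\leftrightarrow$ distinguished component $D_i$ of \Def{obstruct}'', together with the observation that condition 2 of \Def{obstruct} is exactly what lets the leftover components be absorbed into the $k$ buckets $V_0,\dots,V_{k-1}$.
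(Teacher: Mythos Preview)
Your proof follows essentially the same route as the paper's: both directions use the natural dictionary between the two definitions (core $C\leftrightarrow X$, distinguished component $V'$ of the connector $\leftrightarrow$ the component $D_i$ witnessing condition~3), and both absorb the leftover components $D_j$ ($j>k$) into the $k$ buckets by picking any $x_i\notin C_j$, which condition~2 guarantees exists. The verifications you give are the same ones the paper gives, with comparable level of detail.

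You have in fact been more careful than the paper on one point. The connectedness issue you flag is real: \Def{obstruct} as written does not forbid a stray component $D_j$ with $C_j=\emptyset$, in which case $H$ is disconnected, whereas every member of $\setHk{\lparam}{k}$ is connected (the $H_i$ are connected gadgets and pairwise share vertices of $X$). The paper's proof simply asserts ``all the connected components in $E_i$ touch some vertex in $X\setminus\{x_i\}$'' without justification, so it has the same gap. Your first fix---reading connectedness as an implicit clause of \Def{obstruct}---is the correct and clearly intended resolution. Your second fix (pass to a minimal obstruction) does not quite work here: the lemma asserts equality of sets, not merely that the minimal members coincide, so you cannot delete pieces of a given $H\in\setH{\lparam}$ and still claim $H$ itself lies in some $\setHk{\lparam}{k}$. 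Drop that alternative and keep the first.

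One small tightening: when you argue that the hyperedges inside $C\cup D_i$ omit $x_i$, you invoke ``$E_i\subseteq V_i\cup(X\setminus\{x_i\})$'', which a priori only covers edges of the $\setHk{\lparam}{k}$-bucket $E_i$. You should also note (as you effectively do elsewhere) that no hyperedge of arity $\ge 2$ lies entirely in $C$, so every edge inside $C\cup D_i$ meets $D_i\subseteq V_i$ and hence lies in $E_i$.
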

\begin{proof}
	Let $H$ be a pattern in $\setH{\lparam}$ with $|C| = k$, we show that it is also in $\setHk{\lparam}{k}$. Let $C,\{D_1,D_2...\},\{E'_1,E'_2,...\}$ be the partitions of $H$ according to \Def{obstruct}.
	 We can set $X = C$, and for all $i \in [k]$ we set $V_i = D_{i+1}$, $E_i =E'_{i+1}$. For every $i > k$ we find some $x_j$ not in $E_i$ (one must exist by definition) and add $D_i$ to $V_j$ and $E'_i$ to $E_i$. We can show that the result is a valid partition of $H$ according to \Def{seth}:
	 \begin{itemize}
	 	\item $H_i=(V_i\cup (X\setminus\{x_i\},E_i)$ is an \connector{l} of $X \setminus \{x_i\}$:
	 	\begin{itemize}
	 		\item Condition $1$ is satisfied as all the connected components in $E_i$ touch some vertex in $X \setminus \{x_i\}$, and the set $E'_{i+1}$ connects all such vertices.
	 		\item Condition $2$ is satisfied, as the induced $l$-trimmed subhypergraph of the core is empty (or only contains singleton hyperedges).
	 		\item Condition $3$ is satisfied, again by $E'_{i+1}$.
	 	\end{itemize}
 		\item Every hyperedge in $E'_{i+1}$ was contained in $D_{i+1}\cup (C\setminus\{c_{i+1}\})$, and therefore every hyperedge $E_i$ will be contained in $V_i\cup (X\setminus\{x_i\})$.
	 \end{itemize}
 
 	Now, let $H$ bet a pattern in $\setHk{\lparam}{k}$, we show that it is also in $\setH{\lparam}$, let $X$,$\{V_0,...,V_{k-1}\}$,$\{E'_0,...,E'_{k-1}\}$ be the sets forming a valid partition of $H$ according to \Def{seth}. Set $C= X$, let $H'$ be the induced trimmed subhypergraph of $H\setminus C$, we will have connected components $D_i$, note that a $D_i$ cannot have vertices originally in different sets $V_j$, similarly the sets containing $E_i$ cannot be part of different sets $E'_j$. We prove that this will be a valid partition according to \Def{obstruct}:
 	\begin{enumerate}
 		\item Every hyperedge is part of some \connector{\lparam}, and therefore, every hyperedge with $2$ or more vertices in the core $C$ must have at least $l+1$ vertices outside of it, hence, it will not appear in the $l$-trimmed subhypergraph induced by the core.
 		\item This is true from the second condition of \Def{seth}.
 		\item We can reorder the sets $E_i$ to satisfy this. Note that every set $C\setminus \{c_i\}$ forms an \connector{\lparam}, which requires some connected component (which will be a set $D_i$) to neighbor all the vertices in $C\setminus \{c_i\}$, which implies that $E_i$ contains $C\setminus \{c_i\}$.
 	\end{enumerate}
\end{proof}

A hypergraph $H$ having $\ldtw{\lparam}(H)>1$ is equivalent to the reachability hypergraph of the $l$-skeletons of one of its acyclic orientations $\vec{H}$ not being $\alpha$-acyclic. In order to capture the $\alpha$-acyclicity of the reachability graph, we introduce the following definitions.

\begin{definition} [\reachcycle{k}]
    Let $\vec{H}$ be a directed acyclic graph and $k\geq 3$, a \reachcycle{k} is a pair $X,Y$, where $Y=\{y_0,...,y_{k-1}\} \subset V(\vec{H})$ and $X=\{x_0,...,x_{k-1}\}\subset V(\vec{H})$, such that for every $i\in [0,k-1]$, $\{x_{i},x_{i+1}\} \subseteq \Reach(y_i)$ (taken modulo $k$ in the indexes) and every vertex $v$ with $|\Reach{v}\cap X| \geq 2$ must have $\Reach{v}\cap X = \{x_{i},x_{i+1}\}$ for some $i \in [0,k-1]$.
\end{definition}

\begin{definition} [\reachsimplex{k}] 
    Let $\vec{H}$ be a directed acyclic graph and $k\geq 3$, a \reachsimplex{k} is a pair $X,Y$, where $Y=\{y_0,...,y_{k-1}\} \subset V(\vec{H})$ and $X=\{x_0,...,x_{k-1}\}\subset V(\vec{H})$, such that for every $i \in [0,k-1]$, $\{ X \setminus \{x_i\}\} \subseteq \Reach(y_i)$ and no vertex $v \in V(\vec{H})$ exists with $X \subseteq \Reach(v)$.
\end{definition}

Note that a \reachcycle{3} and a \reachsimplex{3} are equivalent. 

We can relate the previous definitions to the concept of $\alpha$-acyclicity. The following theorem proves that having a reachability hypergraph that is not $\alpha$-acyclic is equivalent to containing either a \reachcycle{k} or a \reachsimplex{k}.

\begin{theorem} (Theorem $8$ in \cite{BeGiLe+22}, originally in \cite{BeFaMa+81,BeFaMa+83}) \label{thm:acyclic}
    A hypergraph $F$ is $\alpha$-acyclic if and only if for every $k \geq 3$, there is no $S=\{x_0,x_1,...,x_{k-1}\}\subseteq V(F)$ such that one of the following conditions holds:
\begin{enumerate}
    \item For every $0 \leq i \leq k-1$ there exists a hyperedge $e\in E(F)$ such that $e \cap S = \{x_i,x_{i+1}\}$, and there is no $e \in E(F)$ with $|e \cap S|\geq 2$ such that $e \cap S \neq \{x_i,x_{i+1}\}$ for all $- \leq i \leq k-1$. (Indices taken modulo $k$.)
    \item For every $0 \leq i \leq k-1$ there exists $e \in E(F)$ such that $e \cap S = S \setminus \{x_i\}$, and there is no edge $e\in E(F)$ such that $S\subseteq e$.
\end{enumerate}
\end{theorem}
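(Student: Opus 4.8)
The plan is to route the statement through the classical characterization of $\alpha$-acyclicity due to Beeri, Fagin, Maier, and Yannakakis~\cite{BeFaMa+81,BeFaMa+83}: a hypergraph $F$ is $\alpha$-acyclic if and only if its clique completion $F^c$ (the primal graph, or $2$-section) is chordal and $F$ is \emph{conformal}, i.e.\ every clique of $F^c$ is contained in some hyperedge of $F$. Equivalently, $F$ is $\alpha$-acyclic iff it admits a join tree --- precisely the tree $T$ appearing in the $\alpha$-acyclicity definition above, read with each node identified with its hyperedge. Taking this as a tool, it remains to show that the failure of ``chordal \emph{and} conformal'' is exactly the existence of a set $S$ with $|S| = k \ge 3$ meeting condition~1 or condition~2.

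I would treat the chordality half first. If $F^c$ is not chordal it contains an induced cycle $x_0, x_1, \ldots, x_{k-1}$ with $k \ge 4$; put $S = \{x_0, \ldots, x_{k-1}\}$. Every cyclic pair $\{x_i, x_{i+1}\}$ is a primal edge, so lies in some hyperedge $e$; since $S$ induces a triangle-free graph in $F^c$, the clique $e \cap S$ has size at most $2$, so $e \cap S = \{x_i, x_{i+1}\}$ exactly, and by the same token no hyperedge meets $S$ in anything other than a cyclic pair or a single vertex --- this is condition~1. Conversely, from a condition-1 set with $k \ge 4$ one sees that the only primal edges inside $S$ are the cyclic pairs (a hyperedge containing a non-cyclic pair of $S$ would violate the second clause of condition~1), so $S$ induces $C_k$ in $F^c$, which is therefore non-chordal. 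The case $k = 3$ is postponed, since a triangle of $F^c$ is harmless unless it is also uncovered.

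For the conformality half, assume $F^c$ is chordal. If $F$ is not conformal, choose an inclusion-minimal clique $S$ of $F^c$ contained in no hyperedge; then $|S| = k \ge 3$ (cliques of size at most $2$ are always contained in a hyperedge), and by minimality each $(k-1)$-subset $S \setminus \{x_i\}$ lies in some hyperedge $e$, whose trace on $S$ is then exactly $S \setminus \{x_i\}$ because no hyperedge contains all of $S$. For $k \ge 4$ this is condition~2, and for $k = 3$ it is condition~1 (which coincides with condition~2 there). Conversely, a condition-2 set $S$ is a clique of $F^c$, since every pair $\{x_a, x_b\} \subseteq S$ sits inside the hyperedge witnessing $S \setminus \{x_c\}$ for any $c \notin \{a,b\}$, and it is contained in no hyperedge, so $F$ is not conformal; and a condition-1 set with $k = 3$ is simply an uncovered triangle, again a conformality failure. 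Combining the two halves closes the biconditional.

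The one genuinely nontrivial ingredient is the direction ``$F^c$ chordal and $F$ conformal $\Rightarrow$ $F$ $\alpha$-acyclic'' of the cited characterization; if a self-contained argument is wanted, I would prove it by fixing a perfect elimination ordering of $F^c$ and processing the vertices in that order, using at each step that the (clique-shaped) earlier neighbourhood of the eliminated vertex is, by conformality, contained in a hyperedge, which lets one perform a GYO ear removal while maintaining a partial join tree; induction finishes it. The only other place calling for care is the cycle case above: one must use that ``induced'' cycle means \emph{chordless}, which is exactly what pins the hyperedge traces on $S$ down to single cyclic pairs --- everything else (translating chordal/conformal failures into conditions~1 and~2, and invoking heredity of chordality for induced subgraphs of $F^c$) is bookkeeping.
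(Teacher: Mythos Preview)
The paper does not prove this statement at all: it is quoted as a known result (Theorem~8 of \cite{BeGiLe+22}, going back to \cite{BeFaMa+81,BeFaMa+83}) and used as a black box in the proof of \Thm{characterize}. So there is no ``paper's own proof'' to compare your proposal against.

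That said, your sketch is the standard and correct route. Reducing to the Beeri--Fagin--Maier--Yannakakis characterization ``$\alpha$-acyclic $\Leftrightarrow$ primal graph chordal and hypergraph conformal'' and then matching condition~1 with $k\ge 4$ to a chordless cycle in $F^c$, condition~2 (and condition~1 with $k=3$) to a minimal uncovered clique, is exactly how this equivalence is usually argued. The only steps worth tightening in a full write-up are: (i) in the non-conformal case, make explicit that you pick an \emph{inclusion-minimal} uncovered clique (not merely a maximal clique that happens to be uncovered), so that every $(k-1)$-face is covered and hence has trace exactly $S\setminus\{x_i\}$; and (ii) for the converse of condition~1 with $k\ge 4$, spell out that any hyperedge containing a non-cyclic pair of $S$ would have $|e\cap S|\ge 2$ with $e\cap S$ not equal to any $\{x_i,x_{i+1}\}$, contradicting the second clause --- you say this, but it is the one line where a reader could stumble. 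Your closing remark about deriving ``chordal $+$ conformal $\Rightarrow$ join tree'' via a perfect elimination ordering and GYO reduction is also the standard argument and would complete a self-contained proof.
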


We now relate containing a \reachcycle{k} in the $l$-skeleton of an orientation $\vec{H}$ with $H$ having a hypergraph in $\setHk{l}{3}$ as an induced trimmed subhypergraph. First, we introduce the following definition.

\begin{definition} [First \reachcycleempty{}]
     Let $\ord: V(\vec{H}) \to \NN$ be an ordering of the vertices of $\vec{H}$, and for a set of vertices $V$, let $\ord(V)_{(i)}$ denote the ordering of the $i$-th latest vertex in $V$ with respect to $V$. A \reachcycle{l} $X,Y$ \textbf{precedes} a \reachcycle{l'} $X',Y'$ in the ordering $\ord$ if either $X \subset X'$ or there exists an $j\leq \min{(|X|,|X'|)}$ such that:
    \begin{itemize}
        \item For all $i < j$ the vertices with the $i$-th latest ordering in $X$ and $X'$ are the same, $\forall i< j, \ord(X)_{(i)} = \ord(X')_{(i)}$.
        \item The vertex with the $j$-th latest ordering in $X$ precedes the vertex with the $j$-th latest ordering in $X'$, $\ord(X)_{(j)} < \ord(X')_{(j)}$.
    \end{itemize}
    We call $X,Y$ the First \reachcycleempty{} of $\vec{H}$ if it precedes all the other \reachcycleempty{s} in $\vec{H}$.
\end{definition}

\begin{lemma} \label{lem:cycle_to_seth}
    Let $\vec{H}$ be a directed acyclic orientation of the hypergraph $H$. If $\Skeleton{l}{\vec{H}}$ contains a \reachcycle{k} then $H$ contains a hypergraph $H' \in \setHk{l}{3}$ as an induced trimmed subhypergraph.
\end{lemma}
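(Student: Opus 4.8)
The plan is to lean on the \emph{minimality} that comes with taking the First \reachcycleempty{} of $\vec{H}$. Fix a \reachcycle{k} $X=\{x_0,\ldots,x_{k-1}\}$, $Y=\{y_0,\ldots,y_{k-1}\}$ in $Sk:=\Skeleton{l}{\vec{H}}$, so $\{x_i,x_{i+1}\}\subseteq\Reach_{Sk}(y_i)$ (indices mod $k$); by replacing it with the First \reachcycleempty{} of $\vec{H}$ we may assume no \reachcycleempty{} of $\vec{H}$ precedes it in the ordering $\ord$. For each $i$ pick a shortest directed path $P_i$ from $y_i$ to $x_i$ and a shortest $Q_i$ from $y_i$ to $x_{i+1}$ in $Sk$. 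Every arc of the $l$-skeleton lies inside some hyperedge of $H$ (it joins one of the first $l+1$ vertices of that hyperedge to a later one), so the vertices of $P_i\cup Q_i$ span, inside $H$, a connected piece $l$-linking $x_i$ and $x_{i+1}$ through $y_i$. Let $W$ be the union of these vertex sets over $i$ (possibly enlarged by a bounded number of further vertices of the hyperedges realizing the arcs, as forced in the last paragraph), and set $H':=H\langle W\rangle$, the induced $\infty$-trimmed subhypergraph of $H$ on $W$; this is the intended member of $\setHk{l}{3}$.

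To land in $\setHk{l}{3}$ rather than $\setHk{l}{k}$, cut the cyclic index set into three consecutive arcs $A_0,A_1,A_2$ and let $p_0,p_1,p_2\in X$ be their common endpoints; these three vertices are the core of \Def{seth}. For $t\in\{0,1,2\}$, let $V_t$ be all $y_i$, all interior vertices of $P_i,Q_i$, and all non-endpoint $x$'s for $i\in A_t$, and let $E_t$ be the hyperedges of $H$ lying inside $V_t\cup\{p_t,p_{t+1}\}$; when $k=3$ each $A_t$ is a single index. Because $P_i$ and $Q_i$ meet at $y_i$, the hypergraph $H_t=(V_t\cup\{p_t,p_{t+1}\},E_t)$ is connected and the $\infty$-trimmed subhypergraph it induces on $\{p_t,p_{t+1}\}$ has a connected component adjacent to both endpoints --- clauses 1 and 3 of \Def{connector} --- so that, once clause 2 is checked below, each $H_t$ is an \connector{l} of $\{p_t,p_{t+1}\}$; the disjointness requirements of \Def{seth} on the $E_t$'s coincide with the no-bridge statement handled below.

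The heart of the matter is the emptiness clause of \Def{seth} (the $l$-trimmed subhypergraph of $H'$ induced by $\{p_0,p_1,p_2\}$ has no hyperedge of arity $\ge2$), together with the requirement that no hyperedge of $H'$ straddles two different $V_t$'s or joins some $V_t$ to $p_t$ (which also yields clause 2 of \Def{connector}). Suppose a hyperedge $e$ of $H$ broke emptiness: $|e\cap\{p_0,p_1,p_2\}|\ge2$ while $|(e\cap W)\setminus\{p_0,p_1,p_2\}|\le l$. Its source $s$ is among the first $l+1$ vertices of $e$, hence $l$-reaches all of $e$, so $|\Reach_{Sk}(s)\cap X|\ge2$; by the defining property of a \reachcycle{k}, $\Reach_{Sk}(s)\cap X=\{x_j,x_{j+1}\}$ for some $j$, i.e.\ the two core vertices of $e$ are consecutive. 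Splicing $e$ through $s$ and its tails into $P_j$ and $Q_j$ then produces a \reachcycleempty{} whose vertex set precedes the First \reachcycleempty{} in $\ord$ --- a contradiction --- unless $e$ in fact carries more than $l$ vertices of $W$ outside the core, which is what we wanted. A hyperedge that bridged two arcs, or joined $V_t$ to $p_t$, is ruled out the same way: its source would $l$-reach two $x$'s from non-adjacent arcs, forbidden by the \reachcycle{k} property, or would give a shorter reachability route between two $x$'s, contradicting minimality. With all of these excluded, $H'$ satisfies every clause of \Def{seth}, so $H'\in\setHk{l}{3}$ and it occurs as an induced $\infty$-trimmed subhypergraph of $H$, proving the lemma.

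The step I expect to be the real obstacle is this last one: turning ``minimality in $\ord$'' and ``only consecutive $\{x_i,x_{i+1}\}$ are co-reached'' into the purely combinatorial conclusion that, after intersecting with $W$, no hyperedge of $H$ falls inside the core or across two connectors. This calls for a careful case analysis on where the source of an offending hyperedge sits relative to $P_i$ and $Q_i$, plus the fiddly verification that the spliced paths genuinely constitute a \reachcycleempty{} preceding the First one; getting the vertex set $W$ and the interior/endpoint bookkeeping exactly right so that clause 2 of \Def{connector} also comes out is the other delicate point.
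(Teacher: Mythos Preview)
Your overall plan --- take the First \reachcycleempty{}, build connectors around each pair $\{x_i,x_{i+1}\}$, and collapse the $k$ arcs into three --- is exactly the paper's approach. But your choice of vertex sets is too small, and this breaks clause~2 of \Def{connector} in a way your minimality argument cannot repair.

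You take $V_i$ to be the vertices on shortest paths $P_i,Q_i$ from the \emph{specific} $y_i$. The paper instead takes $V_i$ to contain \emph{every} vertex $v$ with $\{x_i,x_{i+1}\}\subseteq\Reach_{Sk}(v)$, together with shortest paths from each such $v$ to $x_i$ and $x_{i+1}$. This matters precisely at clause~2: if a hyperedge $e$ of $H_i$ contains both $x_i$ and $x_{i+1}$, then since $Sk$ has no arc between them, the first $l+1$ vertices of $e$ all precede both, and hence each of these $l+1$ vertices reaches both $x_i$ and $x_{i+1}$. With the paper's $V_i$ they are automatically in $V_i$, giving $|e\cap V_i|>l$. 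With your $V_i$ there is no reason they lie on the chosen paths from $y_i$. Your attempted rescue --- that otherwise ``splicing $e$ through $s$'' yields a \reachcycleempty{} preceding the First one --- fails because the First ordering is on the set $X$, not on $Y$: replacing $y_i$ by $s$ keeps $X$ unchanged and gives no contradiction.

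The no-bridge step (your last paragraph) is the technical heart, and here too the paper's mechanism is different from what you sketch. You argue that the source of a bridging hyperedge would reach two non-consecutive $x$'s. But a vertex $v_i\in V_i$ on the path $P_i$ may reach only $x_i$, and a vertex $v_j\in V_j$ may reach only $x_{j+1}$; a hyperedge containing both has a source reaching $\{x_i,x_{j+1}\}$, which for $j=i-1$ are consecutive. The paper's \Clm{neighbors} handles this by a case analysis that \emph{replaces elements of $X$} --- e.g.\ substituting $v_j$ for $x_{i+1}$ --- and then checks the new $X'$ precedes $X$ in the First ordering (since $\ord(v_j)<\ord(x_{i+1})$). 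Producing a preceding cycle always requires modifying $X$, not $Y$; your sketch does not do this, and the case split (on $|\Reach(v_i)\cap X|$ and $|\Reach(v_j)\cap X|$) is where the real work lives.
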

\begin{proof}
    Let $\ord: V(H) \to \NN $ be an ordering of the vertices of $H$ consistent with $\vec{H}$. And let $X,Y \subseteq V(H)$ be the first \reachcycleempty{} in $\Skeleton{l}{\vec{H}}$.
    
    Let $k = |X| = |Y|$, for every $i \in [0,k-1]$, let $V_i$ be the set that contains, for every $v \in V(H)$ such that $\{x_i,x_{i+1}\} \subseteq \Reach_{\Skeleton{l}{\vec{H}}}(v)$, $v$ and all the vertices in the shortest path from $v$ to $x_i$ and $x_{i+1}$ in $\Skeleton{l}{\vec{H}}$ (not including $x_i$ and $x_{i+1}$ themselves).
    Let $H_i$ be the trimmed subhypergraph of $H$ induced by $V_i \cup \{x_i,x_{i+1}\}$ and $E_i = E(H_i)$.

    We show that $H_i$ must be an \connector{l} of $\{x_i,x_{i+1}\}$. First, note that $V_i$ is not empty as $\{x_i,x_{i+1}\} \subseteq \Reach_{\Skeleton{l}{\vec{H}}}(y_i)$, and therefore $y_i \in V_i$. 
    \begin{itemize}
        \item We can see that $H_i$ is connected, every vertex is part of some path in $\Skeleton{l}{\vec{H}}$, every pair connects a vertex $v$ to either $x_i$ or $x_{i+1}$, and every vertex $v$ has two such paths; therefore, all the paths will form a single connected component. Every direct edge $u,v$ in $\Skeleton{l}{\vec{H}}$ corresponds to some hyperedge $e$ in $H$ that will appear partially in $H_i$ connecting $u$ and $v$.
        \item Let $e$ be a hyperedge in $H_i$ containing both $\{x_i,x_{i+1}\}$,  $\Skeleton{l}{\vec{H}}$ does not contain a direct edge between $x_i$ and $x_{i+1}$ (or vice-versa) which means that there are at least $l$ vertices in the internal ordering of $e$ before both $x_{i}$ and $x_{i+1}$, note that the first $l+1$ vertices must connect to both $x_{i}$ and $x_{i+1}$ in $\Skeleton{l}{\vec{H}}$ and therefore they will be in $V_i$, thus $|e \cap V_i|> l$.
        \item Finally, let $H'$ be the trimmed subhypergraph of $H$ induced by $V_i$, note that every connected component in $H'$ will have to connect to both $x_{i}$ and $x_{i+1}$ as it necessarily contains a vertex $v$ with its direct paths to $x_i$ and $x_{i+1}$ in $\Skeleton{l}{\vec{H}}$.
    \end{itemize}

    We can also show the following claim.

    \begin{claim} \label{clm:neighbors}
        For every $i \neq j$, no vertex $v_i \in V_i$ shares an edge with a vertex $v_j \in V_j$.
    \end{claim}
    \begin{proof}
    Assume otherwise that there is a vertex $v_i \in V_i$ and a vertex $v_j \in V_j$ such that there exists a hyperedge $e \in E(H)$ with $v_i,v_j \in e$. Let $v'_i$ be a predecessor of $v_i$ (or $v_i$ itself) that reaches both $x_i$ and $x_{i+1}$ and similarly let $v'_j$ be a predecessor of $v_j$ (or $v'_j$ itself) that reaches both $x_j$ and $x_{j+1}$. Having a hyperedge connecting $v_i$ and $v_j$ in $H$ implies one of the following three scenarios in $\Skeleton{l}{\vec{H}}$. We show that they all lead to some contradiction:
    \begin{itemize}
        \item There is a direct edge connecting $v_i$ to $v_j$. If $v_j$ reaches a vertex in $X$ other than $x_i,x_{i+1}$ then $|\Reach(v'_i) \cap X |\geq 3$, which is not possible. Thus, we have that $v_j$ reaches either $x_i$ or $x_{i+1}$ (but not both). Without loss of generality, let $x_{i+1} \in \Reach(v_j)$ (which means $j=i+1$), set $X' = v_j \cup X \setminus \{x_{i+1}\}$ and $Y' = (Y\setminus \{y_i,y_{i+1}\})\cup\{v'_i,v'_j\}$.

        Note that $\ord(v_j) < \ord(v_{i+1})$ and therefore $X'$ precedes $X$. Moreover, we can verify that $X',Y'$ forms a \reachcycle{k} as no vertex can reach $3$ or more vertices in $X'$, since it will mean that it can reach the same number of vertices in $X$, leading to a contradiction.
        \item There is a direct edge connecting $v_j$ to $v_i$. This is symmetrical to the previous case.
        \item There is a vertex $v$ with direct edges connecting it to $v_i$ and $v_j$. We distinguish the following four sub-cases:
        \begin{itemize}
            \item $|X \cap \Reach(v_i)| = 2 $ and $|X \cap  \Reach(v_j)| = 2$. This implies $\{x_j,x_{j+1},x_i,x_{i+1}\} \subseteq \Reach(v)$ and therefore $v$ reaches at least $3$ vertices in $X$, which is not possible.
            \item $|X \cap  \Reach(v_i)| = 2$ and $|X \cap  \Reach(v_j)|= 1$. In this case $v_j$ must reach $x_i$ or $x_{i+1}$, otherwise $v$ reaches three vertices in $X$. Without loss of generality, let $x_{i+1}\in \Reach(v_j)$ (and $j=i+1$). Set $X'= \{v_j\} \cup X \setminus\{x_{i+1}\}$ and $Y' = (Y \setminus \{y_i,y_{i+1}) \cup \{v, v'_j\}$. $\ord(v_j) < \ord(x_{i+1})$, which means that $X'$ precedes $X$, and $X',Y'$ will form a valid \reachcycle{k}; therefore, $X,Y$ is not the first \reachcycleempty{}, leading to a contradiction.
            \item $|X \cap  \Reach(v_i)|=1$ and $|X \cap  \Reach(v_j)|=2$. Symmetrical to the previous case.
            \item $|X \cap  \Reach(v_i)|=1$ and $|X \cap  \Reach(v_j)|=1$. If $X \cap  \Reach(v_i) \neq X \cap  \Reach(v_j)$, then for some $i'$, $x_{i'} \in \Reach(v_i)$ and $x_{i'+1} \in \Reach(v_j)$ (or vice-versa, we assume the former without loss of generalization). There are two further possibilities, either $j = i'=i+1$ or $j=i'+1=i+2$. In the former, we can set $X'= \{v_i\} \cup X \setminus \{x_{i+1}\}$ and $Y'=(Y \setminus \{ y_i,y_{i+1}\})\cup \{ v'_i,v\}$ to obtain a valid \reachcycleempty{} where $X'$ precedes $X$ as $\ord(v_i) < \ord(x_{i+1})$. In the latter, we can set $X'=  \{v_i,v_j\} \cup X \setminus \{x_{i+1},x_{i+2}\}$ and $Y'= (Y \setminus \{ y_i,y_{i+1},y_{i+2}\})\cup \{ v'_{i},v,v'_j\}$ to obtain a valid \reachcycleempty{}, again $X'$ precedes $X$ as both $\ord(v_i) < \ord(x_{i+1})$ and $\ord(v_{j}) < \ord(x_{i+2})$.

            Alternatively, $X \cap  \Reach(v_i) = X \cap  \Reach(v_j)$, and they both reach $x_i$ or $x_{i+1}$, without loss of generality, assume $X \cap  \Reach(v_i)\cap  \Reach(v_j) = \{x_{i+1}\}$ (and $j=i+1$). Set $X' = \{v_i,v_j\} \cup X \setminus  \{x_{i+1}\}$ and $Y' = (Y\setminus\{y_i,y_{i+1}\})\cup\{v'_i,v,v'_j\}$. Both $\ord(v_i) < \ord(x_{i+1})$ and $\ord(v_j) < \ord(x_{i+1})$, which means that $X'$ precedes $X$. The only possibility that $X',Y'$ does not form a \reachcycleempty{} is that there is a vertex $u$ that reaches both $v_i,v_j$ and either $x_{i}$ or $x_{i+2}$, but then we can construct the sets $X''= X' \setminus \{v_i\}$ and $Y''= u \cup Y' \setminus \{v,v'_i\}$ in the former or $X'' = X' \setminus \{v_j\}$, $Y''= u \cup Y' \setminus \{v,v'_j\}$ in the latter which will be a valid \reachcycleempty{}. Moreover, $X'' \subset X'$, therefore, $X''$ precedes $X'$ and $X$, leading to a contradiction.
        \end{itemize}
    \end{itemize}
    \end{proof}
    
     We can show that $H'$, the trimmed subhypergraph of $H$ induced by $X \cup \bigcup_{i=0}^{k-1}V_i$ is in $\setHk{l}{3}$. If $k=3$, then we are done as every hypergraph $H_i$ is an \connector{l} and by \Clm{neighbors} they do not share any hyperedge. Otherwise, let $V'_2 = \bigcup_{i=2}^{k-1} V_i \cup \bigcup_{i=3}^{k-1}x_i$, and $H'_2$ be the trimmed subhypergraph of $H$ induced by $V'_2 \cup \{x_2,x_0\}$. We can verify that this split satisfies all the conditions of \Def{seth}:
    \begin{itemize}
        \item $H_0$ is an \connector{l} for $V\setminus \{x_2\} = \{x_0,x_1\}$.
        \item $H_1$ is an \connector{l} for $V\setminus \{x_0\} = \{x_1,x_2\}$.
        \item $H'_2$ is an \connector{l} for $V\setminus \{x_1\} = \{x_2,x_0\}$.We can verify the following conditions:
        \begin{itemize}
            \item $H'_2$ is connected, as every $V_i$ connects with the vertices $x_i$,$x_{i+1}$ which will be included in $H'_2$. It also forms a connected component if only including the vertices in $V'_2$.
            \item No hyperedge in $H'_2$ includes both $\{x_2,x_0\}$ as it will mean the existence of a vertex reaching both of them in $\Skeleton{l}{\vec{H}}$, which is not possible for $k>3$.
        \end{itemize}
        \item No hyperedge contains two vertices in different subsets, again this is true from \Clm{neighbors}.
    \end{itemize}
    Moreover, $V_0 \cup V_1 \cup V'_2 \cup \{x_0,x_1,x_2\} = V(H')$ and $E_0 \cup E_1 \cup E(H'_2) = E(H')$. Therefore, we will have that $H'\in \setHk{l}{3}$.
\end{proof}

We can prove a similar result for \reachsimplex{k}. We show that containing a \reachsimplex{k} in $\Skeleton{l}{\vec{H}}$ implies containing a pattern in $\setHk{l}{k}$ as an induced trimmed subhypergraph. We will first prove the following auxiliary lemma.

\begin{lemma} \label{lem:groups}
    Let $\vec{H}$ be a directed acyclic graph with ordering $\ord$ that does not contain a \reachcycle{k} for any value of $k$ but contains some \reachsimplex{k} for some $k$. Let $k'$ be the minimum $k$ such that $\vec{H}$ contains a \reachsimplex{k} and $X,Y$ be the sets that form a \reachsimplex{k'} with minimum total ordering $\sum_{v \in X} o(v)$. Let $V \subseteq V(\vec{H} \setminus X)$ be a connected subset of vertices such that $\forall v \in V$, $|\Reach(v) \cap X| \geq 1$. We will have $X \not\subseteq \Reach_{\vec{H}}(V)$.
\end{lemma}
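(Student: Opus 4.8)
The plan is to suppose for contradiction that $X\subseteq\Reach_{\vec H}(V)$ and to build from $V$ either a reachable simplex strictly ``smaller'' than $X,Y$ — fewer core vertices, or the same number but strictly smaller total weight $\sum_{v\in X}\ord(v)$ — or a \reachcycle{k} for some $k$; all of these are forbidden, respectively by the minimality of $k'$, by the minimality of the total weight, and by hypothesis. First I would replace $V$ by an inclusion-minimal connected subset $W\subseteq V$ with $X\subseteq\Reach(W)$ and rename it $V$; every vertex of $V$ still reaches a core vertex. Since no vertex reaches all of $X$ we get $|V|\geq 2$, and since a \reachsimplex{3} is a \reachcycle{3} the hypothesis forces $k'\geq 4$. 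Now fix a spanning tree $T$ of the subgraph of $\vec H$ induced on $V$: deleting a leaf of $T$ keeps $V$ connected, so by minimality each leaf is the unique vertex of $V$ reaching some particular core vertex; thus the leaves inject into $X$, there are at least two of them, and there are fewer than $k'$ of them (with $k'$ leaves every core vertex would be privately covered by a leaf, so no non-leaf would reach any core vertex, contradicting the property of $V$). In particular some core vertex is reached only through internal vertices of $T$.

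The main construction is to pick a vertex $v_0\in V$ reaching some core vertex $x_0\in\Reach(v_0)\cap X$ and to consider the modified core $X'=(X\setminus\{x_0\})\cup\{v_0\}$; note $|X'|=k'$ and $\sum_{v\in X'}\ord(v)<\sum_{v\in X}\ord(v)$, since $v_0\neq x_0$ reaches $x_0$ and therefore $\ord(v_0)<\ord(x_0)$. The key point is that the ``no vertex reaches the whole core'' condition is automatic for $X'$: any $w$ with $X'\subseteq\Reach(w)$ reaches $v_0$ and $X\setminus\{x_0\}$, hence reaches $x_0$ through $v_0$, hence reaches all of $X$ — impossible. Thus $X'$ fails to be a reachable simplex only if, for some $x_j\neq x_0$, no vertex reaches $X'\setminus\{x_j\}=\{v_0\}\cup(X\setminus\{x_0,x_j\})$; this in turn forces $y_j$ (which reaches $X\setminus\{x_j\}\supseteq X\setminus\{x_0,x_j\}$) not to reach $v_0$. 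If no such failure occurs, then $X'$, together with these witnesses and with $y_0$ witnessing $v_0$ (as $y_0$ reaches $X\setminus\{x_0\}=X'\setminus\{v_0\}$), is a \reachsimplex{k'} of strictly smaller total weight than $X,Y$ — contradicting the minimality of the total weight.

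It remains to handle the ``bad'' case where a witness is missing for some $x_j$. There I would shrink the core: $X''=\{v_0\}\cup(X\setminus\{x_0,x_j\})$ has size $k'-1$, already satisfies the ``no vertex reaches the whole core'' condition (that is exactly the failure assumption), and has $y_0$ (or $y_j$) witnessing $v_0$; so either all of its remaining witnesses exist, yielding a \reachsimplex{(k'-1)} and contradicting the minimality of $k'$, or we recurse, peeling off one more old core vertex. The main obstacle is controlling where this recursion bottoms out — when the core has shrunk toward size $3$, a missing witness no longer yields a valid \reachcycle{3} automatically — so $v_0$ must be chosen with care: for instance as a vertex of $V$ lying on a directed path witnessing some $y_i$ reaching a core vertex (so that $y_i\rightsquigarrow v_0$ supplies the needed witness), or as a suitably extremal source of $\vec H[V]$, together with fixing $X,Y$ to be ``first'' in the ordering $\ord$ in the style of \Lem{cycle_to_seth}. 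With such a choice one shows that a missing witness always produces either a smaller or $\ord$-earlier reachable simplex or a genuine \reachcycle{k}; carrying out this case analysis, which tracks how the directed edges inside $V$ and the directed paths from $V$ and from $Y$ into $X$ interact, is the technical heart of the argument.
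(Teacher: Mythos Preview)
Your outline leaves the central difficulty unresolved, and the specific plan you sketch runs into a real obstruction. The core-swapping step needs, for each $x_j\neq x_0$, a witness reaching $\{v_0\}\cup(X\setminus\{x_0,x_j\})$; any such witness must reach $v_0$. But nothing in the hypotheses guarantees that \emph{anything} other than $v_0$ itself reaches $v_0$ --- $v_0$ may well be a source of $\vec H$ reaching only $x_0$ --- and your suggested fixes do not help: the directed $y_i$-to-core paths need not pass through $V$ at all, and a source of $\vec H[V]$ is precisely the kind of vertex least likely to have predecessors. The recursion inherits the same defect at every level, since $v_0$ remains in every shrunken core; so when you write that the remaining case analysis ``is the technical heart of the argument,'' what is missing is not bookkeeping but the actual mechanism that produces witnesses reaching $v_0$.

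The paper's proof takes a different route that sidesteps this. It enlarges $V$ to the full connected component $V^*$ of $\{v\notin X:\Reach(v)\cap X\neq\emptyset\}$ and proves a \emph{merging claim}: for any $v_1,v_2\in V^*$ there is $v_3\in V^*$ with $(\Reach(v_1)\cup\Reach(v_2))\cap X\subseteq\Reach(v_3)$. The proof walks a shortest \emph{alternating path} $a_0,b_1,a_1,\dots,b_\ell,a_\ell$ between $v_1$ and $v_2$, where each $b_i$ is a common \emph{descendant} of $a_{i-1},a_i$; when merging fails, it swaps such a $b_i$ into the core. Because $b_i$ has predecessors $a_{i-1},a_i\in V^*$ by construction, witnesses reaching the swapped-in vertex are available for free, and the resulting case analysis then manufactures either a \reachcycleempty{} or a smaller/earlier reachable simplex. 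Iterating the merging claim over the sources of $V$ yields a single vertex reaching all of $X$, the desired contradiction. If you want to salvage your approach, the fix is to swap in a vertex that is a common descendant of two vertices of $V$ rather than an arbitrary $v_0\in V$ --- at which point you are essentially rediscovering the alternating-path argument.
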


\begin{proof} 
    We assume $X \subseteq \Reach_{\vec{H}}(V)$ and prove by contradiction. Let $V' = \{v \in V(\vec{H}) \setminus X | |\Reach(v) \cap X| \geq 1\}$ be the set of vertices of $V(\vec{H}) \setminus X$ that reach some vertex in $X$. Note that $V' \subseteq V''$. Let $\vec{H''}$ be the subgraph of $\vec{H}$ induced by $V''$, this graph might be disconnected, but at least one of its connected components must contain $V'$. Let $V^*$ be the set of vertices inducing that connected component. Clearly $V'\subseteq V^*$. We prove the following claim.

    \begin{claim} \label{clm:pair_vertices}
        Let $v_1,v_2$ be two vertices in $V^*$, there must exist a vertex $v_3 \in V^*$ such that $(\Reach(v_1)\cup \Reach(v_2) ) \cap X \subseteq \Reach(v_3) \cap X$.
    \end{claim}
    \begin{proof}
    If $\Reach(v_1) \subseteq \Reach(v_2)$, then $v_3 = v_2$ and we are done. Similarly if $\Reach(v_2) \subseteq \Reach(v_1)$. Otherwise, there is no direct path from $v_1$ to $v_2$. However, $v_1$ and $v_2$ are part of the same connected component in $\vec{H''}$ and there must be an undirected path connecting them. 
    
    An $l$-alternating path is a sequence $a_0,b_1,...,b_l,a_{l}$ such that $b_i \in \Reach(a_{i-1}) \cap \Reach(a_i)$ for all $i \in [1,l]$. An $l$-alternating path is minimal if there is no $l'$-alternating path connecting $a_0$ and $a_l$ for some $l'<l$. Let $l'$ be the minimum $l$ such that there is an $l$-alternating path with $a_0 = v_1$ and $a_{l} = v_2$ using only vertices in $V^*$. It suffices to show that for any $l$-alternating path, there exists a vertex $v$ with $((\Reach(a_0) \cup \Reach(a_l))\cap X) \cup \{b_i\}$ for some $1\leq i\leq l$.
    
    We use induction on the size $l$ of the alternating path to show that such a vertex must exist.

    \paragraph{Base case:} First, we show for $l=1$. We have a vertex $b$ such that $b \in \Reach(a_0) \cap \Reach(a_1)$. Let $\mathcal{Z}$ be the collection of possible subsets of $((\Reach(a_0) \cup \Reach(a_1)) \setminus \Reach(b)) \cap X$ such that there exists a vertex $v$ that reaches both $b$ and the vertices of the subset. That is, $\mathcal{Z} = \{ Z \subseteq ((\Reach(a_0) \cup \Reach(a_1)) \setminus \Reach(b)) \cap X | \exists v \in V^*, Z \cup b \subseteq \Reach(v) \}$.

    If $\mathcal{Z}$ contains $((\Reach(a_0) \cup \Reach(a_1)) \setminus \Reach(b)) \cap X$ then there is a vertex $v \in V^*$ that reaches $b$ and all the vertices in $((\Reach(a_0) \cup \Reach(a_1)) \setminus \Reach(b)) \cap X$, but because $v$ reaches $b$ it will also reach all the vertices in $\Reach(b)$, hence $\Reach(v) \supseteq (\Reach(a_0) \cup \Reach(a_1) \cap X$.

    Else, let $Z'$ be the smallest subset of $((\Reach(a_0) \cup \Reach(a_1)) \setminus \Reach(b)) \cap X$ such that $Z' \not\in \mathcal{Z}$. Note that $|Z'|< k'$, as $|\Reach(b) \cap X|\geq 1$, also $|Z'|\geq 2$, as every vertex in $((\Reach(a_1) \cup \Reach(a_2)) \setminus \Reach(b)) \cap X$  is reached by either $a_0$ or $a_1$. We distinguish two cases:
    \begin{itemize}
        \item $|Z'| = 2$: Without loss of generality, let $x_0,x_1 \in X$ be the two vertices in $Z'$ and let $x_0 \in \Reach(a_0)$ and $x_1 \in \Reach(a_1)$. Recall that the vertices of $X$ are part of a \reachcycle{k'}, therefore, there is a vertex $y \in Y$ that reaches $x_0,x_1$. Note that the sets $Y'=\{a_0,a_1,y\}$ and $X'=\{x_0,x_1,b\}$ will form a \reachcycle{3}, as $\{x_0,x_1\} \in \Reach(y)$, $\{x_0,b\} \in \Reach(a_0)$,  $\{x_1,b\} \in \Reach(a_1)$, and no vertex reaches all vertices in $X'$. But this violates the assumption of the lemma that $\vec{H}$ does not contain any \reachcycle{k}, leading to a contradiction.
        \item $|Z'| \geq 3$: Let $\mathcal{Z}' = \{ Z \subset Z' | |Z| = |Z'| - 1\}$ and note that every set in $\mathcal{Z}'$ must also be in $\mathcal{Z}$, therefore for every set $Z \in \mathcal{Z}'$ there is a vertex $v_Z$ such that $b \cup Z \in \Reach{v_z}$; additionally, because $|Z'| < k'$ there is a vertex $y \in Y$ with $Z \in \Reach(y)$. We set $Y' = \{y\}\cup\{v_Z |Z \in \mathcal{Z}' \}$ and $X' = Z' \cup \{b\}$, let $k''=|X'|=|Y'|\leq k'$. 
        
        We can see that $Y',X'$ forms a \reachsimplex{k''} as no vertex reaches the entire set $X'$ (otherwise $Z'$ would be in $\mathcal{Z}$). If $k'' < k'$, then $k'$ is not the minimum $k$ such that $\vec{H}$ contains a \reachsimplex{k}, which leads to a contradiction. Otherwise, if $k'' = k'$ we can see that the only difference between $X'$ and $X$ is that $X'$ contains the vertex $b$ and $X$ the vertex $c = X\setminus Z'$, but in this case $c \in \Reach(b)$ which implies that $b$ is strictly before $c$ in any ordering of the vertices of $\vec{H}$, that is, $\ord(b) < \ord(c)$ and therefore $\sum_{y \in Y'} \ord(y) < \sum_{y\in Y} o(y)$ which would again contradict the assumption from the lemma.
    \end{itemize}
    
    \paragraph{Inductive step:} Assume that for any $l < l'$ we have that for all $l$ shortest alternating path with endpoints $a',b' \in V^*$ there is a vertex $v$ with $\Reach(a')\cup \Reach(b') \cup b_i \subseteq \Reach(v)$ for some $b_i$ in the alternating path. We show that the same holds for $l=l'$. 

    Let $a_0,b_1$ be the endpoints of a $l'$ shortest alternating path $a_0,b_1,...,b_{l'},a_{l'}$ in $V^*$. Let $\mathcal{Z}$ be the collection of possible subsets of $(\Reach(a_0) \cup \Reach(a_{l'})) \cap X$ such that there exists a vertex $v$ that reaches the vertices of the subset and some vertex $b_i$ in the alternating path. That is, $\mathcal{Z} = \{ Z \subseteq (\Reach(a_0) \cup \Reach(a_{l'})) \cap X | \exists v \in V^* \exists i \in [1,l'], Z \cup b_i \subseteq \Reach(v) \}$.

    If $(\Reach(a_0) \cup \Reach(a_{l'})) \cap X \in \mathcal{Z}$, then we are done, as there is a vertex that reaches $(\Reach(a_0) \cup \Reach(a_{l'})) \cap X$ and a vertex $b_i$ in the alternating path. Otherwise, let $Z' \subseteq (\Reach(a_0) \cup \Reach(a_{l'}) $ be the minimum set such that $Z' \not\in \mathcal{Z}$. Note that $|Z'|$, as any individual vertex in $(\Reach(a_0) \cup \Reach(a_{l'})$ will be reachable by either $a_0$ or $a_{l'}$. We distinguish two cases:

    \begin{itemize}
        \item $|Z'| = 2$. Let $Z' = \{u_0,u_{l'}\} \subset X$. Note that one of them must be reached by $a_0$ and the other by $a_{l'}$, otherwise we would have $Z' \in \mathcal{Z}$. Without loss of generality, let $u_0 \in \Reach(a_0)$ and $u_{l'} \in \Reach(a_{l'})$. 
        
        Note that no vertex $b_i$ with $1 \leq i < l'$ can be reached by a vertex $v$ that reaches $u_{l'}$, as in that case we would have that $v$ and $a_0$ form an $l$-connecting path with $l<l'$, therefore we could use the inductive hypothesis to show that there is a vertex $v'$ with $(\Reach(v_0) \cup \Reach(v) \cup \{b_{i'}\})\cap X \subseteq \Reach(v')$, for some $b_{i'}$ in the alternating path, but this would mean that $Z \in \mathcal{Z}$, leading to a contradiction. The same argument can be made to show that no vertex $b_i$ with $1 < i \leq l'$ can be reached by a vertex $v$ that reaches $u_{0}$.

        Note that because $u_0,u_{l'} \in X$ there will be a vertex $y\in Y$ that reaches both $u_0$ and $u_{l'}$. Let $Y'=\{y,a_0,...,a_{l'}\}$ and $X' = \{u_{0},u_{l'},b_1,...,b_{l'}\}$. We can show that $X',Y'$ form a \reachcycle{(l'+2)}. 
        \begin{itemize}
            \item No vertex can reach all $u_{0},u_{l'}$, and a vertex in $b_i$, since then we would have $Z' \in \mathcal{Z}$.
            \item No vertex can reach two vertices in $b_1,...,b'_l$ and $u_{0}$, since then a vertex would reach $u_0$ and some $b_i$ with $1 < i \leq l'$.
            \item Similarly, no vertex can reach two vertices in $b_1,...,b'_l$ and $u_{l'}$, as then a vertex would reach $u_{l'}$ and some $b_i$ with $1 \leq i < l'$.
            Finally, No vertex can reach three vertices in $b_1,...,b'_l$, as then there would be a shorter alternating path between $a_0$ and $a_{l'}$.
        \end{itemize}
        Therefore, $X',Y'$ is a \reachcycle{l'+2}, but $\vec{H}$ does not contain any \reachcycle{l'+2}, leading to a contradiction.

        \item $|Z'| > 2$. Let $\{u_0,u_1,u_2\} \subseteq Z'$. Note that all $Z'\setminus \{u_0\},Z'\setminus \{u_1\},Z'\setminus \{u_2\} \in \mathcal{Z}$, which means that there are three vertices $v'_{0},v'_{1},v'_{2}$ such that $v'_{i}$ reaches $Z'\setminus \{u_i\}$ and some $b_j$ in the alternating path, for $i \in \{0,1,2\}$. Consider the shortest alternating path between each pair of $v'_{0},v'_{1},v'_{2}$, note that at least one of the pairs will form an alternating path with length $l< l'$. Without loss of generality, let $v'_{0},v'_{1}$ be such a pair. Using the induction hypothesis we have that there is a vertex $v''$ such that $(\Reach(v'_{0})\cup \Reach(v'_{1}) \cup \{b_j\})\cap X \subseteq \Reach(v'')$ for some $b_j$ in the alternating path. But note $Z' \subseteq (\Reach(v'_{0})\cup \Reach(v'_{1})) \cap X$, and thus $Z' \in \mathcal{Z}$, which leads to a contradiction.
    \end{itemize}
\end{proof}
    We can now use the claim to complete the proof of the lemma. As we said, assume that we have a connected subset of vertices $V$ such that $X \subseteq_{u\in V} \Reach(v)$. Let $S$ be the set of vertices of $V$ that cannot be reached by any other vertex in $V$, let $l= |S|$ and order the vertices $s_1,...,s_l$ of $S$ arbitrarily. We use induction on $l$ to show the existence of a vertex $v$ in $V^*$ with $X\cap\bigcup_{i\leq l}\Reach(s_i) \subseteq \Reach(v)$.

    The base case is trivial as $s_1$ will satisfy the condition for $l=1$. Now we assume that there exists a vertex $v$ in $V^*$ with $X\cap\bigcup_{i < l}\Reach(s_i) \subseteq \Reach(v)$ and prove that there is a vertex $v'$ with $X\cap\bigcup_{i\leq l}\Reach(s_i) \subseteq \Reach(v)$. Using \Clm{pair_vertices} on $v$ and $s_l$ we have that there is vertex $v'$ with $(\Reach(v) \cup \Reach(s_l)) \cap  X \subseteq \Reach(v')$, which means $X \cap \bigcup_{i\leq l}\Reach(s_i) \subseteq \Reach(v')$, completing the induction.

    This means that there is a vertex $v$ that reaches all the vertices reachable by the vertices in $V$, and therefore reaches all the vertices in $X$, but then $Y,X$ would not be a valid \reachcycle{k'}, which leads to a contradiction, completing the proof of the lemma.

\end{proof}

\begin{lemma} \label{lem:simplex_to_seth}
    Let $\vec{H}$ be a directed acyclic orientation of the hypergraph $H$. If $\Skeleton{l}{\vec{H}}$ does not contain a \reachcycle{k} for any $k$, and contains a \reachsimplex{k'} for some $k'\geq 4$, then $H$ contains a hypergraph $H' \in \setHk{l}{k}$ for some $k\leq k'$, as an induced trimmed subhypergraph.
\end{lemma}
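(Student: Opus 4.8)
The plan is to follow the proof of \Lem{cycle_to_seth} closely, building a core of size $k'$ instead of $3$ and using \Lem{groups} to control what each connector can reach. Let $k'$ be the least $k$ for which $\Skeleton{l}{\vec{H}}$ contains a \reachsimplex{k}; since $\Skeleton{l}{\vec{H}}$ has no \reachcycleempty{}, the hypothesis gives $k'\ge 4$. Among all \reachsimplex{k'}'s fix the pair $X=\{x_0,\dots,x_{k'-1}\}$, $Y=\{y_0,\dots,y_{k'-1}\}$ of minimum total order $\sum_{v\in X}\ord(v)$, so that \Lem{groups} applies to exactly this pair (reachability throughout is in $\Skeleton{l}{\vec{H}}$, and $\Reach(D):=\bigcup_{v\in D}\Reach(v)$). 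First I would record two easy facts. (1) No $x_a$ reaches $x_b$ with $a\ne b$: otherwise $y_b$ reaches $X\setminus\{x_b\}\ni x_a$ and hence $x_b$, i.e.\ all of $X$, contradicting the simplex. (2) Consequently no hyperedge of $H$ contains all of $X$ (its first $l+1$ vertices, or all its vertices if it is shorter, would reach all of $X$ or two $X$-vertices would be skeleton-adjacent), and in any hyperedge $e$ with $|e\cap X|\ge 2$ every vertex of $e\cap X$ lies in position $\ge l+2$ of the internal ordering of $e$, so the first $l+1$ vertices of $e$ each reach every vertex of $e\cap X$.

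Next, call a vertex outside $X$ \emph{relevant} if it reaches some vertex of $X$, let $V'$ be the set of relevant vertices, and let $D_1,D_2,\dots$ be the connected components of the underlying undirected graph of $\Skeleton{l}{\vec{H}}$ restricted to $V'$. By \Lem{groups}, $S_D:=\Reach(D)\cap X$ is a proper subset of $X$ for every component $D$. Since $y_m$ reaches $X\setminus\{x_m\}$, its component has reach exactly $X\setminus\{x_m\}$ (a maximal proper subset), and distinct $y_m$ lie in distinct components, since otherwise the union of their reaches would be all of $X$; after renumbering, $y_m\in D_m$ and $S_{D_m}=X\setminus\{x_m\}$. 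The structural step, which is the analogue of \Clm{neighbors} and the bulk of the work, is to prove by the same skeleton case analysis: (a) for any hyperedge, its relevant vertices together with its first $l+1$ vertices all lie in a single component; and (b) if a hyperedge contains $x_a\in X$ together with a relevant vertex of a component $D$, then $x_a\in S_D$. As in \Lem{cycle_to_seth} each dangerous configuration (a vertex, or a hyperedge, forcing reachability into two ``wrong'' vertices of $X$) must be shown to yield either a \reachcycle{k} (none exists) or a \reachsimplex{k''} with $k''<k'$ or with strictly smaller total order (impossible by the choice of $k'$ and of $(X,Y)$), using fact (1) and fact (2).

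Now I would assemble the obstruction. Assign every component $D$ to some index $i$ with $x_i\notin S_D$ (possible since $S_D\subsetneq X$), choosing the assignment so that $D_m\mapsto m$, and let $V_i$ be the union of the components assigned to $i$; then $D_i\subseteq V_i$, the $V_i$ are disjoint, and $\bigcup_i V_i=V'$. Put $H':=H\langle X\cup V'\rangle$ with core $X$, place every hyperedge of $H'$ having a relevant vertex into the $E_i$ with $V_i$ containing that vertex's component, and route each remaining hyperedge (which is contained in $X$, hence misses some $x_i$ by fact (2)) into a corresponding $E_i$. By (b), no hyperedge of $H'$ meets both $V_i$ and $x_i$ (a relevant vertex of $e$ in $V_i$ would force $x_i\in S_D\subseteq S_{V_i}$, contradicting $x_i\notin S_D$ for the components of $V_i$); by (a), no hyperedge meets two distinct $V_i$; so $E_0,\dots,E_{k'-1}$ partition $E(H')$ with $E_i$ contained in $V_i\cup(X\setminus\{x_i\})$, which is the edge condition of \Def{seth}.

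It then remains to verify that each $H_i=(V_i\cup(X\setminus\{x_i\}),E_i)$ is an \connector{l} of $X\setminus\{x_i\}$ (with $X\setminus\{x_i\}$ the distinguished set and $V_i$ the connecting part), exactly as in \Lem{cycle_to_seth}: $H_i$ is connected because $D_i$ reaches every $x_b\in X\setminus\{x_i\}$, so tracing a reaching path (whose interior vertices lie in $D_i$ by (a)) joins $D_i\cup(X\setminus\{x_i\})$ inside $H_i$, and every other component in $V_i$ reaches some $x_b\in X\setminus\{x_i\}$ and so hooks onto this piece; any hyperedge of $H_i$ meeting $X\setminus\{x_i\}$ in $\ge 2$ vertices has its first $l+1$ vertices in $V_i$ by fact (2) together with (a), giving $>l$ vertices in $V_i$; and the component condition is witnessed by $D_i$, which is a connected component of the trimmed subhypergraph of $H_i$ induced by $V_i$ and is adjacent to every $x_b\in X\setminus\{x_i\}$ through the last arc of a reaching path. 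Hence $H'\in\setHk{l}{k'}$, proving the lemma with $k=k'$. The only genuinely delicate part is the structural step (a)--(b); once it is in hand everything else is routine bookkeeping, and the lone extra subtlety relative to the cycle case — a bad reaching pattern hidden inside one high-arity hyperedge rather than spread along individual skeleton arcs — is dissolved by fact (2).
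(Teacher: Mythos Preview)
Your approach is essentially the paper's: pick the minimum $k$ and the minimum-order \reachsimplex{k}, let $V'$ be the vertices outside $X$ reaching $X$, split $V'$ into connected components, use \Lem{groups} to see each component's reach misses some $x_i$, assign components to the $V_i$ accordingly, and verify the \connector{l} conditions. The paper takes components in the trimmed subhypergraph $H\langle V'\rangle$ rather than in the skeleton restricted to $V'$; these coincide, because (as your fact~(1) forces) any hyperedge with a relevant vertex has a \emph{relevant source}, so two relevant vertices sharing a hyperedge are always linked through that source in the skeleton.

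The one place your plan overcomplicates things is the ``structural step'' (a)--(b). You expect a case analysis in the style of \Clm{neighbors}, but in fact both claims are one-liners once you have the relevant-source observation above: if $e$ has any relevant vertex then $s(e)\in V'$ (if $s(e)\notin X$ this is immediate; if $s(e)=x_i\in X$ then any other relevant $v\in e$ reaches some $x_j$, whence $x_i$ reaches $x_j$, so $j=i$ and you get a directed cycle $x_i\to v\to x_i$), and then every relevant vertex of $e$ is skeleton-adjacent to $s(e)$, giving (a); for (b), that same source reaches $x_a$, so $x_a\in S_D$. All the genuine work sits in \Lem{groups}, which you already invoke. The paper sidesteps (a)--(b) entirely by partitioning edges via their sources, $E_i=\{e\in E(H'): s(e)\in V_i\}$, which makes both containment properties automatic. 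A small wording fix: your statement of (a) should say ``all relevant vertices of $e$ lie in one component'' rather than bundling in the first $l{+}1$ vertices, since those could a priori lie in $X$; where you actually use that the first $l{+}1$ vertices are in $V_i$ (the $|e\cap X|\ge 2$ case), fact~(2) already guarantees they are relevant.
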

\begin{proof}
    Let $\ord: V(H) \to \NN$ be an ordering of the vertices of $H$ agreeing with $\vec{H}$. Let $k$ be the minimum $k$ such that $\Skeleton{l}{\vec{H}}$ contains a \reachsimplex{k}, we have $4\leq k \leq k'$. And let $X,Y$ be the \reachsimplex{k} in $\Skeleton{l}{\vec{H}}$  with minimum $\sum_{i=0}^{k-1} o(x_i)$. 

    Let $V'$ be the set of vertices that reach some vertex in $X$, $V' = \{v \in \Skeleton{l}{\vec{H}} | \exists x_i \in X, x_i \in \Reach_{\Skeleton{l}{\vec{H}}}(v)\}$. Let $H''$ be the trimmed subhypergraph of $H$ induced by $V'$ 
    
    Let $p$ be the number of connected components in $H'$. We can split the vertices of $V'$ into $p$ disjoint sets $V'_0,...,V'_p$ for each of the connected components of $H''$. For every $p$, let $X_p$ be the subset of vertices of $X$ reachable by some vertex in $V'_p$. Note that from \Lem{groups}, $X_p$ must be a strict subset of $X$; therefore, for every $p$ there exists a vertex $x_i \in X$ such that $X_p \subseteq X \setminus \{x_i\}$, note that there might be more than one such vertex. For every $p$ assign a vertex $x_i$ such that $X_p \subseteq X \setminus \{x_i\}$, let $x_p$ denote such vertex. We can now group the vertices of $V'$ into $k$ sets $V_i = \bigcup_{p | x_p = x_i} V'_p$. 
    
    Now, let $H'$ be the trimmed subhypergraph of $H$ induced by $V' \cup X$,  and $E' = E(H')$. We will show that $H' \in \setHk{l}{3}$. We can group the hyperedges $E'$ into $k$ sets. Let $s(e)$ be the earliest vertex in $e$ according to $\ord$. We will have $E_i = \{e \in E' | s(e) \in V_i \}$. Let $H_i= (V_i,E_i)$.

    Note that $\bigcup_{i<k} V_i = V'$, and therefore $X \cup\bigcup_{i<k} V_i = V(H')$. Similarly $\bigcup_{i<k} E_i = E' = E(H')$. We can verify that $H_i$ will be an \connector{l} for $X \setminus \{x_i\}$. $H_i$ is the union of some of the connected components in $H''$ and the vertices in $X\setminus \{x_i\}$. The vertex $y_i \in Y$ reaches all the vertices in $X\setminus \{x_i\}$, therefore, at least one of such connected components must connect to all the vertices in $X\setminus \{x_i\}$, this satisfies condition $3$ of the definition of \connector{l}. All the other connected components will connect to at least one of those vertices, which means $H_i$ is connected, satisfying condition $1$. Finally, let $e$ be an edge in $E_i$ that contains at least $2$ vertices in $X\setminus \{x_i\}$, note that none of these vertices might be the first $l+1$ in the internal ordering of the edge in $\vec{H}$, as otherwise there would be an edge connecting them in $\Skeleton{l}{\vec{H}}$. Note that those $l+1$ vertices must also be included in $V_i$ since they reach some vertices in $X$ and are part of the same connected component as its source, satisfying the second condition.

    Finally, we have that no hyperedge contains vertices in different subsets, as otherwise they would be in the same connected component, which means that they could not be in different subsets. Therefore, we get that $H' \in \setHk{l}{k}$. 
\end{proof}

We can finally prove \Thm{characterize}, which we restate.
\characterize*

\begin{proof}
    Note that $H$ having $\ldtw{\lparam}(H) > 1$ is equivalent to having an orientation $\vec{H}$ such that its $l$-skeleton $\Skeleton{l}{\vec{H}}$ has $\dtw(\vec{H})=1$, which from \Lem{alpha-acyclic} is equivalent to the reachability hypergraph of  $\Skeleton{l}{\vec{H}}$ being $\alpha$-acyclic. Hence, it suffices to show that such an orientation exists if and only if $H$ contains a graph in $\setH{l}$ as an induced trimmed subhypergraph.
    \begin{itemize}
        \item We first prove the ``if'' part, let $H' \in \setHk{l}{k}$ be an induced trimmed subhypergraph of $H$. Let $X$, $V_0,...,V_{k-1}$ be the sets of vertices of $V(H')$ and $E_0,...,E_{k-1}$ the sets of hyperedges that satisfy the conditions of \Def{seth}. We create an ordering $\ord$ of the vertices of $H$ in which first we will have all the vertices in the $V_i$ sets, then the vertices in $X$, and finally the vertices in $V(H)\setminus V(H')$. 
        
        For every set $V_i$, select an arbitrary vertex $v_i$ in one of the connected components that connects to all the vertices in $V\setminus \{v_i\}$, and for every $j\neq i$ set the vertices in the shortest path from $v_i$ to $x_j$ in ascending order, allowing $v_i$ to reach $x_j$ in any $l$-skeleton. We argue that the sets $Y=\{v_0,...,v_{k-1}\}$ and $X$ form a \reachsimplex{k} in $\Skeleton{l}{\vec{H}}$.

        From the way we ordered the vertices we have that $X\setminus \{x_i\} \subseteq \Reach(v_i)$ for all $i$. Only rest to show is that no vertex can reach all the vertices in $X$. Assume there is a vertex $v$ such that $X \subseteq \Reach(v)$, note that $v \not \in V(H) \setminus V(H')$, as the vertices in $V(H) \setminus V(H')$ are all after the vertices in $X$. If $v \in X$ it means that there is a direct edge connecting a pair of vertices in $X$. Otherwise, if $v \in V_i$ for some $i$, then the path to $x_i$ must necessarily include another vertex in $X$, also meaning that there must be a direct edge connecting a pair of vertices in $X$.

        We can prove that no direct edge can connect two vertices in $X$. Let $e$ be any hyperedge in $H$ containing two vertices $x_j,x_{j'} \in X$, and $e' = e \cap V(H')$, we know $|e'|>2$ and there must be an $i \neq j,j'$ such that $e'\in E_i$. But $H_i=(V_i\cup X\setminus\{x_i\},E_i)$ is an \connector{l} of the set $X\setminus \{x_i\}$ which means that $e'$ must contain at least a vertex $l+1$ vertices in $V_i$, and therefore the $l$-skeleton of $\vec{H}$ will not include a direct edge between $x_i$ and $x_j$.

        \item Now we prove the only if part, let $\vec{H}$ be an orientation of $H$ such that the reachability graph of $\Skeleton{l}{\vec{H}}$ is $\alpha$-acyclic. We show that $H$ contains a hypergraph in $\setH{l}$ as an induced trimmed subhypergraph. From \Thm{acyclic} we have two cases, the first one is equivalent to $\Skeleton{l}{\vec{H}}$ containing a \reachcycle{k} for some $k\geq 3$ and the second is equivalent to $\Skeleton{l}{\vec{H}}$ containing a \reachsimplex{k} for some $k\geq 3$. In the first case, from \Lem{cycle_to_seth} we find that $H$ contains a pattern in $\setHk{l}{3}$ as an induced trimmed subhypergraph. In the second case, using \Lem{simplex_to_seth} we find that $H$ contains a pattern in $\setHk{l}{k'}$ for some $k'\leq k$.
    \end{itemize}
\end{proof}

\subsection{The special case of $\setH{\infty}$}

As mentioned in the introduction, for the case of $\lparam =\infty$, suffices to look at which patterns have an $\licl$ greater or equal than $6$.

\linfty*
\begin{proof}
	We prove that $H$ will contain a pattern $H' \in \setH{\infty}$ as an induced trimmed subhypergraph if and only if $\licl(H^c)\geq 6$. 
	
	We first prove the if direction. Let $H$ be a pattern with $\licl(H^c)\geq 6$. Let $t$ be the $\licl$, and $V'=v_1,...,v_t$ the vertices in the cycle in the order. Let $H'$ be the trimmed subhypergraph induced by $V'$, clearly it will only contain the edges of the cycle and $H'$ is an induced cycle. We can set $C={c_1=v_1,c_2=v_3,c_3=v_5}$, $D_1 =\{v_4\}$,$D_2=\{v_6,...,v_t\}$ and $D_3 = \{v_2\}$ and see that $H' \in \setH{\infty}$.
	
	We now prove the only if part. Let $H$ be a pattern containing $H' \in \setH{\infty}$, $H^c$ the clique completion of $H$ and $H''$ the clique completion of $H'$. Consider the vertices $c_1,c_2,c_3$ in the core, and consider the sets $D_1,D_2,D_3$. Find the shortest paths between $c_1$ and $c_2$, $c_2$ and $c_3$, and $c_3$ and $c_1$, in $D_1$,$D_2$ and $D_3$. These paths must have length at least $2$ and do not share edges with each other. Therefore, forming an induced cycle of length at least $6$. With implies $\licl(H^c) \geq \licl(H'') \geq 6$. 
\end{proof}

\section{Hardness of counting hypergraph homomorphisms} \label{sec:hardness}

We prove hardness using colorful subgraph counts. Given a graph, a coloring is a function $c: V(G) \to [h]$. A homomorphism $\phi$ is colorful if it maps every vertex to a vertex of a different color. This implies that the homomorphism must be injective, as it can not map to the same vertex twice. We use $\ColHom{G}{H}$ for the number of colorful homomorphisms from $H$ to $G$. We can show that like in graphs (\cite{BeGiLe+22,Me16}), one can reduce counting colorful hypergraph homomorphisms to uncolored homomorphisms.

\begin{lemma} \label{lem:colorful_hom}
    Let $\lparam \in \mathbb{Z}^+ \cup \{\infty\}$ and $H$ be a hypergraph. If there is an $O(n^\gamma)f(\ldegen{l}(G))$ algorithm for computing $\Hom{G}{H}$ for all hypergraphs $G$, then there is an $O(n^\gamma)f(\ldegen{l}(G))$ algorithm for computing $\ColHom{G}{H}$ for any colored hypergraph $G$ with $h=|V(H)|$ colors.
\end{lemma}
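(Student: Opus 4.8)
The plan is to adapt the standard color‑coding inclusion–exclusion used for graphs in \cite{BeGiLe+22,Me16}. Fix the coloring $c:V(G)\to[h]$ with $h=|V(H)|$. For each subset $S\subseteq[h]$, let $G_S = G\langle c^{-1}(S)\rangle_0$ be the (standard, $0$‑trimmed) induced subhypergraph of $G$ on the vertices whose color lies in $S$. The first step is to prove the identity
\[
\ColHom{G}{H} \;=\; \sum_{S\subseteq[h]} (-1)^{h-|S|}\,\Hom{G_S}{H}.
\]
To see this, for a homomorphism $\phi:H\to G$ write $\mathrm{col}(\phi)=\{c(\phi(v)):v\in V(H)\}$ for the set of colors it uses. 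Every hyperedge of $H$ maps to a hyperedge of $G$ all of whose vertices are colored from $\mathrm{col}(\phi)$, so a map $\phi$ is a homomorphism from $H$ to $G_S$ precisely when it is a homomorphism from $H$ to $G$ with $\mathrm{col}(\phi)\subseteq S$. Hence $\Hom{G_S}{H}=\sum_{T\subseteq S} N_T$, where $N_T$ is the number of homomorphisms $\phi:H\to G$ with $\mathrm{col}(\phi)=T$; Möbius inversion over the subset lattice of $[h]$ gives $N_{[h]}=\sum_{S\subseteq[h]}(-1)^{h-|S|}\Hom{G_S}{H}$. Finally $N_{[h]}=\ColHom{G}{H}$: a homomorphism using all $h$ colors induces a surjection $v\mapsto c(\phi(v))$ from the $h$‑element set $V(H)$ onto $[h]$, which must be a bijection, so $\phi$ is colorful, and conversely every colorful homomorphism uses all $h$ colors.

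Given the identity, the algorithm is immediate: build each of the $2^h$ hypergraphs $G_S$ (each in $O(n+m)$ time, filtering vertices by color and then keeping fully contained hyperedges), run the assumed $O(n^\gamma)f(\ldegen{\lparam}(\cdot))$ algorithm on each $G_S$, and combine the outputs with signs $(-1)^{h-|S|}$. The one point that needs an argument is that $\ldegen{\lparam}(G_S)\le\ldegen{\lparam}(G)$, so that each call really runs in time $O(n^\gamma)f(\ldegen{\lparam}(G))$. I would argue this directly: for any $S'\subseteq c^{-1}(S)$, since $E(G_S)\subseteq E(G)$, every hyperedge of the induced $\lparam$‑trimmed subhypergraph $G_S\langle S'\rangle_{\lparam}$ is also a hyperedge of $G\langle S'\rangle_{\lparam}$, hence $d_{G_S\langle S'\rangle_{\lparam}}(v)\le d_{G\langle S'\rangle_{\lparam}}(v)$ for all $v\in S'$; taking minimum degrees and then maximizing over $S'$ yields $\ldegen{\lparam}(G_S)\le\ldegen{\lparam}(G)$.

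Summing the costs, the total running time is $2^h$ times $O(n^\gamma)f(\ldegen{\lparam}(G))$ plus $2^h\cdot O(n+m)$ for constructing the $G_S$; since $h=|V(H)|$ is a constant and $m=O(\ldegen{\lparam}(G)\cdot n)$, both terms are absorbed into $O(n^\gamma)f(\ldegen{\lparam}(G))$, as required. I do not expect any real obstacle here: the only mildly delicate step is the degeneracy‑monotonicity observation above, which is needed precisely because the $\lparam$‑trimming operation does not commute with passing to an induced subhypergraph; everything else is the textbook color‑coding inclusion–exclusion.
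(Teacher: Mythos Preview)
Your proof is correct and follows essentially the same approach as the paper: inclusion--exclusion over color subsets $S\subseteq[h]$, applying the assumed homomorphism-counting algorithm to each induced subhypergraph $G_S$, and bounding $\ldegen{\lparam}(G_S)\le\ldegen{\lparam}(G)$. Your write-up is in fact more careful than the paper's---you have the correct sign $(-1)^{h-|S|}$ (the paper's $(-1)^{|I|}$ is a typo) and you spell out the degeneracy-monotonicity argument that the paper merely asserts.
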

\begin{proof}
    Fix $\lparam$. Let $G$ be a colored hypergraph with colors $c: V(G) \to [h]$, let $n$ be the number of vertices of $G$ and $\ldegen{l} = \ldegen{l}(G)$ its $l$-degeneracy. Let $I\subset[h]$ be a subset of the colors; we can define $V_I$ as the set of vertices of $G$ with color in $I$, $V_I = \{v \in V(G) | c(v) \in I\}$.

    Let $G[V_I]$ be the subhypergraph of $G$ induced by the vertices of $V_I$. The number of homomorphisms $\phi$ from $H$ to $G$ such that $c(\img(\phi))$ is  equal to $\Hom{G[V_I]}{H}$. Note that for that to hold we need to use the induced subhypergraphs and not the induced \emph{trimmed} subhypergraphs.

    A colorful homomorphism $\phi$ will require that $c(\img(\phi)) = [h]$. We can use inclusion-exclusion to find the number of colorful homomorphisms:

    \[
        \ColHom{H}{G} = \sum_{I \subseteq [h]} (-1)^{|I|}\Hom{G[I]}{H}
    \]
    Note that for all graphs $G[I]$, we have $V(G[I]) \leq n$ and $\ldegen{l}(G[I]) < \ldegen{l}$. Therefore, for each $I \subseteq [h]$ we can compute $\Hom{G[I]}{H}$ in time $O(n)f(\ldegen{l}(G))$. Assuming $h = O(1)$, we can then use all the counts to compute $\ColHom{H}{G}$ in total time $O(n)f(\ldegen{l}(G))$.
\end{proof}

We also show that, for any $k$, detecting a $k$-simplex can be done in the same time as detecting a colorful $k$-simplex.

\begin{lemma} \label{lem:colorful_simplex}
    Let $k=O(1)$. If there is an algorithm for detecting a colorful $k$-simplex in a colored hypergraph with $m$ hyperedges in time $f(m)$, then there is an algorithm for detecting a $k$-simplex in an uncolored hypergraph with $m$ hyperedges in time $\tilde{O}(f(m))$.
\end{lemma}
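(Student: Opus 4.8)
The plan is to run the standard color-coding argument of \cite{AlYuZw94} through a perfect hash family. Fix $k = O(1)$ and let $G$ be an uncolored hypergraph with $m$ hyperedges for which we want to decide whether it contains a $k$-simplex. First preprocess $G$: delete every hyperedge whose arity is not $k$ (such a hyperedge cannot participate in a $k$-simplex) and then delete every vertex incident to no hyperedge. This changes neither the answer nor increases the hyperedge count, and afterwards the number of vertices satisfies $n \le k m = O(m)$, since every remaining vertex lies in one of the $\le m$ remaining hyperedges, each of size $k$. This step is what lets the eventual $\log n$ overhead become $\log m$.

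Next, take a family $\cF$ of colorings $c : V(G) \to [k+1]$ that is $(k+1)$-perfect: for every $(k+1)$-element subset $S \subseteq V(G)$ there is some $c \in \cF$ whose restriction to $S$ is a bijection onto $[k+1]$. Standard constructions give such a family of size $|\cF| = 2^{O(k)}\log n$, computable in time $2^{O(k)} n \log n$ (see \cite{AlYuZw94}); since $k = O(1)$ and $n = O(m)$, this yields $|\cF| = O(\log m)$ colorings produced within $O(m\log m)$ total time. For each $c \in \cF$, feed the colored hypergraph $(G,c)$ — which has $h = k+1$ colors and at most $m$ hyperedges — to the assumed colorful $k$-simplex detector, and answer ``yes'' iff at least one of these $O(\log m)$ invocations answers ``yes''.

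Correctness is immediate in both directions: if $G$ contains a $k$-simplex on a vertex set $S$ with $|S| = k+1$, then the $c \in \cF$ that is bijective on $S$ assigns those $k+1$ vertices all $k+1$ distinct colors, so the simplex is colorful under $c$ and that invocation succeeds; conversely, a colored copy of a $k$-simplex under any coloring is in particular an (uncolored) copy of a $k$-simplex, so if $G$ has none then every invocation fails. The total running time is $|\cF|\cdot f(m) + O(m\log m) = O(\log m)\,f(m) + O(m\log m) = \tilde{O}(f(m))$, using $f(m) = \Omega(m)$ since any detector must read its input. There is essentially no obstacle here; the only two points needing a line of care are the reduction to $n = O(m)$ and the absorption of the hash-family construction cost into $\tilde{O}(f(m))$. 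If a randomized reduction suffices — which it does for \Conj{simplex} and \Thm{hardness} — one may instead draw $O(\log m)$ uniformly random colorings $V(G) \to [k+1]$: a fixed $(k+1)$-set is colored bijectively with probability $(k+1)!/(k+1)^{k+1} = \Omega(e^{-k}) = \Omega(1)$, so with high probability some trial renders a given $k$-simplex colorful.
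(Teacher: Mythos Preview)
Your proof is correct and follows essentially the same approach as the paper: random coloring with probability $\Omega(1)$ of making a fixed simplex colorful, derandomized via a perfect hash family from \cite{AlYuZw94} at logarithmic cost. Your write-up is in fact more careful than the paper's, which omits the preprocessing step ensuring $n = O(m)$ and the absorption of the construction cost.
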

\begin{proof}
    Given a graph $G$ we can randomly color $G$ into $G'$ using the the $k+1$ colors of the colorful $k$-simplex uniformly at random. If $G$ contained a $k$-simplex, we will have that $G'$ contains a colorful $k$-simplex with constant probability $(1/k)$, we can then use the $f(m)$ algorithm. This process can be derandomized using color-coding \cite{AlYuZw94}, adding a logarithmic factor.
\end{proof}

The last piece of the reduction is the construction of the graph $G^H_l$. This reduction is based on the construction shown in \cite{BePaSe21}. The idea is to take a colored regular arity hypergraph $G$ and replace every hyperedge with part of the pattern hypergraph $H$. Obtaining a hypergraph such that a colorful copy $H$ exists if and only if there is a colorful simplex in $G$.

For a hypergraph $H$ containing $H' \in \setHk{l}{k}$ as an induced trimmed subhypergraph, we will use $V^\{ext\}$ for the set of vertices $V(H) \setminus V(H')$, and $E^\{ext\} = \{e \in E(H) : e \subseteq V^{ext}\}$ for the set of edges contained entirely in $V^{ext}$. We also use $X$ to refer to the core vertices of $H'$, and $V_i$ for each of the partitions of vertices of $H'$ that satisfy the properties of \Def{seth}. We set $E_{i} = \{e \in E(H)\setminus E^{ext} | e \subseteq X\cap V_i \cap V^{ext} \}$.

\begin{definition} [$G^H_l$] \label{def:ghl}
    Let $H$ be a hypergraph pattern with $h$ vertices containing $H' \in \setHk{l}{k}$ as an induced trimmed subhypergraph. Let $G$ be an $k$-colored regular $(k-1)$-regular arity hypergraph. Let $c: V(G) \to [k]$ be the coloring function of $G$. And let $\ord: V(H) \to [h]$ be an ordering of the vertices of $H$, such that for all $x_i \in X$, $\ord(x_i) = i$. We construct the $h$-colored hypergraph $G^H_l$ as follows.
    
    \begin{itemize}
        \item Initialize $G^H_l$ as an empty hypergraph with coloring function $c'$.
        \item For all $v \in V(G)$, add $v$ to $V(G^H_l)$ with $c'(v) = c(v)$.
        \item For all $v \in V^{ext}$, add $v$ to $V(G^H_l)$ with $c'(v) = \ord(v)$. Add each hyperedge $e \in E^{ext}$ to $E(G^H_l)$ connecting the corresponding vertices.
        \item For each colorful hyperedge $e \in E(G)$, that is, a hyperedge with $k-1$ distinct colors in its vertices, let $i$ be the color not in $e$, and $y_j \in e$ be the vertex in $e$ with color $j \neq i$. Add a copy $V^e_i$ of $V_i$ to $V(G^H_l)$ with $c'(v) = \ord(v)$ for every $v\in V_i$.
        For every hyperedge $e' \in E_i$ we will create a hyperedge $e''$ in $E(G^H_l)$ such that:
        \begin{itemize}
            \item $\forall x_j \in X \cap e'$, we add $y_j$ to $e''$.
            \item $\forall v \in V_i\cap e'$, we add $u \in V^e_i$ with $c(u)=o(v)$ to $e''$.
            \item $\forall v \in V^{ext}$, we add $u \in V(G^H_l)$ with $c(u) = o(v)$ to $e''$.
        \end{itemize}
    \end{itemize}
\end{definition}

\begin{lemma} \label{lem:construct_hgl}
    $G^H_l$ can be constructed in $O(n+m)$ time, where $m$ is the number of hyperedges of $G$. Moreover, $G^H_l$ will have $\ldegen{l}(G^H_l) = O(1)$.
\end{lemma}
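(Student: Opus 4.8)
I would verify this by walking through the four bullets of \Def{ghl}. Copying $V(G)$ into $V(G^H_l)$ takes $O(n)$ time; adding $V^{ext}$ together with $E^{ext}$ takes $O(1)$ time, since $V^{ext}=V(H)\setminus V(H')$ and $E^{ext}$ depend only on the constant-size pattern $H$. The only nontrivial step is the loop over the hyperedges of $G$: for each $e\in E(G)$ we test in $O(k)=O(1)$ time whether $e$ is colorful, and if so we read off the missing color $i$ and index the vertices of $e$ by color. We then create a fresh copy $V^e_i$ of $V_i$ — these are $O(|V_i|)=O(1)$ new vertices, stored in a small array indexed by color — and, for each $e'\in E_i$, one hyperedge $e''$ of arity $O(|e'|+|V_i|+|V^{ext}|)=O(1)$, each of whose vertices is located in $O(1)$ time from one of the three color-indexed arrays (for $e$, for $V^e_i$, and for $V^{ext}$). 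Hence each hyperedge of $G$ costs $O(|E_i|\cdot r(H))=O(1)$ time, giving $O(m)$ overall and $O(n+m)$ in total. (In particular $G^H_l$ has $O(n+m)$ vertices and hyperedges; if one insists on a simple hypergraph, a pass with a hash table removes duplicates in the same time bound.)

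\textbf{Degeneracy, setup.} I would partition $V(G^H_l)$ into \emph{ground} vertices (the copy of $V(G)$), \emph{external} vertices ($V^{ext}$), and \emph{gadget} vertices ($\bigcup_{e,i} V^e_i$). Two structural facts drive the argument. First, a gadget vertex $u\in V^e_i$ occurs only in the hyperedges $e''$ built from the single edge $e$ — the copies $V^e_i$ are vertex-disjoint across edges of $G$, and no gadget vertex lies in an $E^{ext}$ edge — so $\deg_{G^H_l}(u)\le |E_i|=O(1)$. Second, there are only $|V^{ext}|=O(1)$ external vertices, and the only hyperedges incident to a ground vertex are of the form $e''$.

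\textbf{Degeneracy, the bound.} I claim $\ldegen{l}(G^H_l)\le\max\{\max_i|E_i|,\ |V^{ext}|,\ 1\}=O(1)$; by \Def{degen} it suffices to exhibit in every induced $l$-trimmed subhypergraph $G^H_l\langle S\rangle_l$ a vertex of this bounded degree. If $S$ contains a gadget vertex $u$, its degree in $G^H_l\langle S\rangle_l$ is at most $\deg_{G^H_l}(u)$ (restricting vertices and trimming hyperedges can only merge or delete edges), hence $\le\max_i|E_i|$. If $S$ contains no gadget vertex but contains a ground vertex $y$, take any hyperedge $e''$ through $y$ built from $e'\in E_i$ that survives in $G^H_l\langle S\rangle_l$; by \Def{connector} applied to the $l$-connector $H_i=(V_i\cup(X\setminus\{x_i\}),E_i)$, either $e'$ meets $X$ in at most one vertex, or $|e'\cap V_i|>l$, but in the latter case $e''$ has more than $l$ gadget vertices, all outside $S$, so $|e''\setminus S|>l$ and $e''$ does not survive — a contradiction. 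Hence $X\cap e'=\{x_{c(y)}\}$, so the only ground/external vertices of $e''$ are $y$ and all of $V^{ext}$, whence $e''\cap S=\{y\}\cup(V^{ext}\cap S)$, the same set for every surviving $e''$ through $y$; since no $E^{ext}$ edge contains $y$, the degree of $y$ in $G^H_l\langle S\rangle_l$ is at most $1$. Finally, if $S$ consists only of external vertices, then $|S|\le|V^{ext}|$ and every vertex of $G^H_l\langle S\rangle_l$ has degree $<|V^{ext}|$. In all three cases a vertex of degree $O(1)$ exists, so $\ldegen{l}(G^H_l)=O(1)$.

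\textbf{Main obstacle.} The running-time bound is routine bookkeeping. The crux is the degeneracy bound, specifically the ``collapsing'' phenomenon in the ground-vertex case: a ground vertex $y$ can lie in $\Theta(\deg_G(y))$ hyperedges of $G^H_l$, yet once gadget vertices are absent from $S$, the connector property annihilates every surviving hyperedge through $y$ carrying two or more core vertices, and all survivors collapse onto the single set $\{y\}\cup V^{ext}$. Getting this right for an \emph{arbitrary} gadget-free $S$ — not merely $S=\text{ground}\cup\text{external}$ — and verifying that the case $l=\infty$ falls under the same dichotomy (where the ``$|e'\cap V_i|>l$'' alternative is vacuous, so connector hyperedges simply never meet two core vertices) is where the care lies.
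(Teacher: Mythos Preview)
Your argument is correct in structure and uses the same two ingredients the paper uses: the tripartition of $V(G^H_l)$ into gadget/ground/external vertices, and the connector property that any hyperedge of $H_i$ meeting $X$ in two or more vertices carries more than $l$ vertices of $V_i$. The paper packages these differently: it fixes an ordering with all gadget vertices first, then ground vertices, then external vertices, and bounds the $l$-outdegree in the resulting DAH (gadget vertices lie in $O(1)$ hyperedges; ground vertices have no $l$-skeleton out-arc to another ground vertex by the connector property, so their out-neighbors lie in $V^{ext}$; external vertices are $O(1)$ many). Your direct min-degree verification is the ``peeling'' view of the same ordering; neither approach is materially shorter.

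One slip in your ground-vertex case: you assert that every surviving $e''$ through $y$ satisfies $e''\cap S=\{y\}\cup(V^{ext}\cap S)$, giving degree $1$. But $e''$ does not contain \emph{all} of $V^{ext}$, only the image of $e'\cap V^{ext}$, and this set depends on which $e'\in E_i$ produced $e''$. So different surviving hyperedges through $y$ can trim to different sets $\{y\}\cup(e'\cap V^{ext}\cap S)$. The fix is immediate: the number of such sets is at most $|E(H)|=O(1)$, so the degree of $y$ in $G^H_l\langle S\rangle_l$ is still $O(1)$, and your conclusion $\ldegen{l}(G^H_l)=O(1)$ stands.
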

\begin{proof}
    First, we will create a vertex $v$ for each vertex in $V(G)$, which will take $O(n)$ time. Then, for each hyperedge we need to add a subhypergraph of $H$ to $G^H_l$, this will take $O(m)$ total time, as the number of vertices and hyperedges of $H$ is constant with respect to the size of $G$. Therefore, it will take $O(n+m)$ total time.
    
    Now, we show that $G^H_l$ has constant $l$-degeneracy. We give an ordering $\ord: V(G^H_l) \to [|V(G^H_l)|]$ of the vertices such that for the directed hypergraph $\vec{G}^H_l$ obtained from applying $\ord$ to $G^H_l$ we have $\Delta^+_l(\vec{G}^H_l) = O(1)$.
    
    We divide the vertices of $V(G^H_l)$ in three portions: $V^{(3)}$ for the vertices corresponding to $V^{ext}$ in $H$, $V^{(2)}$ for the vertices corresponding to $V(G)$ and $V^{(1)}$ for the remaining vertices corresponding with vertices in the portions $V_i$ of $H$. Let $\ord$ be any ordering such that all vertices in $V^{(1)}$ precede all vertices in $V^{(2)}$, and all vertices in $V^{(2)}$ precede all vertices in $V^{(3)}$. We can show that the $l$-outdegree of each vertex in the directed graph will be constant:
    \begin{itemize}
    	\item If $u \in V^{(1)}$, then $u$ can only be included in at most $|E(H)| = O(1)$ hyperedges in $G^H_l$. The arity of each hyperedge is at most $V(H)=O(1)$, and therefore $u$ will only have a constant number of neighbors.
    	\item If $u \in V^{(2)}$, note that every hyperedge that includes two vertices in $V^{(2)}$ must contain at least $l+1$ vertices in $V^{(1)}$, therefore, the $l$-skeleton of $\vec{G}^H_l$ can not include a direct edge between two vertices in $V^{(2)}$, thus, $u$ can only have as $l$-out-neighbors vertices in $V^{(3)}$. Which implies $d^+(u) = O(1)$.
    	\item If $u \in V^{(3)}$, then $d^+(u) < |V^{(3)}| = O(1)$.
	\end{itemize}
\end{proof}

\begin{lemma} \label{lem:equivalence_simplex_to_pattern}
    $G^H_l$ contains a colorful copy of $H$, if and only if $G$ contains a colorful $k$-simplex.
\end{lemma}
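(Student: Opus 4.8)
The plan is to prove both directions of the equivalence by tracing through the structure of the construction $G^H_l$ in \Def{ghl}.

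For the "if" direction, suppose $G$ contains a colorful $k$-simplex on vertices $w_0, \ldots, w_{k-1}$, where $w_j$ has color $j$. I would first observe that each facet $e_j = \{w_0, \ldots, w_{k-1}\} \setminus \{w_j\}$ is a colorful hyperedge of $G$ (it has exactly the $k-1$ colors $[k] \setminus \{j\}$), so by construction of $G^H_l$ there is a copy $V^{e_j}_j$ of $V_j$ attached, with hyperedges realizing $E_j$ where each core vertex $x_i$ ($i \neq j$) is identified with the vertex $w_i \in e_j$ of color $i$. The key point is that for each $i$, the sets $X \setminus \{x_i\}$ map consistently to $\{w_0, \ldots, w_{k-1}\} \setminus \{w_i\}$ across all the facets that contain those vertices, because the identification is dictated purely by color. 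Then I would define a map $\phi: V(H) \to V(G^H_l)$ by: $\phi(x_i) = w_i$ for core vertices; $\phi$ restricted to $V_j$ is the natural isomorphism onto $V^{e_j}_j$; and $\phi$ is the identity on $V^{ext}$. I would check that $\phi$ preserves every hyperedge of $H$: hyperedges inside some $E_i$ (including those touching the core) are preserved because $V^{e_i}_i$ together with $\{w_j : j \neq i\}$ carries a copy of $E_i$; hyperedges in $E^{ext}$ are preserved since $E^{ext} \subseteq E(G^H_l)$ by construction; and by \Def{seth} there are no other hyperedges (no hyperedge spans two different $V_i, V_j$ or connects $V_i$ to $x_i$). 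Finally $\phi$ is colorful because the colors were assigned as $\ord(v)$ on pattern-copy vertices and $\ord(x_i) = i$ matches $c(w_i) = i$.

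For the "only if" direction, suppose $\psi: V(H) \to V(G^H_l)$ is a colorful homomorphism. Since $\psi$ is colorful and $\ord(x_i) = i$, each core vertex $x_i$ must map to a vertex of color $i$; I would argue that $\psi(x_i)$ must lie in the $V(G)$-portion $V^{(2)}$ of $G^H_l$, since the copies of $V^{ext}$ and of the $V_j$'s only receive colors in $\{\ord(v) : v \in V^{ext} \cup V_j\}$ and — this needs checking against the ordering choice — the core colors $0, \ldots, k-1$ are realized in $G^H_l$ only on the $V(G)$-vertices (one should confirm from \Def{ghl} that $\ord$ is chosen so $\ord(V^{ext})$ avoids $[k]$, or handle the overlap). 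Write $w_i = \psi(x_i)$. Now for each $i$, consider the \connector{l} $H_i = (V_i \cup (X \setminus \{x_i\}), E_i)$; since $H_i$ is connected and contains the hyperedges spanning all of $X \setminus \{x_i\}$, its image under $\psi$ must be contained in a single "gadget copy" $V^e_j$ for some colorful hyperedge $e$ of $G$ — because in $G^H_l$ the only hyperedges joining two $V(G)$-vertices pass through some gadget $V^e_j$, and distinct gadgets share only their $V(G)$-anchor vertices. This forces $\{w_{i'} : i' \neq i\} \subseteq e$, so $e$ (a colorful hyperedge of $G$, i.e. a $(k-1)$-subset of colors) must be exactly $\{w_0, \ldots, w_{k-1}\} \setminus \{w_i\}$, and in particular all $w_i$ are distinct. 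Ranging over all $i \in [k]$ we get $k$ hyperedges of $G$, namely all the $(k-1)$-subsets of $\{w_0, \ldots, w_{k-1}\}$, which is precisely a $k$-simplex in $G$; colorfulness is immediate since $c(w_i) = i$.

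**The main obstacle** I expect is the bookkeeping in the "only if" direction showing that the image of each connector $H_i$ is confined to a single gadget copy, and more generally that $\psi$ cannot "cheat" by routing a connector through several gadgets or through the $V^{ext}$ part. This relies crucially on the defining properties of \Def{seth} — that $H_i$ is connected, that hyperedges with $\geq 2$ core vertices have $> l$ vertices in $V_i$ (so no direct edge between core images appears in the relevant skeleton / no single gadget-free hyperedge connects two $w_i$'s), and that no hyperedge of $H$ straddles two parts. I would isolate this as a sub-claim: for any colorful homomorphism $\psi$ and any $i$, $\psi(V_i \cup (X \setminus \{x_i\}))$ lies in $\{w_j : j\} \cup V^e_j$ for a single colorful edge $e$ of $G$ with $j$ the color missing from $e$. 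Proving that sub-claim carefully — tracking which vertices of $G^H_l$ are adjacent to which, and using connectedness of $H_i$ to propagate — is where the real work lies; the rest is essentially assembling a simplex from the $k$ resulting facets and checking colorfulness, which is routine.
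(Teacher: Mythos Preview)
Your proposal is correct and follows essentially the same approach as the paper's proof; the paper is simply much terser in the ``only if'' direction, compressing your single-gadget sub-claim into the one-line assertion that ``otherwise the set $E_i$ cannot be completely included in the subhypergraph induced by $V'$.'' Your worry about core colors is unfounded: since $\ord$ is a bijection $V(H)\to[h]$ with $\ord(x_i)=i$, no vertex of $V^{ext}$ or of any $V_j$ can receive a color in $\{0,\ldots,k-1\}$, so $\psi(x_i)\in V(G)$ is immediate.
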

\begin{proof}
    We first prove the if part. Let the vertices $v_0,...,v_{k-1}$ of $G$ induce a copy of a colorful $k$-simplex in $G$, we show how to find a colorful copy of $H$ in $G^H_l$.
    Let $e_i = \{v_0,...,v_{k-1} \}\setminus \{v_i\}$. Note that $e_i \in E(G)$, otherwise $v_0,...,v_{k-1}$ do not induce a simplex.
	Consider the sets $V^{e_i}_i$ for all $i$, and $V^{ext}$ in $G^H_l$. We show that $\{v_0,...,v_{k-1}\} \cup \bigcup_i V^{e_i}_i \cup V^{ext}$ induce a colorful copy of $H$. First, note that every vertex has a different color, as every vertex in $H$ corresponds to one vertex in the set. Moreover, every edge in $H$ is either in $E^{ext}$ or in one of the sets $E_i$, in both cases from the construction it will appear containing the corresponding vertices in $v_0,...,v_{k-1} \cup \bigcup_i V^{e_i}_i \cup V^{ext}$.
    
    We now prove the only if part, let the set $V'$ induce a colorful copy of $H$ in $G^H_l$, $V'$ must contain vertices $y_0,...,y_{k-1}$ with colors $0$ to $k-1$ that are also present in $V(G)$. Moreover, for each subset $\{y_0,...,y_{k-1} \} \setminus \{y_i\}$ we will have that $G$ must contain an edge $e_i$ containing them, otherwise the set $E_i$ cannot be completely included in the subhypergraph induced by $V'$. Therefore, it must be the case that $\{y_0,...,y_{k-1} \}$ induce a colorful $k$-simplex.
\end{proof}

\hardness*
\begin{proof}
	From \Lem{colorful_hom} we will have that the $f(\ldegen{l})O(n^\gamma)$ algorithm for computing $\Hom{G}{H}$ implies an $f(\ldegen{l})O(n^\gamma)$ algorithm for computing $\ColHom{G}{H}$.
	
	We construct $G^H_l$ from $G$ as in \Def{ghl}, from \Lem{construct_hgl} we have that $\ldegen{l}(G^H_l) = O(1)$, moreover $G^H_l$ will have $O(m)$ hyperedges. Therefore, we can count the number of colorful homomorphisms of $H$ in $G^H_l$ in time $O(m^\gamma)$. 
	
	From \Lem{equivalence_simplex_to_pattern} we have that $\ColHom{G^H_l}{H} > 0$ if and only if $G$ contains a colorful $k$-simplex, effectively giving an $O(m^\gamma)$ algorithm for colorful $k$-simplex detection, which itself implies an $\tilde{O}(m^\gamma)$ algorithm for $k$-simplex detection using \Lem{colorful_simplex}.
\end{proof}

\section{From homomorphisms to subhypergraphs} \label{sec:sub}

In this section, we show how to obtain a dichotomy for linear time subhypergraph counting from the hypergraph homomorphism counting dichotomy. Given by \Thm{sub}.

In a recent work, Bressan et al. showed that, similar to what happened in graphs, one can write the number of subhypergraph as a linear combination of hypergraph homomorphism \cite{BrBrDe+25}. Moreover, they generalized the result of \cite{CuDeMa17}, showing that we can extend hardness from the homomorphisms into the subhypergraph (or any hypergraph motif parameter). We will show that, similar to the case of graphs \cite{BeGiLe+22}, we can also use the hypergraph homomorphism basis to extend more fine-grained complexity lower bounds from homomorphisms into subhypergraphs.

\Lem{hombasis} together with \Thm{main} is enough to prove the upper bound of \Thm{sub}. The lower bound requires a bit more of care and generalizes some of the results in \cite{BeGiLe+22}. We will prove the following lemma, which is analogous to Lemma 4.2 in \cite{BeGiLe+22}.

\begin{lemma} \label{lem:sub_to_hom}
	Let $H_1,...,H_k$ be pairwise non-isomorphic hypergraphs and let $c_1,...,c_k$ be non-zero constants. For every hypergraph $G$ there are hypergraphs $G_1,...,G_k$ computable in time $O(|V(G)+|E(G)|)$, such that for every $i \in [k]$, $|V(G_i)| = O(V(G))$, $|E(G_i)| = O(E(G))$ and $\ldegen{l}(G_i) = O(\ldegen{l}(G))$ for all $l\geq 0$; and such that knowing $b_j = \sum_{i \in [k]} c_i \Hom{G_j}{H_i}$ for all $j \in [k]$, allows one to compute $\Hom{G}{H_i}$ for all $i \in [k]$ in time $O(1)$.
\end{lemma}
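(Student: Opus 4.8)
The plan is to realize each $G_i$ as the disjoint union of the input $G$ with a small fixed gadget hypergraph, chosen so that the linear system relating the $b_j$'s to the values $\Hom{G}{H_i}$ is invertible.

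The one non-trivial ingredient I would invoke is the standard linear-independence fact: for pairwise non-isomorphic hypergraphs $H_1,\dots,H_k$, the count functions $\Hom{\cdot}{H_1},\dots,\Hom{\cdot}{H_k}$ are linearly independent over $\mathbb{Q}$ as functions on the set of all finite hypergraphs. This is the hypergraph analogue of Lovász's classical lemma; for the loose notion of hypergraph homomorphism used here it is part of the homomorphism-basis machinery of \cite{BrBrDe+25} (cf. \Lem{hombasis}), and in the graph case it is exactly the tool used in \cite{BeGiLe+22}. Since $k$ linearly independent vectors must have an invertible $k\times k$ minor, there exist hypergraphs $T_1,\dots,T_k$, depending only on $H_1,\dots,H_k$ (hence each of size $O(1)$), such that the $k\times k$ matrix $M$ with entries $M_{ji}=\Hom{T_j}{H_i}$ is nonsingular over $\mathbb{Q}$. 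I would fix such $T_j$'s once and for all and precompute the fixed rational matrix $M^{-1}$.

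Next I would set $G_j:=G\sqcup T_j$ (disjoint union). This is computable in $O(|V(G)|+|E(G)|)$ time, has $|V(G_j)|=|V(G)|+O(1)$ and $|E(G_j)|=|E(G)|+O(1)$, and, because $G$ and $T_j$ are vertex-disjoint and no hyperedge crosses between the two parts, every induced $l$-trimmed subhypergraph of $G_j$ restricted to $V(G)$ is an induced $l$-trimmed subhypergraph of $G$; hence $\ldegen{l}(G_j)\le\max(\ldegen{l}(G),\ldegen{l}(T_j))=O(\ldegen{l}(G))$ for every $l\ge 0$, using $\ldegen{l}(T_j)=O(1)$. The only homomorphism fact needed is multiplicativity under disjoint union: a homomorphism from $A\sqcup B$ to $C$ is precisely a pair of homomorphisms $A\to C$ and $B\to C$, so $\Hom{A\sqcup B}{C}=\Hom{A}{C}\cdot\Hom{B}{C}$. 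Applying this with $A=G$, $B=T_j$, $C=H_i$ yields $\Hom{G_j}{H_i}=\Hom{G}{H_i}\cdot\Hom{T_j}{H_i}$.

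Finally I would assemble and invert the system. Writing $z_i:=c_i\Hom{G}{H_i}$, we obtain $b_j=\sum_i c_i\Hom{G_j}{H_i}=\sum_i M_{ji}z_i$, i.e. $\vec b=M\vec z$. Since $M$ is a fixed invertible matrix, $\vec z=M^{-1}\vec b$ is computable with $O(k^2)=O(1)$ arithmetic operations, and then $\Hom{G}{H_i}=z_i/c_i$ using $c_i\ne 0$. The main obstacle is purely the availability of the linear-independence/invertible-matrix lemma for the hypergraph homomorphism notion in use; once that is in hand (and it is, by \cite{BrBrDe+25}), the rest is routine bookkeeping about disjoint unions and the degeneracy parameter.
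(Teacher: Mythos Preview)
Your argument has a direction error that breaks the whole approach. In this paper $\Hom{G}{H}$ denotes the number of homomorphisms \emph{from} $H$ \emph{to} $G$, so $\Hom{G_j}{H_i}$ counts maps $H_i\to G_j$. Your multiplicativity identity ``a homomorphism from $A\sqcup B$ to $C$ is a pair $A\to C$, $B\to C$'' is perfectly true, but it applies when the disjoint union is the \emph{source}. Here $G_j$ is the \emph{target}. For a connected $H_i$ one has $\Hom{G\sqcup T_j}{H_i}=\Hom{G}{H_i}+\Hom{T_j}{H_i}$, not a product; plugging this into $b_j$ gives $b_j=\sum_i c_i\Hom{G}{H_i}+\sum_i c_i\Hom{T_j}{H_i}$, whose first term is the \emph{same} unknown scalar for every $j$, so the system collapses to a single equation and you cannot recover the individual $\Hom{G}{H_i}$. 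For disconnected $H_i$ the situation is even worse (mixed terms across components of $H_i$), so the disjoint-union construction cannot be repaired.

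The construction you actually need is the one for which maps \emph{into} it factor: the categorical (tensor) product. The paper sets $G_j=G\otimes F_j$ and uses $\Hom{G\otimes F_j}{H_i}=\Hom{G}{H_i}\cdot\Hom{F_j}{H_i}$ (\Lem{homtensor}); then the system $b_j=\sum_i M_{ji}\,c_i\Hom{G}{H_i}$ with $M_{ji}=\Hom{F_j}{H_i}$ is exactly as you described, and invertibility of $M$ is supplied by the Erd\H{o}s--Lov\'asz--Spencer type lemma (\Lem{lovasz}). The remaining work is to check that $G\otimes F_j$ keeps linear size and bounded $\ldegen{l}$, which is \Lem{sizes} and is the hypergraph-specific content here; this step is not as immediate as for disjoint unions, so it does require a short argument.
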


To prove this lemma we require of a tensor product for hypergraphs that preserve homomorphisms and does not increase the $l$-degeneracy. Fortunately, we can show that the tensor product introduced in \cite{BrBrDe+25} satisfies all these properties.

Given two hypergraphs $H$ and $G$, one can define the $G$-projection and $H$-projections of their Cartesian products as the functions $\xi_G: V(G) \times V(H) \to V(G)$ and $\xi_G: V(G) \times V(H) \to V(H)$ that map each pair in $V(G) \times V(H)$ to its first or second component respectively. For a set $S \subseteq  V(G)\times V(G)$ we can define:

\begin{equation*}
	\xi_G(S) = \{\xi_G(s) : s \in S\} \ \ \ \ 	\xi_H(S) = \{\xi_H(s) : s \in S\}
\end{equation*}

We can now define the tensor product:
\begin{definition}[Tensor product of hypergraphs \cite{BrBrDe+25}] \label{def:tensor}
	The tensor product of two hypergraphs $G \otimes H$ is the hypergraph with vertex set $V(G) \times V(H)$ and hyperedge set $\{e \subseteq V(H)\times V(H) : \xi_G(e) \in E(G) $ and $ \xi_H(e) \in E(H) \}$.
\end{definition}

Bressan et al. also showed that the homomorphisms are preserved under the tensor product:

\begin{lemma} \cite{BrBrDe+25} \label{lem:homtensor}
	All hypergraphs $F,H,G$ satisfy:
	\[
		\Hom{G\otimes H}{F} = \Hom{G}{F} \cdot \Hom{H}{F}
	\]
\end{lemma}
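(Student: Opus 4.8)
The plan is to exhibit an explicit bijection between $\Phi(F, G\otimes H)$ and $\Phi(F,G)\times\Phi(F,H)$; taking cardinalities then yields the stated identity, since $\Hom{X}{F}=|\Phi(F,X)|$ by definition. This is the hypergraph analogue of the classical fact that the tensor (categorical) product is multiplicative for homomorphism counts.

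First I would record the coordinate decomposition. A vertex of $G\otimes H$ is a pair, so any map $\phi:V(F)\to V(G)\times V(H)$ is determined by its two coordinate maps $\phi_G=\xi_G\circ\phi:V(F)\to V(G)$ and $\phi_H=\xi_H\circ\phi:V(F)\to V(H)$; conversely any pair $(\alpha,\beta)$ of such maps assembles into a single map $v\mapsto(\alpha(v),\beta(v))$. These two operations are mutually inverse, so $\phi\mapsto(\phi_G,\phi_H)$ is already a bijection between \emph{all} maps $V(F)\to V(G)\times V(H)$ and \emph{all} pairs of maps $V(F)\to V(G)$, $V(F)\to V(H)$.

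The key step is to show that this bijection restricts correctly: $\phi$ is a homomorphism $F\to G\otimes H$ if and only if $\phi_G$ is a homomorphism $F\to G$ and $\phi_H$ is a homomorphism $F\to H$. Fix a hyperedge $e\in E(F)$. Since $\xi_G$ is the projection to the first coordinate, it commutes with taking images: $\xi_G(\phi(e))=\{\xi_G(\phi(v)):v\in e\}=\{\phi_G(v):v\in e\}=\phi_G(e)$, and likewise $\xi_H(\phi(e))=\phi_H(e)$. By \Def{tensor}, the set $\phi(e)\subseteq V(G)\times V(H)$ belongs to $E(G\otimes H)$ if and only if $\xi_G(\phi(e))\in E(G)$ and $\xi_H(\phi(e))\in E(H)$, that is, iff $\phi_G(e)\in E(G)$ and $\phi_H(e)\in E(H)$. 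Quantifying over all hyperedges of $F$ gives the equivalence, so the bijection of the previous paragraph restricts to a bijection $\Phi(F,G\otimes H)\to\Phi(F,G)\times\Phi(F,H)$; comparing cardinalities gives $\Hom{G\otimes H}{F}=\Hom{G}{F}\cdot\Hom{H}{F}$.

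The only point requiring care—more bookkeeping than a real obstacle—is that a hypergraph homomorphism need not be injective on a hyperedge, so $\phi(e)$, $\phi_G(e)$, $\phi_H(e)$, and the hyperedges of $G\otimes H$ are all \emph{sets} whose cardinality may be strictly smaller than $|e|$. I would therefore keep the identity $\xi_G(\{\phi(v):v\in e\})=\{\phi_G(v):v\in e\}$ stated at the level of sets, so that no matching of arities is ever needed and the argument runs uniformly, including the degenerate cases where $F$ has no hyperedges (both sides equal $(|V(G)||V(H)|)^{|V(F)|}$) or no vertices (both sides equal $1$).
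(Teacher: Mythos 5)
Your argument is correct. Note that the paper does not prove this lemma at all --- it is imported verbatim from \cite{BrBrDe+25} --- so there is no in-paper proof to compare against; but what you give is exactly the standard categorical-product argument that the cited reference uses: the coordinate decomposition $\phi\mapsto(\xi_G\circ\phi,\ \xi_H\circ\phi)$ is a bijection on all maps $V(F)\to V(G)\times V(H)$, and \Def{tensor} is engineered precisely so that $\phi(e)\in E(G\otimes H)$ iff both projections of $\phi(e)$ are hyperedges, which makes the restriction to homomorphisms a bijection $\Phi(F,G\otimes H)\to\Phi(F,G)\times\Phi(F,H)$. Your closing remark about working with image \emph{sets} rather than tuples is the right caution for non-injective maps, and your reading of \Def{tensor} silently corrects its typo ($e\subseteq V(H)\times V(H)$ should be $e\subseteq V(G)\times V(H)$), which is the intended definition.
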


We can show that if $H$ is constant sized, the product $G \otimes H$ will maintain the $l$-degeneracy of $G$.

\begin{lemma} \label{lem:sizes}
	Let $G$ and $H$ be two hypergraphs with $|V(H)| = O(1)$ and bounded rank. It holds:
	\begin{itemize}
		\item $|V(G \otimes H)| = O(V(G))$.
		\item $|E(G \otimes H)| = O(E(G))$.
		\item For all $l\geq 0$, $\ldegen{l}(G \otimes H) = O( \ldegen{l}(G))$.
	\end{itemize}
\end{lemma}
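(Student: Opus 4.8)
The first two bullets are immediate from the shape of the tensor product in \Def{tensor}. Since $V(G\otimes H)=V(G)\times V(H)$ and $|V(H)|=O(1)$, we get $|V(G\otimes H)|=|V(G)|\cdot|V(H)|=O(|V(G)|)$. For the hyperedges: every $e''\in E(G\otimes H)$ has $\xi_G(e'')\in E(G)$, and every $e''$ with a fixed projection $\xi_G(e'')=e$ is a subset of $e\times V(H)$, so there are at most $2^{|e|\cdot|V(H)|}\le 2^{r(G)\cdot|V(H)|}=O(1)$ of them; hence $|E(G\otimes H)|\le 2^{r(G)\cdot|V(H)|}\,|E(G)|=O(|E(G)|)$. (The inclusion $e''\subseteq\xi_G(e'')\times\xi_H(e'')$ also shows $G\otimes H$ has rank at most $r(G)\cdot r(H)=O(1)$.)

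For the degeneracy bound the plan is to argue directly from \Def{degen}: I will show that for every $l$ and every $S'\subseteq V(G\otimes H)$, the induced $l$-trimmed subhypergraph $(G\otimes H)\langle S'\rangle_l$ has a vertex of degree $O(\ldegen{l}(G))$. (Going through a bounded $l$-out-degree orientation of $G\otimes H$, as in \Lem{degeneracy_ordering}, would only bound the degeneracy exponentially in $\ldegen{l}(G)$, since a bounded out-degree orientation of a hypergraph does not control its degeneracy linearly.) Let $S=\xi_G(S')\subseteq V(G)$ be the set of first coordinates occurring in $S'$. By definition of $\ldegen{l}(G)$, the induced $l$-trimmed subhypergraph $G\langle S\rangle_l$ contains a vertex $v$ of degree at most $\ldegen{l}(G)$; since $v\in S$, fix some $u$ with $(v,u)\in S'$. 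I claim $(v,u)$ has degree $O(\ldegen{l}(G))$ in $(G\otimes H)\langle S'\rangle_l$.

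To see this, take any $e''\in E(G\otimes H)$ with $(v,u)\in e''$ and $|e''\setminus S'|\le l$, and put $e=\xi_G(e'')\in E(G)$. Two projection facts suffice. First, $|e\setminus S|\le|e''\setminus S'|\le l$: each $v'\in e\setminus S$ is the first coordinate of some vertex of $e''$, that vertex lies outside $S'$ (otherwise $v'\in\xi_G(S')=S$), and distinct $v'$ give distinct such vertices. Consequently $e\cap S\ni v$ is a hyperedge of $G\langle S\rangle_l$, so $e\cap S$ takes at most $\ldegen{l}(G)$ values (plus possibly the arity-$1$ set $\{v\}$). Second, $e''\cap S'\subseteq(e\cap S)\times V(H)$, since any $(v',u')\in e''\cap S'$ has $v'\in\xi_G(e'')\cap\xi_G(S')=e\cap S$. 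Combining the two: for each of the at most $\ldegen{l}(G)+1$ possible sets $e\cap S$, the trimmed hyperedge $e''\cap S'$ is a subset of a fixed set of size at most $r(G)\cdot|V(H)|=O(1)$, so there are at most $(\ldegen{l}(G)+1)\cdot 2^{r(G)\cdot|V(H)|}=O(\ldegen{l}(G))$ distinct trimmed hyperedges through $(v,u)$. As $S'$ was arbitrary, every induced $l$-trimmed subhypergraph of $G\otimes H$ has minimum degree $O(\ldegen{l}(G))$, i.e. $\ldegen{l}(G\otimes H)=O(\ldegen{l}(G))$.

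The only step that needs real care is the distinction between distinct \emph{trimmed} hyperedges $e''\cap S'$ and distinct hyperedges $e''$ of $G\otimes H$: there can be far more than $\ldegen{l}(G)$ hyperedges $e''$ through $(v,u)$ (many hyperedges of $G$ can share the same intersection with $S$ but differ outside $S$), and it is the containment $e''\cap S'\subseteq(e\cap S)\times V(H)$ that collapses this back to $O(1)$ per value of $e\cap S$. Handling the arity-$1$ case $e\cap S=\{v\}$ is absorbed into an additive $O(1)$ term, and everything else (ranks, cardinalities, the bound on $|e\setminus S|$) is routine.
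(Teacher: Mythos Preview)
Your argument is correct. For the first two bullets you and the paper do essentially the same thing. For the degeneracy bound, however, you take a genuinely different route.

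The paper lifts the $l$-degeneracy ordering $\ord$ of $G$ to an ordering $\ord'$ of $V(G\otimes H)$ (compare vertices by their $G$-projection), shows that the resulting DAH has maximum $l$-outdegree $O(\ldegen{l}(G))$, and then asserts that this implies $\ldegen{l}(G\otimes H)=O(\ldegen{l}(G))$. This last implication is exactly the step you flag as problematic: from $\Delta^+_l(\vec{G})\le d$ one only gets, for the first vertex $v$ of any $S$, that every trimmed hyperedge through $v$ in $G\langle S\rangle_l$ is a subset of $\{v\}\cup N^+_l(v)$; with rank $r$ this bounds the number of such hyperedges by $\sum_{i\le r-1}\binom{d}{i}=O(d^{\,r-1})$, not by $O(d)$. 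So the paper's route, as written, yields a polynomial rather than a linear dependence on $\ldegen{l}(G)$.

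Your direct argument sidesteps this entirely. By projecting $S'$ to $S=\xi_G(S')$ and invoking $\ldegen{l}(G)$ on $S$, you obtain a vertex $v$ with small \emph{degree} (not merely small out-neighbourhood) in $G\langle S\rangle_l$; the two projection facts $|e\setminus S|\le |e''\setminus S'|$ and $e''\cap S'\subseteq (e\cap S)\times V(H)$ then bound the number of distinct trimmed hyperedges through $(v,u)$ by $(\ldegen{l}(G)+1)\cdot 2^{\,r(G)\cdot|V(H)|}=O(\ldegen{l}(G))$. This proves the lemma as stated and is cleaner than the orientation approach. Your closing remark about distinguishing distinct $e''$ from distinct trimmed hyperedges $e''\cap S'$ is exactly the point that makes the linear bound go through.
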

\begin{proof}
	We proof each part separately:
	\begin{itemize}	
	\item The vertex set of $G \otimes H$ is $V(G) \times V(H)$ which will have size $|V(G)| \cdot |V(H)| = O(|V(G)|)$.
	
	\item For every hyperedge $e \in G$ we can have multiple hyperedges $e' \in G \otimes H$ with $\xi_G(e') = e$, however, we can show that the set of such hyperedges is bounded. Let $e'$ be one such hyperedge, it must contain, for each vertex $v \in e$ a subset of all the vertices $v \times V(H)$ in $ G \otimes H$. The number of possible subsets is $2^{|V(H)|} = O(1)$ and we have that $|e| = O(1)$ as $G$ has bounded rank. Therefore, at most a constant number of hyperedges can have $\xi_G(e') = e$. Which implies $|E(G \otimes H)| = O(E(G))$.
	
	\item Finally, fix some $l$, let $\ord: V(G) \to [V(G)]$ be an ordering of the vertices of $G$ that agrees with the $l$-degeneracy orientation of $G$, $\vec{G}^{(l)}$. Construct an ordering $\ord': V(G \otimes H) \to [G \otimes H]$ such that for every $u,v \in G \otimes H$, $\ord'(u) < \ord'(v)$ if $\ord(\xi_G(u)) < \ord(\xi_G(u))$. let $\vec{G \otimes H}$ be the result of orienting $G \otimes H$ according to $\ord'$. We show that $\Delta^+_l(\vec{G \otimes H}) = O(\ldegen{l}(G))$, which implies $\ldegen{l}(G \otimes H) = O(\ldegen{l}(G))$.
	
	Let $u \in V(G \otimes H)$ be a vertex with $\xi_G(u) = v$, let $d$ be the $l$-outdegree of $v$ in $\vec{G}$, from \Lem{degeneracy_ordering} we have $d = O(\ldegen{l}(G))$, let $N^+_l(v)$ be the $l$-out-neighborhood of $v$. We can show that every $l$-out-neighbor $u'$ of $u$ must have $\xi_G(u') = v$ or $\xi_G(u') \in N^+_l(v)$. 
	
	Consider otherwise, there exists a vertex $u'$ with $\xi_G(u') = v' \not\in N^+_l(v)\cap \{v\}$ that is an $l$-out-neighbor of $u$. There must be a hyperedge $e \in \vec{G \otimes H}$ such that $\{u,u'\} \subseteq e$, such that $u$ is one of the first $l+1$ vertices in the ordering. Now consider the hyperedge $e' \in E(G)$ such that $\xi_G(e) = e'$, it must include both $v$ and $v'$, moreover the position of $v$ in $e'$ can not be later than the position of $u$ in $e$, therefore we will have that $v'$ must be an $l$-out-neighbor of $v$ in $G$, reaching to a contradiction.
	
	Therefore, the $l$-outdegree of $u$ will be $d^+_l(u) \leq V(H) \cdot (|N^+_l(v)|+1) \leq V(H) \cdot (d+1) = O(\ldegen{l}(G))$.
	\end{itemize}
\end{proof}

The last piece we need in order to prove \Lem{sub_to_hom} is to generalize a lemma from Erd{\H{o}}s, Lov{\'a}sz and Spencer to the hypergraph setting. This lemma was originally shown in \cite{ErLoSp78}, also appears as Proposition $5.44(b)$ in \cite{Lo12}. We restate (and extend to hypergraphs) as in Lemma $A.2$ from \cite{BeGiLe+22}.

\begin{lemma} \label{lem:lovasz}
	Let $H_1,...H_k$ be pairwise non-isomorphic hypergraphs, and let $c_1,...,c_k$ be non-zero constants. Then there exists hypergraphs $F_1,...F_k$ such that the $k \times k$ matrix $M_{i,j} = c_j \Hom{F_i}{H_j}, 1\leq i,j \leq k$, is invertible.
\end{lemma}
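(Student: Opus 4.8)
The plan is to adapt the classical Erd\H{o}s--Lov\'asz--Spencer argument, with injective graph homomorphisms replaced by hypergraph \emph{embeddings}. For hypergraphs $F,H$ write $\mathrm{emb}(F,H)$ for the number of injective homomorphisms $F\to H$. Since an injective map sends a hyperedge of arity $a$ to a hyperedge of arity $a$, an embedding of $F$ into $H$ is an isomorphism of $F$ onto a subhypergraph of $H$, so $\mathrm{emb}(F,H)=|\mathrm{Aut}(F)|\cdot\Sub{H}{F}$ and $\mathrm{emb}(F,H)>0$ iff $F$ is isomorphic to a subhypergraph of $H$. Declare $F\preceq H$ when $F$ embeds into $H$; on isomorphism classes this is a partial order, the only nontrivial point being antisymmetry: $F\preceq H\preceq F$ forces $|V(F)|=|V(H)|$ and $|E(F)|=|E(H)|$, so an embedding $F\to H$ is a bijection on both vertices and hyperedges, i.e.\ an isomorphism.

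First I would record two facts. (a) \emph{Triangularity.} For a finite set $\cS=\{S_1,\dots,S_m\}$ of pairwise non-isomorphic hypergraphs, ordered along a linear extension of $\preceq$ so that $S_s\preceq S_t\Rightarrow s\le t$, the matrix $B=[\mathrm{emb}(S_s,S_t)]_{s,t}$ is upper triangular ($s>t$ gives $S_s\not\preceq S_t$, hence $\mathrm{emb}(S_s,S_t)=0$) with diagonal $\mathrm{emb}(S_s,S_s)=|\mathrm{Aut}(S_s)|\ge 1$ (an injective self-homomorphism of a finite hypergraph is a vertex-bijection, hence a hyperedge-bijection, hence an automorphism); thus $B$ is invertible. (b) \emph{Fibre decomposition.} For all hypergraphs $F,G$,
\[
 \Hom{G}{F}\;=\;\sum_{\tau}\mathrm{emb}(F/\tau,\,G),
\]
the sum over all partitions $\tau$ of $V(F)$, with $F/\tau$ the quotient of \Def{quotient}: every homomorphism $\phi\colon F\to G$ factors uniquely as the quotient map $F\to F/\tau_\phi$ (with $\tau_\phi$ the partition of $V(F)$ into fibres of $\phi$) followed by an embedding $F/\tau_\phi\hookrightarrow G$, and conversely each pair $(\tau,\psi)$ recombines to a distinct homomorphism.

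Next I would show the functions $G\mapsto\Hom{G}{H_j}$, $1\le j\le k$, are linearly independent. Suppose $\sum_{j=1}^{k}\lambda_j\Hom{G}{H_j}=0$ for every $G$, with the $\lambda_j$ not all zero. Let $\cS$ be the finite set of isomorphism classes of all quotients $H_j/\tau$ ($1\le j\le k$, $\tau$ a partition of $V(H_j)$); note $H_j\in\cS$ via the trivial partition, and every member of $\cS$ has at most $\max_j|V(H_j)|=O(1)$ vertices. Substituting (b) gives $\sum_{S\in\cS}e_{S}\,\mathrm{emb}(S,G)=0$ for all $G$, where $e_{S}=\sum_{j,\tau:\,H_j/\tau\cong S}\lambda_j$. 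Evaluating at the members of $\cS$ and using invertibility of $B$ from (a), all $e_S=0$. But choose $j_0$ with $\lambda_{j_0}\neq 0$ and $|V(H_{j_0})|$ maximal among $\{j:\lambda_j\ne0\}$: if $\lambda_j\neq0$ and $H_j/\tau\cong H_{j_0}$ then $|V(H_j/\tau)|=|V(H_{j_0})|\ge|V(H_j)|\ge|V(H_j/\tau)|$, forcing $\tau$ trivial and $H_j\cong H_{j_0}$, hence $j=j_0$ by pairwise non-isomorphism; so $e_{H_{j_0}}=\lambda_{j_0}\neq 0$, a contradiction. Thus the $\Hom{\cdot}{H_j}$ are linearly independent; by (b) and (a) this is equivalent to the $k\times m$ matrix $\big[\Hom{S_t}{H_j}\big]_{j\in[k],\,t\in[m]}$ having rank $k$. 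Pick $k$ columns forming an invertible $k\times k$ submatrix and let $F_1,\dots,F_k\in\cS$ be the corresponding (constant-size) hypergraphs. Then $M_{i,j}=c_j\Hom{F_i}{H_j}=\big[\Hom{F_i}{H_j}\big]_{i,j}\cdot\mathrm{diag}(c_1,\dots,c_k)$ is a product of two invertible matrices, so $M$ is invertible.

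The main obstacle is fact (a) together with the exactness of (b): one must verify that for hypergraphs of mixed arity, "embedding" is the correct injective notion (arity-preserving), so that $\preceq$ is genuinely antisymmetric and $\mathrm{emb}(S,S)=|\mathrm{Aut}(S)|$, and that the quotient $F/\tau$ of \Def{quotient}---whose hyperedges may collapse in arity and must then be de-duplicated---makes $\phi\mapsto(\tau_\phi,\bar\phi)$ a bijection onto $\{(\tau,\psi):\psi\text{ an embedding of }F/\tau\text{ into }G\}$. Everything downstream---the maximal-$j_0$ computation, passing to $\cS$, and extracting $F_1,\dots,F_k$---is routine linear algebra (and (b) is just \Lem{hombasis} read in the embedding-to-homomorphism direction).
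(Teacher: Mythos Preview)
Your proof is correct and takes a genuinely different route from the paper. The paper follows the original Erd\H{o}s--Lov\'asz--Spencer argument: it sets $F'_j$ to be $H_j$ with vertices weighted by algebraically independent reals, observes that the multilinear part of the determinant of the weighted homomorphism matrix is governed by injective homomorphisms (which, after reordering, form an upper-triangular matrix with nonzero diagonal), then specializes the weights to positive integers and realizes each weighted $F'_j$ as an unweighted blow-up $F_j$. Your argument instead works directly in the embedding basis: you use the fibre decomposition $\Hom{G}{H_j}=\sum_\tau \mathrm{emb}(H_j/\tau,G)$ to place all the functions $\Hom{\cdot}{H_j}$ inside the finite-dimensional span of $\{\mathrm{emb}(S,\cdot):S\in\cS\}$, and then the triangularity of the embedding matrix on $\cS$ lets you read off linear independence and extract the $F_i$ as quotients of the $H_j$. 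Both proofs ultimately rest on the same fact---that the embedding partial order makes the $\mathrm{emb}$-matrix invertible---but your route avoids the algebraic-independence / polynomial-identity step and the blow-up construction entirely, and it produces $F_i$ that are quotients of the $H_j$ (hence no larger than the $H_j$), whereas the paper's blow-up $F_i$ may be larger (though still $O(1)$). The paper's approach has the virtue of being a direct transcription of the classical graph proof; yours is more elementary and gives slightly more explicit control over the $F_i$.
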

\begin{proof}
	The proof is similar to the one in \cite{Lo12}. We construct $F'_i$ by weighting the vertices of each $H_i$ with algebraically independent weights. $\Hom{F'_i}{H_i}$ will represent the sum of weighted hypergraph homomorphisms, that is:
	\[
	\Hom{F'_i}{H_i} = \sum_{\phi \in \Phi(H_i,F'_i)} \prod_{u \in H_i} w(\phi(u))
	\]
	Where $w$ represent the weight of the corresponding vertex in $F'_i$.
	
	We can show that the matrix $[\Hom{F'_j}{H_i}]_{i,j=1}^k$ has non-zero determinant. If every weight is considered a separate variable, the determinant of the matrix can be written as a polynomial, the multi-linear part of the polynomial must correspond to injective homomorphisms. We can show that the determinant of the injective part of the matrix is non-zero, which implies that the determinant will also be non-zero.
	
	Note that two hypergraphs can map to each other injectively if and only if they are isomorphic, moreover if $H$ has an injective homomorphism to $H'$ and $H'$ does to $H''$, then $H$ will have to $H''$. Therefore, one can reorder the hypergraphs $F'_i$ such that the matrix of injective homomorphisms is upper triangular with non-zero entries in the diagonal. Therefore its determinant will be non-zero.
	
	 The polynomial can not become $0$ for all integers and therefore we can replace the algebraic independent weights with positive integer to get an invertible matrix $[\Hom{F'_j}{H_i}]_{i,j=1}^k$.
	
	We now construct hypergraphs $F_i$ by replacing every vertex $v$ in $F'_i$ by $w(v)$ copies of it, $v_1,...,v_{w(v)}$, and for each hyperedge $e \in F'_i$, replacing it with a total of $\prod_{v\in e} w(v)$ hyperedges, each being a set in $\bigtimes_{v \in e} \{v_1,...,v_{w(v)} \}$. We can show that $\Hom{F_i}{H_i} = \Hom{F'_i}{H_i}$ and therefore the matrix $[\Hom{F_j}{H_i}]_{i,j=1}^k$ is invertible.
	
	Consider a homomorphism $\phi'$ from $H_i$ to $F'_j$, we can show that it corresponds $\prod_{u \in H_i} w(\phi(u))$ on $F_j$. We can construct $\phi$ by replacing $\phi'(u)$ for any of the $w(\phi'(u))$ vertices created from it. If $\phi'(e) \in E(F'_j)$ then we will have that $\phi(e) \in E(F_j)$ and therefore it will be a valid homomorphism. Similarly, we can construct a homomorphism $\phi'$ from $H_i$ to $F'_j$ from a homomorphism $\phi'$ from $H_i$ to $F_j$ by setting $\phi'(u)$ to be the original vertex from $\phi(u)$. Therefore we will have:
	\[
		\Hom{F'_i}{H_i} = \sum_{\phi \in \Phi(H_i,F'_i)} \prod_{u \in H_i} w(\phi(u)) = \Hom{F_i}{H_i}
	\]
\end{proof}

We can finally prove \Lem{sub_to_hom}.
\begin{proof}[Proof of Lemma \ref{lem:sub_to_hom}]
	The proof is similar to the prof of Lemma $4.2$ in \cite{BeGiLe+22}.
	
	Using \Lem{lovasz}, we have that there exists hypergraphs $F_1,...,F_k$ such that the matrix $M_{i,j} = c_j \Hom{F_i}{H_j}, 1\leq i,j \leq k$ is invertible. For an input hypergraph $G$, for every $1\leq j \leq k$ we can set $G_j = G \otimes F_j$ and $b_j =  \sum_{i =1}^k c_i \Hom{G_j}{H_i}$.
	
	We will have:
	\begin{equation} \label{eq:bj}
		b_j =  \sum_{i =1}^k c_i \Hom{G_j}{H_i} = \sum_{i =1}^k  c_i \Hom{F_i}{H_i} \Hom{G}{H_i} = \sum_{i =1}^k M_{j,i}\Hom{G}{H_i}
	\end{equation}

	Where the second equality comes from \Lem{homtensor} and third from the definition of $M_{i,j}$.
	
	We can build a system of $k$ linear equations using the expression of \Eqn{bj} with $\Hom{G}{H_i}$ as the $k$ variables, $M$ as the matrix of the system and $b_j$ as the constant terms. Because $M$ is invertible, if we know all the values $b_j$, then we can compute $\Hom{G}{H_i}$ in constant time.
	
	Note also that every hypergraph $F_i$ has $|V(F_i)| =O(1)$, and therefore using \Lem{sizes} we will get that all the hypergraphs $G_j$, will have $V(G_j) = O(V(G))$, $E(G_j) = O(E(G))$ and for all $l\geq 0$, $\ldegen{l}(G_j) = O(\ldegen{l}(G))$.
	
	Moreover, we can compute each $F_i$ in time $|V(G)||V(F)|+|E(G)|2^{|V(H)|}r(G) = O(|V(G)|+|E(G)|)$.
	\end{proof}
	
	We can now prove the following theorem which implies the lower bound of \Thm{sub}.
	\begin{theorem}
		Let $H$ be a pattern and $\epsilon > 1$, if there is an $f(\ldegen{l}(G))O(n^\epsilon)$ algorithm for computing $\Sub{G}{H}$ for all inputs $G$, then, for any pattern $H' \in \cQ(H)$ in the quotient set of $H$, we can compute $\Hom{G}{H'}$ in $f(\ldegen{l}(G))O(n^{\epsilon})$ time.
	\end{theorem}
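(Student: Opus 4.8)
The plan is to combine the homomorphism basis for subhypergraphs (\Lem{hombasis}) with the tensor--product interpolation machinery packaged in \Lem{sub_to_hom}. Enumerate the quotient set $\cQ(H) = \{H_1, \dots, H_k\}$; these hypergraphs are pairwise non-isomorphic and $k$ is bounded by a function of $|V(H)|$, hence $k = O(1)$. Fix the target $H' \in \cQ(H)$; without loss of generality it is one of the $H_i$, and in fact we will recover $\Hom{G}{H_i}$ for every $i$ simultaneously. By \Lem{hombasis} there are nonzero rationals $\gamma(H_1), \dots, \gamma(H_k)$, depending only on $H$, such that $\Sub{\hat G}{H} = \sum_{i=1}^k \gamma(H_i)\, \Hom{\hat G}{H_i}$ for every hypergraph $\hat G$.

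Now apply \Lem{sub_to_hom} with the constants $c_i := \gamma(H_i)$. Given the input $G$ on $n$ vertices, it produces hypergraphs $G_1, \dots, G_k$, each computable in time $O(|V(G)| + |E(G)|)$, with $|V(G_j)| = O(n)$, $|E(G_j)| = O(|E(G)|)$ and $\ldegen{l}(G_j) = O(\ldegen{l}(G))$ for every $l$, with the property that once the quantities $b_j := \sum_{i=1}^k c_i\, \Hom{G_j}{H_i}$ are known for all $j$, all the values $\Hom{G}{H_i}$ (in particular $\Hom{G}{H'}$) can be read off in $O(1)$ time by solving a fixed invertible linear system. The key observation is that, by the choice $c_i = \gamma(H_i)$ and by \Lem{hombasis} applied to $\hat G = G_j$, we have exactly $b_j = \Sub{G_j}{H}$. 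Thus each $b_j$ is obtained by a single call to the hypothesized $\Sub$ algorithm on $G_j$, costing $f(\ldegen{l}(G_j))\, O(|V(G_j)|^\epsilon) = f(O(\ldegen{l}(G)))\, O(n^\epsilon)$.

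Summing over the $k = O(1)$ values of $j$, and absorbing the constant blow--up of the degeneracy into the parameter function, the total running time is $f'(\ldegen{l}(G))\, O(n^\epsilon)$ for a function $f'$ depending only on $f$ and $H$, plus the $O(|V(G)| + |E(G)|)$ cost of building the $G_j$ and the $O(1)$ cost of the final linear solve; in the bounded-$l$-degeneracy regime relevant for the hardness statement one has $|E(G)| = O(\ldegen{l}(G)\cdot n)$, so since $\epsilon > 1$ these extra terms are dominated, giving the claimed $f(\ldegen{l}(G))\, O(n^\epsilon)$ bound for computing $\Hom{G}{H'}$. I expect the only point needing care to be the identification $b_j = \Sub{G_j}{H}$ --- that the coefficient vector of the basis expansion of $\Sub{\cdot}{H}$ is precisely the constant vector fed into \Lem{sub_to_hom} --- together with tracking how the $O(\cdot)$ blow--ups in $\ldegen{l}$ and in the vertex/edge counts propagate through the composition; everything else is direct bookkeeping.
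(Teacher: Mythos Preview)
Your proposal is correct and follows essentially the same approach as the paper: set $c_i = \gamma(H_i)$ from \Lem{hombasis}, invoke \Lem{sub_to_hom} to produce the auxiliary inputs $G_1,\dots,G_k$, observe that $b_j = \Sub{G_j}{H}$ so each $b_j$ is computed by one black-box call, and solve the resulting linear system. Your write-up is in fact slightly more careful than the paper's in tracking the running-time bookkeeping (the $|E(G)| = O(\ldegen{l}(G)\cdot n)$ bound and the $k=O(1)$ observation).
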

	\begin{proof}
		From \Lem{hombasis} we can express $\Sub{G}{H}$ as a linear combination of homomorphism counts for patterns in the quotation set of $H$.
		\[
			\Sub{G}{H} = \sum_{H' \in \cQ(H)} \gamma(H') \Hom{G}{H'}
		\]
		Let $k = |\cQ(H)|$ and set $\{H_1,...,H_k\} = \cQ(H)$ and $c_i = \gamma(H_i)$. From \Lem{hombasis} we have that all $c_i \neq 0$.
		
		Given $G$, we create the hypergraphs $G_1,...,G_k$ as in \Lem{sub_to_hom}. Because the size and $l$-degeneracy of these hypergraphs is the same than $G$ we can use the $f(\ldegen{l}(G))O(n^\epsilon)$ to compute $\Sub{G_j}{H}$.
		
		Then note that for each $j$:
		\[
			b_j = \sum_{i=1}^k c_i \Hom{G_j}{H_i} = \sum_{H' \in \cQ(H)} \gamma(H') \Hom{G_j}{H'} = \Sub{G_j}{H}
		\]
		Therefore we can compute all the values in $b_j$ in $f(\ldegen{l}(G))O(n^\epsilon)$ time, which in turn allows us to compute $\Hom{G}{H'}$ for all $H' \in \cQ(H)$ in additional constant time using \Lem{sub_to_hom}.
	\end{proof}

\bibliographystyle{alpha}
\bibliography{subgraph_counting_doi,hypergraphs}

\end{document}